\long\def\ignore#1{\relax}
\newcommand\struto[1][15pt]{{\raise #1 \hbox{\strut}}}%
\newcommand\strutb[1][15pt]{{\raise-#1 \hbox{\strut}}}%
\newcommand\upline{\hline\struto[12pt]}
\newcommand\midline[1][4]{\\[+ #1pt]\hline\struto[#1pt]}
\newcommand\downline[1][12]{\\[+ #1pt]\hline}
\newcommand\olditem{}
\newcommand\olditemize{}
\newcommand\oldenditemize{}
\newcommand\oldenumerate{}
\newcommand\oldendenumerate{}
\let\olditem\item
\let\olditemize\itemize
\let\oldenditemize\enditemize
\let\oldenumerate\enumerate
\let\oldendenumerate\endenumerate
\newcommand\myitem{}
\def\myitem{\@ifnextchar[\@myitemwith\@myitemwithout}
\long\def\@myitemwith[#1]{\olditem[{#1}]\unskip}
\long\def\@myitemwithout{\olditem\unskip}
\renewenvironment{itemize}[1][0]{%
  \def\item{\removelastskip\myitem}%
  \removelastskip\olditemize\removelastskip}
{\removelastskip\oldenditemize\removelastskip%
  \def\item\olditem%
}
\renewenvironment{enumerate}[1][0]{%
  \def\item{\removelastskip\myitem}%
  \removelastskip\oldenumerate\removelastskip}
{\removelastskip\oldendenumerate\removelastskip%
  \def\item\olditem%
}
\newcommand\mybox[1]{\fbox{\vbox{#1}}}
\renewcommand\[[1][3]{\par\removelastskip\vskip#1pt\vbox\bgroup\hrule height0pt\vfil\hbox to\hsize\bgroup\hfil\(}
\renewcommand\][1][3]{\)\hfil\egroup\vfil\hrule height0pt\egroup\vskip#1pt\nointerlineskip\noindent}
\newbox\columnsbox
\newbox\tmpbox
\newdimen\columnsheight
\newdimen\columnwidth
\newdimen\remainingwidth
\newdimen\textwidthsave
\def\mycolumnsheight{}
\newcommand\columns[1]{%
  \def\mycolumnsheight{}%
  \setlength\remainingwidth\textwidth%
  \setbox\columnsbox=\vbox\bgroup\vskip0pt\vfil\hbox to\textwidth\bgroup#1\egroup\vfil\egroup%
  \columnsheight=\ht\columnsbox%
  \def\mycolumnsheight{to\columnsheight}%
  \hrule height 0pt\vtop{\hbox to\wd\columnsbox\bgroup#1\egroup}%
}
\def\commonpart{%
  \setlength\columnwidth{\wd\tmpbox}%
  \vtop{\vskip0pt\hbox to\columnwidth{{\box\tmpbox}}}%
  \advance\remainingwidth-\columnwidth%
  \setlength\textwidth\textwidthsave%
  \hsize\textwidthsave%
}
\def\column{\unskip\setlength\textwidthsave\textwidth\@ifnextchar[\@columnwith\@columnwithout}
\long\def\@columnwith[#1]#2{%
  \def\newhsize{#1\dimexpr\textwidth\relax}%
  \hsize\newhsize%
  \ifdim\hsize<0.1pt\hsize\remainingwidth\fi%
  \setlength\textwidth\hsize%
  \setbox\tmpbox=\hbox to\hsize\bgroup\hfil\vtop\mycolumnsheight{\vskip0pt#2\vskip0pt}\hfil\egroup%
  \commonpart%
}
\long\def\@columnwithout#1{%
  \hsize\remainingwidth%
  \setlength\textwidth\hsize%
  \setbox\tmpbox=\hbox\bgroup\vtop\mycolumnsheight{\vskip0pt#1\vskip0pt}\egroup%
  \commonpart%
}
\newenvironment{centre}{\par\vbox\bgroup\null\hfill}{\hfill\null\egroup\par\vskip2pt}
\newcommand{\eqdef}{:=\ }
\newcommand{\recdef}{::=\ }
\newcommand{\sqin}{\textsf{\footnotesize{E}}}
\newcommand{\Gam}{\Gamma}
\newcommand{\Del}{\Delta}
\newcommand\LKF{\textsf{LKF}}
\newcommand\mathFomega{F_\omega}
\newcommand\Fomega{\ifmmode\mathFomega\else$\mathFomega$\fi}
\newcommand\mathFomegaC{F_\omega^{\mathcal C}}
\newcommand\FomegaC{\ifmmode\mathFomegaC\else$\mathFomegaC$\fi}
\newcommand\mathDNE{\mathrm{DNE}}
\newcommand\DNE{\ifmmode\mathDNE\else$\mathDNE$\fi}
\renewcommand{\iff}{if and only if}
\newcommand{\ie}{i.e.~}
\newcommand{\eg}{e.g.~}
\newcommand{\wrt}{w.r.t.~}
\newcommand{\resp}{resp.~}
\def\url[#1]#2{\texttt{#2}}
\let\oldurl\url
\def\myurl{\@ifnextchar[\@myurlwith\@myurlwithout}
\long\def\@myurlwith[#1]#2{\mbox{\href{#2}{#1}}}
\long\def\@myurlwithout#1{\mbox{\href{#1}{#1}}}
\def\myurl{\@ifnextchar[\@myurlwith\@myurlwithout}
\long\def\@myurlwith[#1]#2{\mbox{\oldurl[#1]{#2}}}
\long\def\@myurlwithout#1{\mbox{\oldurl{#1}}}
\def\url{\myurl}
\newcommand\monthdisplay[1]{\unskip}
\newcommand\Id[1][]{\textsf{Id}_{#1}}
\newcommand{\multiset}[1]{\{ \!\! \{ #1\} \!\! \} }
\newcommand{\sep}{\mbox{$\;|\;$}}    % Separators in BNF definitions
\newcommand\FV[2][{}]{\textsf{FV}_{#1}(#2)}
\newcommand{\subst}[3]{ \left\{{}^{#3}\hspace{-6pt}\diagup\hspace{-2pt}_{#2} \right\}\hspace{-1pt} #1 }
\newcommand{\seqg}[3]{\mbox{$\ {#1}_{#2}^{#3}\ $}}
\newcommand{\seqf}[2][]{\seqg{\vdash}{#1}{#2}}
\newcommand{\seqFOL}[2][]{\seqg{\vdash_{\textsf{FOL}}}{#1}{#2}} % First order Logic
\newcommand{\seq}[1][]{\seqf[#1]{}}
\newcommand{\seqFol}[1][]{\seqFOL[#1]{}}  %new for FOL
\newcommand\Seq[3][]{#2\seq[{#1}] #3}
\newcommand\SeqF[3][]{#2\seqFol[{#1}] #3} %FOL
\newcommand\DerOSPos[3][\mathcal P]{\DerPos{#2}{#3}{}{#1}}
\newcommand\DerOSNeg[3][\mathcal P]{\DerNeg{#2}{#3}{}{#1}}
\newcommand\DerOSFl[3][]{\Seq[#1]{#2}{#3}}  % for derivation
\newcommand\DerOSFOL[3][]{\SeqF[#1]{#2}{#3}} % with name
\newcommand\DerPos[5][]{{#2}\seqf[#1]{#5}{[#3]}}
\newcommand\DerNeg[5][]{{#2}\seqf[#1]{#5}{#3}}
\newcommand\@DerOSExwith[1]{}
\newcommand\@DerOSExwithout[1]{}
\newcommand\DerOSEx[4][]{%
  \renewcommand\@DerOSExwith[1][\mathcal P]{#2\seqf[#1]{##1}[#3]#4}%
%  \renewcommand\@DerOSExwithout{{#2}\seqf[#1]{#5}{[#3]}}%
%  \@ifnextchar[\@DerOSExwith\@DerOSExwithout%
  \@DerOSExwith
  }
\newcommand\DerOSLKEx[3]{\DerOSEx[\mbox{\scriptsize\LKThEx}]{#1}{#2}{#3}}
\newcommand\DerPosLK [3][\mathcal P] {\DerPos[\mbox{\scriptsize\LKThp}]{#2}{#3}{}{#1}}
\newcommand\DerNegLK [3][\mathcal P] {\DerNeg[\mbox{\scriptsize\LKThp}]{#2}{#3}{}{#1}}
\newcommand\cut{\textsf{cut}}
\newcommand\daggerL{\raise3pt\hbox{\rotatebox{-40}{$\dagger$}}}
\newcommand\daggerR{\raise0pt\hbox{\rotatebox{40}{$\dagger$}}}
\newcommand\andP{{\wedge^+}}
\newcommand\andN{{\wedge^-}}
\newcommand\orP{{\vee^+}}
\newcommand\orN{{\vee^-}}
\newcommand\UP[1][\mathcal P]{\textsf{U}_{#1}}
\newcommand\andF{\wedge}
\newcommand\orF{\vee}
\newcommand\EX[2]{\exists #1 #2}
\newcommand\FA[2]{\forall #1 #2}
\newcommand{\non}[1]{{#1}^{\perp}}
\newcommand\Theory[2][\mathcal T]{#2\models_{#1}}
\newcommand\LKThEx{\textsf{LK}${}^+({\mathcal T}$)}
\newcommand\LKTh[1][\mathcal T]{\textsf{LK}($#1$)}
\newcommand\LKThp[1][\mathcal T]{\textsf{LK}$^p$($#1$)}
\newcommand\size[1]{\sharp(#1)}
\newcommand\DPLLTh{\textsf{DPLL}($\mathcal T$)}
\newcommand\closubst[1]{#1^\downarrow}
\newcommand\polar[2][\mathcal P]{#1;#2}
\newcommand\atmCtxt[2][\mathcal P]{\textsf{lit}_{#1} (#2)}
\newcommand\atmCtxtP[2]{\textsf{lit}_{#1} {(#2)} }
\newcommand\weak[1][l]{\textsf{W}_{#1}}
\newcommand\contr[1][l]{\textsf{C}_{#1}}
\newcommand\Init[1][]{\textsf{Init}_{#1}}
\newcommand\Iden[1][]{\textsf{Id}_{#1}}
\newcommand\Inst[1][]{\textsf{Inst}_{#1}}
\newcommand\Release{\textsf{Release}}
\newcommand\Select[1][]{\textsf{Select}^{#1}}
\newcommand\Store[1][]{\textsf{Store}^{#1}}
\newcommand\Pol{\textsf{Pol}}
\renewcommand\sqin{\raise1pt\hbox{\large $\,\epsilon\,$}}
\newcommand\TLemma[1][\mathcal T]{\Psi_{#1}}
\newcommand\Index[2][]{\emph{#2}}
\begin{document}
\title{Sequent calculi with procedure calls}

\author{Mahfuza Farooque${}^1$, Stéphane Graham-Lengrand${}^{1,2}$\\[15pt]
  ${}^1$ CNRS\\
  ${}^2$ Ecole Polytechnique\\
  Project PSI: ``Proof Search control in Interaction with domain-specific methods''\\
  ANR-09-JCJC-0006
}

\date{\today}

\maketitle \thispagestyle{empty} 

\abstract{ 

  In this paper, we introduce two focussed sequent calculi, \LKThp\ and
  \LKThEx, that are based on Miller-Liang's
  \LKF\ system~\cite{liang09tcs} for polarised classical logic.  The
  novelty is that those sequent calculi integrate the possibility to
  call a decision procedure for some background theory $\mathcal T$,
  and the possibility to polarise literals "on the fly" during
  proof-search.

  These features are used in other
  works~\cite{farooqueTR12,farooque13} to simulate the
  \DPLLTh\ procedure~\cite{Nieuwenhuis06} as proof-search in the
  extension of \LKThp\ with a cut-rule.

  In this report we therefore prove cut-elimination in \LKThp.

  Contrary to what happens in the empty theory, the polarity of
  literals affects the provability of formulae in presence of a theory
  $\mathcal T$. On the other hand, changing the polarities of
  connectives does not change the provability of formulae, only the
  shape of proofs.

  In order to prove this, we introduce a second sequent calculus,
  \LKThEx\, that extends \LKThp\ with a relaxed focussing discipline,
  but we then show an encoding of \LKThEx\ back into the more
  restrictive system \LKTh.

  We then prove completeness of \LKThp\ (and therefore of \LKThEx)
  with respect to first-order reasoning modulo the ground
  propositional lemmas of the background theory $\mathcal T$.
}

\tableofcontents

\newpage

\section{\LKThp: Definitions}
\label{sec:LKTh}
The sequent calculus \LKThp\ manipulates the formulae of first-order logic, with the specificity that connectives are of one of two kinds: positive ones and negative ones, and each \emph{boolean} connective comes in two versions, one of each kind. This section develops the preliminaries and the definition of the \LKThp\ system.

\begin{definition}[Terms and literals]
Consider an infinite set of elements called \Index[variable]{variables}. %and the set of elements called meta-variables which is denoted \MVname. Let $F_{\Sigma}$ be a first-order term signature.

The set of \Index[term]{terms} over a first-order (function) signature $F_{\Sigma}$  is defined by:

%\[ t,t_1,t_2,\ldots  \eqdef x \mid X \mid f(t_1,\ldots,t_n) \]
\[ t,t_1,t_2,\ldots  \eqdef x \mid f(t_1,\ldots,t_n) \]
with $f/n$ ($f$ of arity $n$) ranging over $F_{\Sigma}$ and $x$ ranging over variables. %,
                                                                                %$X$
                                                                                %ranging
                                                                                %over
                                                                                %meta-variables

Let $P_{\Sigma}$ be a first-order predicate signature equipped with an
involutive and arity-preserving function called \Index{negation}. The
negation of a predicate symbol $P$ is denoted $\non P$.

Let $\mathcal L^\top$ be the set $\{ P(t_1,\ldots,t_n) \mid P/n \in
P_{\Sigma}, t_1,\ldots t_n\mbox{ terms}\}$, to which we extend the
involutive function of negation with:
\[ \non{(P(t_1,\ldots,t_n))}  \eqdef \non{P}(t_1,\ldots,t_n) \]

The substitution, in a term $t'$, of a term $t$ for a variable $x$,
denoted $\subst {t'} x t$, is defined as usual, and straightforwardly
extended to elements of $\mathcal L^\top$.

In the rest of this chapter, we consider a subset $\mathcal
{L}\subseteq\mathcal L^\top$, of elements called
\Index[literal]{literals} and denoted $l,l_1,l_2\ldots$, that is
closed under negation and under substitution.\footnote{Very often we will take $\mathcal L=\mathcal L^\top$, but it is not a necessity.}
\label{def:subst}

For a set $\mathcal A$ of literals, we write $\subst {\mathcal A} x t$
for the set $\{\subst l x t \mid l\in\mathcal A\}$.  The closure of $\mathcal
A$ under all possible substitutions is denoted $\closubst {\mathcal
  A}$.
\end{definition}

%\begin{definition}[Substitution]
%A \emph{substitution} is a partial function from meta-variables to terms $(\sigma(X)=t)$, which can be easily extended to terms $(\sigma(u)=t)$ as follows:\\\\
%\begin{center}
%\begin{tabular}{ l l l }
%  $f(\sigma(t_1,\ldots,t_n))$ & = & $\sigma(f(t_1,\ldots,t_n))$ \\
%  $x$ & = & $\sigma(x)$    \\
%  $X$ & = & $\sigma(X)$ where $X \not\in domain(\sigma)$ \\
%\end{tabular}  
%\end{center}
%
%A substitution $\sigma$ is called an \emph{unifier} of the unification problem $t_1=u_1,\ldots,t_n=u_n$ if $\sigma(t_1)=\sigma(u_1),\ldots,\sigma(t_n)=\sigma(u_n)$.
%
%A \emph{ground term} (\resp literal) is a term (\resp literal) that contains no meta-variables. 
%
%Let $V$ be a set of meta-variables. A substitution $\sigma$ such that, for all $X\in V$, $\sigma(X)$ is a ground term is called a \emph{grounding substitution} for $V$.
%\end{definition}

\begin{notation}
  We often write $\mathcal V,\mathcal V'$ for the set or multiset union of $\mathcal V$ and $\mathcal V'$.
\end{notation}

\begin{remark}Negation obviously commutes with substitution.
\end{remark}

\begin{definition}[Inconsistency predicates]\strut
  \label{def:inconsistency}

  An \Index{inconsistency predicate} is a predicate over sets of literals 
  \begin{enumerate}
  \item satisfied by the set $\{l,\non l\}$ for every literal $l$;
  \item that is upward closed (if a subset of a set satisfies the predicate, so does the set);
  \item such that if the sets $\mathcal A,l$ and $\mathcal A,\non l$ satisfy it, then so does $\mathcal A$.
  \item such that if a set $\mathcal A$ satisfies it, then so does $\subst {\mathcal A} x t$. 
  \end{enumerate}

  The smallest inconsistency predicate is called the
  \Index{syntactical inconsistency} predicate\footnote{It is the
    predicate that is true of a set $\mathcal A$ of literals iff
    $\mathcal A$ contains both $l$ and $\non l$ for some $l\in\mathcal
    L$.}.  If a set $\mathcal A$ of literals satisfies the
  syntactically inconsistency predicate, we say that $\mathcal{P}$ is
  \emph{syntactically inconsistent}, denoted $\mathcal
  P\models$. Otherwise $\mathcal A$ is \Index[syntactical
    consistency]{syntactically consistent}.

%  The theory $\mathcal T$ in the notation \LKTh\ is described by means of
%  an(other) inconsistency predicate, called the \emph{semantical
%    inconsistency predicate}, which will be a formal parameter of the
%  inference system defining \LKThp.

  In the rest of this chapter, we specify a ``theory'' $\mathcal T$ by
  considering another inconsistency predicate called the
  \Index{semantical inconsistency} predicate.  If a set $\mathcal A$
  of literals satisfies it, we say that $\mathcal A$ is
  \Index[semantical inconsistency]{semantically inconsistent}, denoted
  by $\mathcal A\models_{\mathcal T}$. Otherwise $\mathcal A$ is
  \Index[semantically consistent]{semantically consistent}.
\end{definition}

\begin{remark}
\label{rem:contract}\strut
\begin{itemize}
\item
In the conditions above, (1) corresponds to \Index{basic inconsistency}, (2) corresponds to \Index{weakening}, (3) corresponds to \Index{cut-admissibility} and (4) corresponds to \Index{stability under instantiation}. \emph{Contraction} is built-in because inconsistency predicates are predicates over sets of literals (not multisets).
\item If $\mathcal A$ is syntactically consistent, $\subst {\mathcal A} x t$ might not be syntactically consistent.
\end{itemize}
\end{remark}

\begin{definition}[Formulae]\strut\label{def:polarformulae}

The \Index{formulae} of polarised classical logic are given by the following grammar:
  \[
  \begin{array}{lrl}
    \mbox{Formulae }&A,B,\ldots\ \strut\recdef &l\\
    &\mid &A\andP B\mid A\orP B \mid \EX x A \mid \top^+\mid \bot^+\qquad\strut\\
    &\mid &A\andN B\mid A\orN B \mid \FA x A \mid \top^-  \mid \bot^-
  \end{array}
  \]
where $l$ ranges over $\mathcal L$.

The set of \Index[free variable]{free variables} of a formula $A$, denoted $\FV A$, and $\alpha$-conversion, are defined as usual so that both $\EX x A$ and $\FA x A$ bind $x$ in $A$.

The size of a formula $A$, denoted $\size A$, is its size as a tree (number of nodes).

\Index[negation]{Negation} is extended from literals to all formulae:
  \[
  \begin{array}{|ll|ll|}
    \hline
    \non{(A\andP B)}&\eqdef \non A \orN\non B&\non{(A\andN B)}&\eqdef \non A \orP\non B    \\
    \non{(A\orP B)}&\eqdef \non A \andN\non B&\non{(A\orN B)}&\eqdef \non A \andP\non B    \\
    \non{(\EX x A)}&\eqdef \FA x {\non A}& \non{(\FA x A)}&\eqdef \EX x {\non A}   \\
    \non{(\top^+)}&\eqdef \bot^-& \non{(\top^-)}&\eqdef \bot^+   \\
    \non{(\bot^+)}&\eqdef \top^-& \non{(\bot^-)}&\eqdef \top^+   \\
    \hline
  \end{array}
  \]

The \Index{substitution} in a formula $A$ of a term
$t$ for a variable $x$, denoted $\subst A x t$, is defined in the usual capture-avoiding
way.
\end{definition}

\begin{notation}
For a set (\resp multiset) $\mathcal V$ of literals / formulae,
$\non{\mathcal V}$ denotes $\{\non A\mid A\in\mathcal V\}$ (\resp
$\multiset{\non A\mid A\in\mathcal V}$).  Similarly, we write
$\subst {\mathcal V} x t$ for $\{\subst A x t \mid A\in\mathcal V\}$
(\resp $\multiset{\subst A x t \mid A\in\mathcal V}$), and $\FV{\mathcal V}$ for the set  $\bigcup_{A\in\mathcal V}\FV A$.
\end{notation}

\begin{definition}[Polarities]\strut

A \Index{polarisation set} $\mathcal P$ is a set of literals ($\mathcal P\subseteq\mathcal L$) that is syntactically consistent, and 
such that $\FV{\mathcal P}$ is finite.

Given such a set, we define \Index{$\mathcal P$-positive formulae} and \Index{$\mathcal P$-negative formulae} as the formulae generated by the following grammars:
\[
\begin{array}{lll}
  \mathcal P\mbox{-positive formulae }&P,\ldots&\recdef p\mid A\andP B\mid A\orP B\mid \EX x A \mid \top^+ \mid \bot^+\\
  \mathcal P\mbox{-negative formulae }&N,\ldots&\recdef \non p\mid A\andN B\mid A\orN B \mid \FA x A \mid \top^-  \mid \bot^-\\
\end{array}
\]
where $p$ ranges over $\mathcal P$.

In the rest of the chapter, $p$, $p'$,\ldots will denote a literal that is $\mathcal P$-positive, when the polarisation set $\mathcal P$ is clear from context.

Let $\UP$ be the set of all \Index{$\mathcal P$-unpolarised literals}, \ie literals that are neither $\mathcal P$-positive nor $\mathcal P$-negative.
\end{definition}

\begin{remark}
  Notice that the negation of a $\mathcal P$-positive formula is $\mathcal P$-negative and vice versa. On the contrary, nothing can be said of the polarity of the result of substitution on a literal \wrt the polarity of the literal: \eg $l$ could be in $\mathcal P$-positive, while $\subst l x t$ could be $\mathcal P$-negative or $\mathcal P$-unpolarised.
\end{remark}

\begin{definition}[\LKThp]

  The sequent calculus \LKThp\ manipulates two kinds of sequents:
  \begin{centre}
    \begin{tabular}{ l l }
      \mbox{Focused sequents }&
      $\DerPos \Gamma A {}{\mathcal P}$ \\
      \mbox{Unfocused sequents }&
      $\DerNeg \Gamma {\Del}{}{\mathcal P}$
    \end{tabular}
  \end{centre}
  where $\mathcal P$ is a polarisation set, $\Gam$ is a (finite) multiset of literals and $\mathcal P$-negative formulae, $\Delta$ is a (finite) multiset of formulae, and $A$ is said to be in the \Index{focus} of the (focused) sequent.

By $\atmCtxtP{\mathcal P}\Gam$ we denote the sub-multiset of $\Gam$ consisting of its $\mathcal P$-positive literals (\ie $\mathcal P\cap\Gamma$ as a set).

The rules of \LKThp, given in Figure~\ref{fig:LKThp}, are of three kinds: \Index{synchronous rules}, \Index[asynchronous rules]{asynchronous rules}, and \Index{structural rules}. These correspond to three alternating phases in the proof-search process that is described by the rules.
\end{definition}

\begin{figure}[!h]
  \[
  \begin{array}{|c|}
    \upline
    \textsf{Synchronous rules}
    \hfill\strut\\[3pt]
    \infer[{[(\andP)]}]{\DerPos{\Gamma}{A\andP B}{}{\mathcal{P}}}
    {\DerPos{\Gamma}{A}
      {}{\mathcal P} \qquad \DerPos{\Gamma}{B}{}{\mathcal{P}}}
    \qquad
    \infer[{[(\orP)]}]{\DerPos{\Gamma}{A_1\orP A_2}{}{\mathcal{P}}}
    {\DerPos{\Gamma}{A_i}{}{\mathcal{P}}}
       \qquad 
    \infer[{[(\EX)]}]{\DerPos{\Gamma}{\EX x A}{}{\mathcal{P}}}
    {\DerPos{\Gamma}{\subst A x t}{}{\mathcal{P}}}   
    \\[15pt]
      \infer[{[{(\top^+)}]}]{\DerPos{\Gamma}{\top^+} {} {\mathcal{P}}} {\strut}
       \qquad 
\infer[{[({\Init[1]})]l \mbox{ is $\mathcal P$-positive}}]
	{\DerPos{\Gamma  }{l}{} {\mathcal{P}}} {\atmCtxtP{\mathcal P}\Gam,\non l\models_{\mathcal T}}    \qquad
    \infer[{[(\Release)]N \mbox{ is not $\mathcal P$-positive}}]{\DerPos {\Gam} {N} {} {\mathcal{P}}}
    {\DerNeg {\Gam} {N} {} {\mathcal{P}}}\\
    \midline
    \textsf{Asynchronous rules}
    \hfill\strut\\[3pt]
    \infer[{[(\andN)]}]{\DerNeg{\Gamma}{A\andN B,\Delta} {} {\mathcal{P}}}
    {\DerNeg{\Gamma}{A,\Delta} {} {\mathcal{P}} 
      \qquad \DerNeg{\Gamma}{B,\Delta} {} {\mathcal{P}}}
    \qquad
    \infer[{[(\orN)]}]{\DerNeg {\Gamma} {A_1\orN A_2,\Delta} {} {\mathcal{P}}}
    {\DerNeg {\Gamma} {A_1,A_2,\Delta} {} {\mathcal{P}}}
    
   \qquad
       \infer[{[(\FA)] x\notin\FV{\Gam,\Delta,\mathcal P} }]
       {\DerNeg{\Gamma}{(\FA x A),\Delta}{} {\mathcal{P}}}
    {\DerNeg {\Gamma} {A,\Delta}{} {\mathcal{P}}}
\\\\
    \infer[{[{(\bot^-)}]}]{\DerNeg{\Gamma} {\Del,\bot^-} {} {\mathcal{P}}}
    {\DerNeg{\Gamma} {\Del} {} {\mathcal{P}}}
     \qquad
    \infer[{[{(\top^-)}]}]{\DerNeg{\Gamma} {\Del,\top^-} {} {\mathcal{P}}}{\strut}
     \qquad
    \infer[{[({\Store})]%
      \begin{array}l% 
        A\mbox{ is a literal}\\
        \mbox{or is $\mathcal P$-positive}\\
      \end{array}}]{\DerNeg \Gam {A,\Del} {} {\mathcal{P}}} 
    {\DerNeg {\Gam,\non A} {\Del} {} {\polar{\non A}}}  
    \\\midline
      \textsf{Structural rules}
    \hfill\strut\\[3pt]
    \infer[{[(\Select)]\mbox{$P$ is not $\mathcal P$-negative}}]
    {\DerNeg {\Gam,\non P} {}{} {\mathcal{P}}} 
    {\DerPos {\Gam,\non P} {P} {} {\mathcal{P}}}
    \qquad
    \infer[{[({\Init[2]})]}]{\DerNeg {\Gam} {}{} {\mathcal{P}}}{\atmCtxtP{\mathcal P}\Gam\models_{\mathcal T}}
    \downline
  \end{array}
  \]
  \begin{tabular}{lll}%
    where& $\polar A \eqdef \mathcal P,A$& if $A\in\UP$\\
    &$\polar A \eqdef \mathcal P$& if not
  \end{tabular}%

  \caption{System \LKThp}
  \label{fig:LKThp}
\end{figure}

The gradual proof-tree construction defined by the inference rules
of \LKThp\ is a goal-directed mechanism whose intuition can be given
as follows:

Asynchronous rules are invertible: $(\andN)$ and $(\orN)$ are applied
eagerly when trying to construct the proof-tree of a given sequent;
$(\Store)$ is applied when hitting a literal or a positive
formula on the right-hand side of a sequent, storing its negation on
the left.

When the right-hand side of a sequent becomes empty, a sanity check
can be made with $({\Init[2]})$ to check the semantical consistency of
the stored (positive) literals (\wrt the theory), otherwise a choice must
be made to place a formula in focus which is not $\mathcal P$-negative, 
before applying synchronous rules like $(\andP)$ and $(\orP)$. Each such rule
decomposes the formula in focus, keeping the revealed sub-formulae in
the focus of the corresponding premises, until a positive literal or a
non-positive formula is obtained: the former case must be closed
immediately with $({\Init[1]})$ calling the decision procedure, and
the latter case uses the $(\Release)$ rule to drop the focus and start
applying asynchronous rules again. The synchronous and the structural
rules are in general not invertible, and each application of those
yields a potential backtrack point in the proof-search.

\begin{remark} The polarisation of literals (if not already polarised) happens in the $(\Store)$ rule, where the construction $\polar A$ plays a crucial role. It will be useful to notice the commutation $\polar[\polar A] B=\polar[\polar B] A$ unless $A=\non B\in\UP$.
\end{remark}

\section{Admissibility of basic rules}
\label{sec:adm}
In this section, we show the admissibility and invertibility of some rules, in order to prove the meta-theory of \LKThp.

\begin{lemma}[Weakening and contraction]
  The following rules are height-preserving admissible in \LKThp:
  \[\begin{array}{c}
    \infer[{[({\weak})]}]{\DerNeg {\Gam,A}{\Delta}{}{\mathcal P}}
    {\DerNeg \Gam \Del {}{\mathcal P}}
    \qquad
    \infer[{[({\weak[f]})]}]{\DerPos {\Gam,A}{B}{}{\mathcal P}}
    {\DerPos \Gam B {}{\mathcal P}}
    \\\\
    \infer[{[({\contr})]}]{\DerNeg {\Gam,A}{\Delta}{}{\mathcal P}}
    {\DerNeg {\Gam,A,A} \Del {}{\mathcal P}}
    \qquad
    \infer[{[({\contr[f]})]}]{\DerPos {\Gam,A}{B}{}{\mathcal P}}
    {\DerPos {\Gam,A,A} B {}{\mathcal P}}
    \qquad
    \infer[{[({\contr[r]})]}]{\DerNeg {\Gam}{\Delta,A}{}{\mathcal P}}
    {\DerNeg {\Gam} {\Del,A,A} {}{\mathcal P}}
  \end{array}
  \]
\end{lemma}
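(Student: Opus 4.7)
The plan is to prove all five rules simultaneously by a straightforward induction on the height of the given derivation, treating the two kinds of sequents ($\DerPos \Gamma B {}{\mathcal P}$ and $\DerNeg \Gamma \Delta {}{\mathcal P}$) uniformly. For each last rule of the derivation, I will show that the weakened/contracted conclusion can be obtained by a derivation of the same height, by first applying the induction hypothesis to the premises, then applying the same rule. Because contexts are multisets, the structural manipulations involved are essentially bookkeeping; no rule inspects the multiset for \emph{minimality}.

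For \textbf{weakening} ($\weak$ and $\weak[f]$), most rules go through trivially because their side conditions mention only $\mathcal P$ or constrain shapes of formulae in focus or in $\Delta$, not the full $\Gamma$. The cases that deserve care are: (i) the two $\Init$ rules, where the hypothesis $\atmCtxtP{\mathcal P}\Gam,\non l\models_{\mathcal T}$ (respectively $\atmCtxtP{\mathcal P}\Gam\models_{\mathcal T}$) must be preserved when $\Gamma$ grows: this follows from condition (2) of Definition~\ref{def:inconsistency} (upward closure), since $\atmCtxtP{\mathcal P}{\Gamma,A}\supseteq\atmCtxtP{\mathcal P}{\Gamma}$; (ii) the $(\FA)$ rule, whose side condition $x\notin\FV{\Gam,\Delta,\mathcal P}$ can be violated by the added formula $A$, which I handle by $\alpha$-renaming the bound variable before applying the induction hypothesis (here the ``height-preserving'' claim relies on the implicit convention that $\alpha$-conversion does not count as an inference step); (iii) the $(\Store)$ rule, whose premise carries the possibly-extended polarisation set $\polar[\mathcal P]{\non A}$: weakening of $\Gamma$ does not alter $\mathcal P$, so the premise remains well formed.

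For \textbf{contraction} ($\contr$, $\contr[f]$ and $\contr[r]$), the induction is analogous and in fact simpler: the side conditions of $\Init[1]$, $\Init[2]$, and $(\FA)$ are unchanged when one copy of a repeated formula is removed, since $\atmCtxtP{\mathcal P}{\Gamma,A,A} = \atmCtxtP{\mathcal P}{\Gamma,A}$ as sets of literals and $\FV{\Gamma,A,A}=\FV{\Gamma,A}$. The only subtlety arises again in $(\Store)$: if the last rule stores a copy of $A$ producing a premise of the form $\DerNeg {\Gam,\non A,\non A} \Del {} {\polar{\non A}}$ and we wish to contract two pre-existing copies of some other formula in $\Gamma$, the induction hypothesis applies directly; if instead the two copies to be contracted are the two occurrences of $\non A$ in the premise and conclusion, it suffices to observe that $\polar[\polar{\non A}]{\non A}=\polar{\non A}$, so the induction hypothesis on the premise yields exactly the contracted conclusion.

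The main (mild) obstacle is therefore the bookkeeping around the $(\FA)$ and $(\Store)$ rules: checking that $\alpha$-conversion suffices for $(\FA)$ under weakening, and that the polarisation-set update $\polar{\cdot}$ interacts correctly with contracted literal occurrences. Everything else reduces to mechanically reapplying the same rule to induction-hypothesis-derived premises.
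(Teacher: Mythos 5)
Your overall strategy --- induction on the height of the given derivation, reapplying the last rule to weakened/contracted premises --- is exactly the paper's (the paper's proof is the one-line ``by induction on the derivation of the premiss''), and your treatment of $\weak$, $\weak[f]$, $\contr$ and $\contr[f]$ is sound: no rule of \LKThp\ ever removes or decomposes a formula of $\Gamma$, the semantic side-conditions of $(\Init[1])$ and $(\Init[2])$ only see $\atmCtxtP{\mathcal P}{\Gam}$ as a set (so they are preserved by upward closure under weakening and unchanged under contraction), and the $(\FA)$ side-condition is handled by $\alpha$-renaming. (Your worry about $\polar[\polar{\non A}]{\non A}$ in the left-contraction $(\Store)$ case concerns a situation that does not actually arise: the two contracted occurrences already sit in the conclusion's context, so the induction hypothesis applies to the premise verbatim.)

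The genuine gap is in $\contr[r]$, which you dismiss as ``analogous and in fact simpler''. It is the hardest of the five. Consider a derivation of $\DerNeg{\Gam}{\Del,A,A}{}{\mathcal P}$ whose last rule \emph{acts on one of the two copies of $A$}. For instance, if $A=B\andN C$ and the last rule is $(\andN)$ on one copy, the premises are $\DerNeg{\Gam}{\Del,B,A}{}{\mathcal P}$ and $\DerNeg{\Gam}{\Del,C,A}{}{\mathcal P}$: neither is an instance of the induction hypothesis (there are no longer two identical occurrences to contract), and reapplying $(\andN)$ at the root would require $\DerNeg{\Gam}{\Del,B}{}{\mathcal P}$, which is not what the premises give you. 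Similarly, if $A$ is a literal and the last rule is $(\Store)$ on one copy, the premise $\DerNeg{\Gam,\non A}{\Del,A}{}{\polar{\non A}}$ still carries one occurrence of $A$ on the right and a stored $\non A$ on the left, and no contraction of two identical occurrences applies. The standard repair is to first establish height-preserving invertibility of the asynchronous rules (which the paper proves in Section~\ref{sec:Invert} without using contraction): invert the surviving copy of $A$ in each premise, e.g.\ turning $\DerNeg{\Gam}{\Del,B,A}{}{\mathcal P}$ into $\DerNeg{\Gam}{\Del,B,B}{}{\mathcal P}$ at no greater height, then apply the contraction induction hypothesis to the shorter derivation, and finally reapply the rule. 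Without this detour (or an equivalent strengthened simultaneous induction), the case analysis for $\contr[r]$ does not close.
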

\begin{proof}
  By induction on the derivation of the premiss.
\end{proof}

\begin{lemma}[Identities]
\label{LadamIden}
  The identity rules are admissible in \LKThp:
  \[
  \infer[{[({\Iden[1]})] l \mbox{ is $\mathcal P$-positive}}]
    {\DerPos{\Gamma,l}{l}{}{\mathcal P}}
    {\strut}
  \qquad
  \infer[{[({\Iden[2]})] }]
    {\DerNeg{\Gamma,l,\non l}{}{}{\mathcal P}}
    {\strut}
  \]
\end{lemma}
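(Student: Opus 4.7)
The plan is to dispatch $(\Iden[1])$ immediately from $(\Init[1])$, and then derive $(\Iden[2])$ by case analysis on the polarity of $l$ with respect to $\mathcal P$, reducing each case to $(\Iden[1])$. For $(\Iden[1])$, with $l$ assumed $\mathcal P$-positive, $l$ lies in $\atmCtxtP{\mathcal P}{\Gamma,l}$, so the set $\atmCtxtP{\mathcal P}{\Gamma,l},\non l$ contains both $l$ and $\non l$ and is semantically inconsistent by clauses~(1) and~(2) of Definition~\ref{def:inconsistency}. The rule $(\Init[1])$ therefore closes the goal in one step.

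For $(\Iden[2])$, since the syntactic consistency of $\mathcal P$ precludes $l$ and $\non l$ from both being $\mathcal P$-positive, I will split on three cases. Case~A: $l$ is $\mathcal P$-positive, so $\non l$ is $\mathcal P$-negative. I apply $(\Select)$ choosing the left-hand $\non l$ and sending $l$ into focus (allowed because $l$, being $\mathcal P$-positive, is not $\mathcal P$-negative), then close with $(\Iden[1])$. Case~B: $\non l$ is $\mathcal P$-positive; entirely symmetric to Case~A, selecting the left-hand $l$.

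Case~C, in which both $l$ and $\non l$ are $\mathcal P$-unpolarised, is the interesting one: the strategy is to polarise $\non l$ on the fly. I apply $(\Select)$ on $\non l$ putting $l$ in focus, then $(\Release)$ (both permitted since $l$ is neither $\mathcal P$-positive nor $\mathcal P$-negative), then $(\Store)$ on $l$. Since $\non l\in\UP$, the polarisation set is extended to $\mathcal P':=\mathcal P,\non l$, yielding the sequent $\DerNeg{\Gamma,l,\non l,\non l}{}{}{\mathcal P'}$. With respect to $\mathcal P'$ the literal $\non l$ is positive, reducing this case to Case~B.

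The main thing to verify in Case~C is that the extended polarisation set $\mathcal P'$ is still well-formed: the case hypothesis ensures $\non l\notin\mathcal P$ so syntactic consistency is preserved, $\FV{\non l}$ is finite, and the context $\Gamma$ remains a valid left-hand side because any $\mathcal P$-negative formula in $\Gamma$ is \emph{a fortiori} $\mathcal P'$-negative as $\mathcal P\subseteq\mathcal P'$. Beyond this mild bookkeeping no genuine obstacle appears — the substance of the proof is exactly the three-step manipulation that polarises $\non l$ in Case~C.
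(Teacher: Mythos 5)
Your proposal is correct and follows essentially the same route as the paper: $(\Iden[1])$ is immediate from $(\Init[1])$, and $(\Iden[2])$ splits into the polarised case (handled by $(\Select)$ followed by $(\Iden[1])$) and the unpolarised case (handled by the $(\Select)$/$(\Release)$/$(\Store)$ sequence that polarises one of the two literals on the fly). The only difference is the trivial symmetry of which literal you focus and store in the unpolarised case — you polarise $\non l$ and reduce to your Case~B, whereas the paper polarises $l$ and reduces to the other case.
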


\begin{proof}
It is trivial to prove \Index{$\Iden[1]$}.

If $l$ or $\non l$ is $\mathcal P$-positive, the \Index{$\Iden[2]$} rule can be obtained by a derivation of the following form:
\[
\infer{\DerNeg{\Gamma,l,\non l}{}{}{\mathcal P}}
{ \infer[{[({\Iden[1]})] }]{\DerPos{\Gamma,l,\non l} {l} {} {\mathcal P}}
{}
}
\]
where $l$ is assumed to be the $\mathcal P$-positive literal.

If $l\in\UP$, we polarise it positively with
\[
\infer{\DerNeg{\Gamma,l,\non l}{}{}{\mathcal P}}
      {
        \infer[{[({\Release})] }]{\DerPos{\Gamma,l,\non l} {\non l} {} {\mathcal P}}
              {
                \infer[{[({\Store})] }]{\DerNeg{\Gamma,l,\non l} {\non l} {} {\mathcal P}}
                      {
                        {\DerNeg{\Gamma,l,\non l,l} {} {} {\mathcal P, l}}
                      }
              }
      }
\]
\end{proof}

%%%%%%%%%%%%%%%%%%%%%%%%%%%%%%%%%%%%%%%%%%%%%%%%%%%%%%%%%%%%%%
\section{Invertibility of the asynchronous phase} 
\label{sec:Invert}
We have mentioned that the asynchronous rules are invertible; now
in this section, we prove it.

\begin{lemma}[Invertibility of asynchronous rules]\strut
  \label{lem:invert}
  All asynchronous rules are invertible in \LKTh.
\end{lemma}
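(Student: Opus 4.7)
The plan is to prove each invertibility by induction on the height of the given derivation, treating each asynchronous rule $R \in \{(\andN), (\orN), (\FA), (\bot^-), (\Store)\}$ in turn (the rule $(\top^-)$ has no premise, so there is nothing to invert). The key preliminary observation is that the conclusion of every asynchronous rule has a non-empty right-hand side, and any such sequent can only be derived by an asynchronous rule: synchronous rules operate on focussed sequents, while the structural rules $(\Select)$ and $(\Init[2])$ both require an empty right-hand side. This sharply restricts the case analysis.

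For the rule $R$ to be inverted, we case-split on the last rule $R'$ of the derivation of the conclusion. If $R' = R$ and acts on the distinguished formula of $R$, the subderivations are directly what we want. Otherwise, $R'$ acts on a different formula of $\Delta$, or performs a $(\Store)$ on a different formula; in every such case we apply the induction hypothesis to each subderivation of $R'$, inverting $R$ at strictly smaller height, and then reapply $R'$ to assemble the required derivation. The rule $(\FA)$ requires some care with the eigenvariable condition: when the last rule is an $(\FA)$ on a different formula, an $\alpha$-renaming may be needed to preserve $x \notin \FV{\Gamma, \Delta, \mathcal P}$, which is always possible given the finiteness of $\FV{\mathcal P}$ and the height-preserving admissibility of weakening established in Section~\ref{sec:adm}.

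The genuinely delicate step is the invertibility of $(\Store)$ when the last rule is itself a $(\Store)$ of a different formula. Permuting the two stores requires the commutation $\polar[\polar A]{B} = \polar[\polar B]{A}$ noted in the remark following the definition of $(\Store)$, which holds in every case \emph{except} when $A = \non B \in \UP$. In this exceptional situation, the literals $A$ and $\non A$ both appear on the right-hand side of the conclusion; after performing both stores in either order, the left-hand side contains both $A$ and $\non A$, so the admissibility of $(\Iden[2])$ from Lemma~\ref{LadamIden} produces a direct derivation of the relevant sequent, which we then combine with further inversion on the remaining formulae of $\Delta$ by the induction hypothesis. The main obstacle will therefore be the bookkeeping of the polarisation set under permutations of $(\Store)$: one must verify at each step that the side conditions of the rules occurring above (e.g.\ ``$A$ literal or $\mathcal P$-positive'' in $(\Store)$, and ``$\mathcal P$-positive'' in $({\Init[1]})$) remain satisfied in the extended polarisation, which they do because extending $\mathcal P$ only enlarges the set of $\mathcal P$-positive literals, and the semantical inconsistency predicate used in $({\Init[1]})$ and $({\Init[2]})$ is upward closed. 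Once this bookkeeping and the exceptional $\UP$-case are handled, the remaining cases are routine permutations between rules acting on disjoint formulas of $\Delta$.
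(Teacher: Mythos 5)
Your proposal is correct and follows essentially the same route as the paper: induction on the given derivation, restriction of the case analysis to asynchronous last rules (since the conclusion has a non-empty right-hand side), routine permutations via the induction hypothesis, and resolution of the one genuinely delicate case --- permuting two $(\Store)$ steps when the commutation $\polar[\polar A]{B}=\polar[\polar B]{A}$ fails because $A=\non B\in\UP$ --- by appealing to the admissible $(\Iden[2])$ of Lemma~\ref{LadamIden}. Your extra remarks on eigenvariable renaming and on the stability of side conditions under enlargement of the polarisation set are consistent with (and slightly more explicit than) the paper's treatment.
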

\begin{proof}
  By induction on the derivation proving the conclusion of the asynchronous rule considered.
  \begin{itemize}
    % About A \andN B
  \item Inversion of $A \andN B$: by case analysis on the last rule actually used
    \begin{itemize}
      % (i) And 
    \item ($\andN$)
    $$\infer{\DerNeg{\Gamma}{A\andN B,C\andN D,\Delta'}{} {\mathcal P}} 
    {\DerNeg{\Gamma}{A\andN B,C,\Delta'}{} {\mathcal P} \quad \DerNeg{\Gamma}{A\andN B,D,\Delta'}{}{\mathcal P} }$$

      By induction hypothesis we get%\\[5pt]
      $$\infer{\DerNeg{\Gamma}{A,C\andN D,\Delta'}{} {\mathcal P}} {\DerNeg{\Gamma}{A,C,\Delta'}{} {\mathcal P} \qquad \DerNeg{\Gamma}{A,D,\Delta'}{} {\mathcal P}}$$
      and
      $$\infer{\DerNeg{\Gamma}{B,C\andN D,\Delta'}{} {\mathcal P}} {\DerNeg{\Gamma}{B,C,\Delta'}{} {\mathcal P} \quad \DerNeg{\Gamma}{B,D,\Delta'}{} {\mathcal P}}$$

    \item ($\orN$)
      % (ii) OR
      $$\infer{\DerNeg{\Gamma}{A\andN B,C\orN D,\Delta'}{} {\mathcal P}} {\DerNeg{\Gamma}{A\andN B,C, D,\Delta'}{}{\mathcal P}}$$

      By induction hypothesis we get%\\[5pt]
      $$\infer{\DerNeg{\Gamma}{A,C\orN D,\Delta'}{} {\mathcal P}} {\DerNeg{\Gamma}{A,C, D,\Delta'}{} {\mathcal P}}$$ and
      $$\infer{\DerNeg{\Gamma}{B,C\orN D,\Delta'}{} {\mathcal P}} {\DerNeg{\Gamma}{B,C, D,\Delta'}{} {\mathcal P}}$$

    \item ($\FA$)
      % (iii) Forall
      $$\infer[x\notin\FV{\Gam,\Delta',A\andN B}]
      {\DerNeg{\Gamma}{A\andN B,(\FA x C),\Delta'}{} {\mathcal P}} 
      {\DerNeg{\Gamma}{A\andN B,C,\Delta'}{} {\mathcal P}}$$

      By induction hypothesis we get%\\[5pt]
      $$\infer[x\notin\FV{\Gam,\Delta',A}]
      {\DerNeg{\Gamma}{A,(\FA x C),\Delta'}{} {\mathcal P}} 
      {\DerNeg{\Gamma}{A,C,\Delta'}{} {\mathcal P}}$$ 
      and
      $$\infer[x\notin\FV{\Gam,\Delta',B}]{\DerNeg{\Gamma}{B,(\FA x C),\Delta'}{} {\mathcal P}} {\DerNeg{\Gamma}{B,C,\Delta'}{} {\mathcal P}}$$

    \item ($\Store$)
      % (iv) Literal
      $$\infer[C\mbox{ literal or $\mathcal P$-positive formula}]
      {\DerNeg{\Gamma}{A\andN B,C,\Delta'}{} {\mathcal P}} 
      {\DerNeg{\Gamma,\non C}{A\andN B,\Delta'}{} {\polar {\non C}}}$$  

      By induction hypothesis we get%\\[5pt]
      $$\infer[C\mbox{ literal or $\mathcal P$-positive formula}]
      {\DerNeg{\Gamma}{A,C,\Delta'}{} {\mathcal P}} 
      {\DerNeg{\Gamma,\non C}{A,\Delta'}{} {\polar {\non C}}}$$   
      and
      $$\infer[C\mbox{ literal or $\mathcal P$-positive formula}]{\DerNeg{\Gamma}{ B,C,\Delta'}{} {\mathcal P}} {\DerNeg{\Gamma,\non C}{B,\Delta'}{} {\polar {\non C}}} $$ 

    \item ($\bot^-$)
      % (v) \Del,\bot^-
      $$\infer{\DerNeg{\Gamma}{A\andN B,\bot^-,\Delta'}{} {\mathcal P}} 
      {\DerNeg{\Gamma}{A\andN B,\Delta'}{} {\mathcal P}}$$  

      By induction hypothesis we get\\[5pt]
      $$\infer{\DerNeg{\Gamma}{A,\bot^-,\Delta'}{} {\mathcal P}} 
      {\DerNeg{\Gamma}{A,\Delta'}{} {\mathcal P}}$$
      and
      $$\infer{\DerNeg{\Gamma}{ B,\bot^-,\Delta'}{} {\mathcal P}} 
      {\DerNeg{\Gamma}{B,\Delta'}{} {\mathcal P}} $$ 

\item ($\top^-$)
      % (v) \Del,\top^-
      $$\infer{\DerNeg{\Gamma}{A\andN B,\top^-,\Delta'}{} {\mathcal P}} {}$$  

      %By induction hypothesis 
      We get\\[5pt]
      $\infer{\DerNeg{\Gamma}{A,\top^-,\Delta'}{} {\mathcal P}} {}$   and
      $\infer{\DerNeg{\Gamma}{ B,\top^-,\Delta'}{} {\mathcal P}} {} $ 
   \end{itemize}

  \item Inversion of $A \orN B$: by case analysis on the last rule    % About A \orN B
    \begin{itemize}
      % (i) And
    \item ($\andN$)
    $$\infer{\DerNeg{\Gamma}{A\orN B,C\andN D,\Delta'}{} {\mathcal P}} 	 	{\DerNeg{\Gamma}{A\orN B,C,\Delta'}{} {\mathcal P} \quad 
    \DerNeg{\Gamma}{A\orN B,D,\Delta'}{} {\mathcal P}}$$   

      By induction hypothesis we get%\\[5pt]
      $$ \infer{\DerNeg{\Gamma}{A,B,C\andN D,\Delta'}{} {\mathcal P}} 
      {\DerNeg{\Gamma}{A,B,C,\Delta'}{} {\mathcal P} \quad
      \DerNeg{\Gamma}{A,B,D,\Delta'}{} {\mathcal P}}$$ 

    \item ($\orN$)
      % (ii) OR
      $$\infer{\DerNeg{\Gamma}{A\orN B,C\orN D,\Delta'}{} {\mathcal P}} {\DerNeg{\Gamma}{A\orN B,C, D,\Delta'}{} {\mathcal P}}$$  

      By induction hypothesis we get%\\[5pt]
   $$\infer{\DerNeg{\Gamma}{A,B,C\orN D,\Delta'}{} {\mathcal P}} 
   {\DerNeg{\Gamma}{A,B,C, D,\Delta'}{} {\mathcal P}}$$ 

    \item ($\FA$)
    %(iii) Forall
      $$\infer[x\notin\FV{\Gam,\Delta'}]
      {\DerNeg{\Gamma}{A\orN B,(\FA x C),\Delta'}{} {\mathcal P}}
      {\DerNeg{\Gamma}{A\orN B,C,\Delta'}{} {\mathcal P}}$$   

      By induction hypothesis we get %\\[5pt]
       $$\infer[x\notin\FV{\Gam,\Delta'}]
       {\DerNeg{\Gamma}{A,B,(\FA x C),\Delta'}{} {\mathcal P}} 
       {\DerNeg{\Gamma}{A,B,C,\Delta'}{} {\mathcal P}}$$  

    \item ($\Store$)%(iv) Literal
      $$\infer[C\mbox{ literal or $\mathcal P$-postive formula}]
      {\DerNeg{\Gamma}{A\orN B,C,\Delta'}{} {\mathcal P}} 
      {\DerNeg{\Gamma,\non C}{A\orN B,\Delta'}{} { \polar {\non C}}}$$  

      By induction hypothesis we get%\\[5pt] 
 $$\infer[C\mbox{ literal or $\mathcal P$-positive formula}]
 {\DerNeg{\Gamma}{A,B,C,\Delta'}{}{\mathcal P}} 
 {\DerNeg{\Gamma,\non C}{A,B,\Delta'}{}{\polar {\non C}}}$$   

    \item ($\bot^-$)%(v) \Del,\bot^-
      $$\infer{\DerNeg{\Gamma}{A\orN B,\bot^-,\Delta'}{} {\mathcal P}} 
      {\DerNeg{\Gamma}{A\orN B,\Delta'}{} {\mathcal P}}$$  

      By induction hypothesis we get%\\[5pt] 
 	$$\infer{\DerNeg{\Gamma}{A,B,\bot^-,\Delta'}{} {\mathcal P}} 
	{\DerNeg{\Gamma}{A,B,\Delta'}{} {\mathcal P}}$$   

    \item ($\top^-$)%(iv) \Del,\top^-
      $$\infer{\DerNeg{\Gamma}{A\orN B,\top^-,\Delta'}{} {\mathcal P}} {}$$  

      %By induction hypothesis
      We get%\\[5pt] 
 	$$\infer{\DerNeg{\Gamma}{A,B,\top^-,\Delta'}{} {\mathcal P}} {}$$   
 
    \end{itemize}

  \item Inversion of $\FA x A$: by case analysis on the last rule 
   % About (\FA x A)

    \begin{itemize}
      % (i) And
    \item ($\andN$) 
    $$\infer{\DerNeg{\Gamma}{(\FA x A),C\andN D,\Delta'}{} {\mathcal P}} 
    {\DerNeg{\Gamma}{(\FA x A),C,\Delta'}{} {\mathcal P} \quad 
    \DerNeg{\Gamma}{(\FA x A),D,\Delta'}{} {\mathcal P}}
    $$ 

      By induction hypothesis  we get %\\[5pt]
      $$\infer[x\notin\FV{\Gam,\Delta'}]
      {\DerNeg{\Gamma}{A,C\andN D,\Delta'}{} {\mathcal P}} 
      {{\DerNeg{\Gamma}{A,C,\Delta'}{} {\mathcal P}}$  
      \quad ${\DerNeg{\Gamma}{A,D,\Delta'}{} {\mathcal P}}}$$ 
      
    \item ($\orN$)
      % (ii) OR
      $$\infer{\DerNeg{\Gamma}{(\FA x A),C\orN D,\Delta'}{} {\mathcal P}} {\DerNeg{\Gamma}{(\FA x A),C, D,\Delta'}{} {\mathcal P}}$$  

      By induction hypothesis we get %\\[5pt]
      $$\infer{\DerNeg{\Gamma}{A,C\orN D,\Delta'}{} {\mathcal P}} 
      {\DerNeg{\Gamma}{A,C, D,\Delta'}{} {\mathcal P}}$$ 

    \item ($\FA$) 
      % (iii) Forall
      $$\infer[x\notin\FV{\Gam,\Delta'}]
      {\DerNeg{\Gamma}{(\FA x A),(\FA x D),\Delta'}{} {\mathcal P}}   	
      {\DerNeg{\Gamma}{(\FA x A),D,\Delta'}{} {\mathcal P}}$$  

      By induction hypothesis we get % \\[5pt]
      $$\infer[x\notin\FV{\Gam,\Delta'}]
      {\DerNeg{\Gamma}{A,(\FA x D),\Delta'}{} {\mathcal P}} 
      {\DerNeg{\Gamma}{A,D,\Delta'}{} {\mathcal P}} $$

    \item ($\Store$)
      % (iv) Literal
      $$\infer[C\mbox{ literal or $\mathcal P$-positive formula}]
      {\DerNeg{\Gamma}{(\FA x A),C,\Delta'}{} {\mathcal P}} 
      {\DerNeg{\Gamma,\non C}{(\FA x A),\Delta'}{} {\polar {\non C}}}$$  

      By induction hypothesis we get%\\[5pt]
      $$\infer[C\mbox{ literal or $\mathcal P$-positive formula}]
      {\DerNeg{\Gamma}{A,C,\Delta'}{} {\mathcal P}}
      {\DerNeg{\Gamma,\non C}{A,\Delta'}{} {\polar {\non C}}}$$
       
       \item ($\bot^-$)
      % (v) \Del,\bot^-
      $$\infer{\DerNeg{\Gamma}{(\FA x A),\bot^-,\Delta'}{} {\mathcal P}} {\DerNeg{\Gamma}{(\FA x A),\Delta'}{} {\mathcal P}}$$  

      By induction hypothesis we get\\[5pt]
      $$\infer{\DerNeg{\Gamma}{A,\bot^-,\Delta'}{} {\mathcal P}} 
      {\DerNeg{\Gamma}{A,\Delta'}{} {\mathcal P}}$$
       
       \item ($\top^-$)
      % (vi) \Del,\top^-
      $$\infer{\DerNeg{\Gamma}{(\FA x A),\top^-,\Delta'}{} {\mathcal P}} {}$$  

      %By induction hypothesis
      We get%\\[5pt]
      $$\infer{\DerNeg{\Gamma}{A,\top^-,\Delta'}{} {\mathcal P}} {}$$
           
    \end{itemize}

  \item Inversion of ($\Store$): where $A$ is a literal or $\mathcal P$-positive formula.\\
  By case analysis on the last rule 
    \begin{itemize}% About (Literal)
      % (i) And
    \item ($\andN$)
    $$\infer{\DerNeg{\Gamma}{A,C\andN D,\Delta'}{} {\mathcal P}} 
    {\DerNeg{\Gamma}{A,C,\Delta'}{} {\mathcal P} \quad \DerNeg{\Gamma}{A,D,\Delta'}{} {\mathcal P}}$$  

      By induction hypothesis we get 
      $$\infer{\DerNeg{\Gamma,\non A}{C\andN D,\Delta'}{} { \polar {\non A} }} 	  
      {\DerNeg{\Gamma,\non A}{C,\Delta'}{} { \polar {\non A}} \quad 
      \DerNeg{\Gamma,\non A}{D,\Delta'}{} {\polar {\non A}} }$$ 

    \item ($\orN$)
    %(ii) OR
      $$\infer{\DerNeg{\Gamma}{A,C\orN D,\Delta'}{} {\mathcal P}} 
      {\DerNeg{\Gamma}{A,C, D,\Delta'}{} {\mathcal P}}$$  

      By induction hypothesis 
      $$\infer{\DerNeg{\Gamma,\non A}{C\orN D,\Delta'}{} {\polar {\non A}}} 
      {\DerNeg{\Gamma,\non A}{C, D,\Delta'}{} {\polar {\non A}}}$$ 

    \item ($\FA$)%(iii) Forall
      $$\infer[x\notin\FV{\Gam,\Delta'}]
      {\DerNeg{\Gamma}{A,(\FA x D),\Delta'}{} {\mathcal P}} 
      {\DerNeg{\Gamma}{A,D,\Delta'}{} {\mathcal P}}$$  \\
      
      By induction hypothesis we get 
      $$\infer[x\notin\FV{\Gam,\Delta'}]
      {\DerNeg{\Gamma,\non A}{(\FA x D),\Delta'}{} {\polar {\non A} }} 
      {\DerNeg{\Gamma,\non A}{D,\Delta'}{} {\polar {\non A} }}$$  

    \item ($\Store$)%(iv) Literal
      $$\infer[B\mbox{ literal or $\mathcal P$-positive formula}]
      {\DerNeg{\Gamma}{A,B,\Delta'}{} {\mathcal P}} 
      {\DerNeg{\Gamma,\non B}{A,\Delta'}{} {\polar{\non B}}}$$

      By induction hypothesis we can construct:
      \[\infer{\DerNeg{\Gamma,\non A}{B,\Delta'}{} {\polar {\non A}}} 
      {\DerNeg{\Gamma,\non A,\non B}{\Delta'}{} {\polar[\polar{\non B}]{\non A}}}\]
      provided $\polar[\polar{\non B}]{\non A}=\polar[\polar{\non A}]{\non B}$, which is always the case unless $A=\non B$ and $A\in \UP$, in which case we build:
    \[
    \infer{\DerNeg{\Gamma,\non A}{B,\Delta}{}{\polar {\non A}}}
    { \infer[{[{(\Iden[2])}]}]{\DerNeg{\Gamma,\non A,\non B}{\Delta'}{}{\polar[\polar{\non A}]{\non B}}}
    		{}
    }
    \]

    \item ($\bot^-$)%(v) \Del,\bot^-,
      $$\infer{\DerNeg{\Gamma}{A,\bot^-,\Delta'}{} {\mathcal P}} 
      {\DerNeg{\Gamma}{A,\Delta'}{} {\mathcal P}}$$

      By induction hypothesis we get 
      $$\infer{\DerNeg{\Gamma,\non A}{\bot^-,\Delta'}{} {\polar {\non A}}} 
      {\DerNeg{\Gamma,\non A}{\Delta'}{} {\polar {\non A}}}$$ 
      
      \item ($\top^-$)%(vi)  \Del, \top^-
      $$\infer{\DerNeg{\Gamma}{A,\top^-,\Delta'}{} {\mathcal P}} {}$$

      %By induction hypothesis 
      We get 
      $$\infer{\DerNeg{\Gamma,\non A}{\top^-,\Delta'}{} {\polar {\non A}}} {}$$ 
    
    \end{itemize}
    
     \item Inversion of ($\bot^-$): by case analysis on the last rule
    \begin{itemize}% About (Literal)
      % (i) And
    \item ($\andN$)
    $$\infer{\DerNeg{\Gamma}{\bot^-,C\andN D,\Delta'}{} {\mathcal P}} 
    {\DerNeg{\Gamma}{\bot^-,C,\Delta'}{} {\mathcal P} \quad \DerNeg{\Gamma}{\bot^-,D,\Delta'}{} {\mathcal P}}$$  

      By induction hypothesis we get 
      $$\infer{\DerNeg{\Gamma}{C\andN D,\Delta'}{} {\mathcal P}} {\DerNeg{\Gamma}{C,\Delta'}{} {\mathcal P} \quad \DerNeg{\Gamma}{D,\Delta'}{} {\mathcal P}}$$ 

    \item ($\orN$)%(ii) OR
      $$\infer{\DerNeg{\Gamma}{\bot^-,C\orN D,\Delta'}{} {\mathcal P}} 
      {\DerNeg{\Gamma}{\bot^-,C, D,\Delta'}{} {\mathcal P}}$$  

      By induction hypothesis 
      $$\infer{\DerNeg{\Gamma}{C\orN D,\Delta'}{} {\mathcal P}} 
      {\DerNeg{\Gamma}{C, D,\Delta'}{} {\mathcal P}}$$ 

    \item ($\FA$)%(iii) Forall
      $$\infer[x\notin\FV{\Gam,\Delta'}]
      {\DerNeg{\Gamma}{\bot^-,(\FA x D),\Delta'}{} {\mathcal P}} 
      {\DerNeg{\Gamma}{\bot^-,D,\Delta'}{} {\mathcal P}}$$  \\
      
      By induction hypothesis we get 
      $$\infer[x\notin\FV{\Gam,\Delta'}]
      {\DerNeg{\Gamma}{(\FA x D),\Delta'}{} {\mathcal P}} 
      {\DerNeg{\Gamma}{D,\Delta'}{} {\mathcal P}}$$  

    \item ($\Store$)%(iv) Literal
      $$\infer[B\mbox{ literal or $\mathcal P$-positive formula}]
      {\DerNeg{\Gamma}{\bot^-,B,\Delta'}{} {\mathcal P}} 
      {\DerNeg{\Gamma,\non B}{\bot^-,\Delta'}{} {\polar {\non B}}}$$

      By induction hypothesis we get 
      $$\infer[B\mbox{ literal or $\mathcal P$-positive formula}]
      {\DerNeg{\Gamma}{B,\Delta'}{} {\mathcal P}} 
      {\DerNeg{\Gamma,\non B}{\Delta'}{} {\polar {\non B}}}$$ 
    
    \item ($\bot^-$)%(v) \Del,\bot^-,
      $$\infer{\DerNeg{\Gamma}{\bot^-,\bot^-,\Delta'}{} {\mathcal P}} 
      {\DerNeg{\Gamma}{\bot^-,\Delta'}{} {\mathcal P}}$$

      By induction hypothesis we get 
      $$\infer{\DerNeg{\Gamma}{\bot^-,\Delta'}{} {\mathcal P}} 
      {\DerNeg{\Gamma}{\Delta'}{} {\mathcal P}}$$ 
      
      \item ($\top^-$)%(vi)  \Del, \top^-
      $$\infer{\DerNeg{\Gamma}{\top^-,\bot^-,\Delta'}{} {\mathcal P}} {}$$

      %By induction hypothesis
       We get $$\infer{\DerNeg{\Gamma}{\top^-,\Delta'}{} {\mathcal P}} {}$$ 
        \end{itemize}
    
    \item Inversion of ($\top^-$): Nothing to do.
          
  \end{itemize}
\end{proof}

%%%%%%%%%%%%%%%%%%%%%%%%%%%%%%%%%%%%%%%%%%%%%%%%%%%%%%%%%%

\section{On-the-fly polarisation}

The side-conditions of the \LKThp\ rules make it quite clear that the
polarisation of literals plays a crucial role in the shape of proofs.
The less flexible the polarisation of literals is, the more structure
is imposed on proofs. We therefore concentrated the polarisation of
literals in just one rule: $(\Store)$. In this section, we describe
more flexible ways of changing the polarity of literals without
modifying the provability of sequents.  We do this by showing the
admissibility and invertibility of some ``on-the-fly'' polarisation
rules.

\begin{lemma}[Invertibility]
The following rules are invertible in \LKThp:
\[
\infer[{[(\Pol)]\atmCtxtP{\mathcal {P},l}{\Gamma,\non \Del},\non l\models_{\mathcal T}}]
{\DerNeg{\Gamma}{\Del}{}{\mathcal P}}
{\DerNeg{\Gamma}{\Del}{}{\mathcal {P},l}}
\qquad
\infer[{[(\Pol_i)]\atmCtxtP{\mathcal {P},l}{\Gamma},\non l\models_{\mathcal T}}]
{\DerPos{\Gamma}{A}{}{\mathcal P}}
{\DerPos{\Gamma}{A}{}{\mathcal {P},l}}
\]
where $l\in \UP$.
\end{lemma}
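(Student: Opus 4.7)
The plan is to prove both invertibilities simultaneously, by induction on the given derivation of the conclusion, with case analysis on the last rule applied. For most rules the transition from polarisation $\mathcal P$ to $\mathcal P, l$ is transparent: since $l \in \UP$, neither $l$ nor $\non l$ is referenced by the rule's polarity-based side conditions, and the IH applies directly to each premise. The side conditions of $(\Init[1])$ and $(\Init[2])$ transfer from $\atmCtxtP{\mathcal P}{\cdot}$ to $\atmCtxtP{\mathcal P, l}{\cdot}$ thanks to the upward closure of $\models_{\mathcal T}$ (property~2 of inconsistency predicates), and the side conditions of $(\Pol)$ and $(\Pol_i)$ themselves propagate from the outer goal to each premise by the same monotonicity.

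Two special cases concern rules whose applicability in $\mathcal P$ is lost when we polarise $l$. First, if the last rule is $(\Release)$ in a focused derivation and the focused formula is $l$ itself, the rule worked in $\mathcal P$ because $l$ is not $\mathcal P$-positive, but it no longer works in $\mathcal P, l$; we replace it with $(\Init[1])$, whose side condition $\atmCtxtP{\mathcal P, l}{\Gamma}, \non l \models_{\mathcal T}$ is precisely the hypothesis of $(\Pol_i)$. Second, if the last rule is $(\Select)$ on $P = \non l$ --- legal in $\mathcal P$ since $\non l$ is unpolarised there, but forbidden in $\mathcal P, l$ where $\non l$ becomes negative --- we exploit that the subderivation of $\DerPos{\Gam, l}{\non l}{}{\mathcal P}$ is forced to continue with $(\Release)$ followed by $(\Store)$ on $\non l$: this exposes a subderivation of $\DerNeg{\Gam, l, l}{}{}{\mathcal P, l}$, from which the admissibility of contraction (Section~\ref{sec:adm}) delivers $\DerNeg{\Gam, l}{}{}{\mathcal P, l}$.

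The main obstacle is the case of $(\Store)$ applied to the literal $l$ in the $(\Pol)$-inversion. Here the original subderivation is in polarisation $\mathcal P, \non l$ (since the $(\Store)$ rule puts $\non l$ into the polarisation set, as $\non l \in \UP$), whereas the desired new premise, obtained by re-applying $(\Store)$ to $l$ in $\mathcal P, l$, must live in polarisation $\mathcal P, l$ (because $\non l$ is now negative rather than unpolarised). The IH cannot be invoked since we are not merely extending $\mathcal P$ with an unpolarised literal but rather swapping the polarities of $l$ and $\non l$. The plan is to perform a secondary transformation of the subderivation, replaying it step by step in $\mathcal P, l$ instead of $\mathcal P, \non l$, and systematically replacing every invocation of $(\Init[1])$ or $(\Init[2])$ that exploited the positivity of $\non l$ with a $(\Select)$ on $P = l$ (positive in $\mathcal P, l$) followed by $(\Init[1])$, whose side condition is furnished by the $(\Pol)$ hypothesis combined with upward closure as the context grows during the asynchronous phase. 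Verifying that this transformation is well-defined at every rule, and that the $(\Pol)$ side condition propagates correctly through every intermediate sequent, is the technically hardest step of the proof.
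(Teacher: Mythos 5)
Your overall architecture (simultaneous induction on the derivation, the $(\Release)$\mbox{-}on\mbox{-}$l$ case closed by $(\Init[1])$ using the $(\Pol_i)$ hypothesis, and the $(\Select)$\mbox{-}on\mbox{-}$\non l$ case closed by inspecting the forced $(\Release)/(\Store)$ continuation and contracting) coincides with the paper's proof for all but one case. The gap is exactly where you locate the difficulty: $(\Store)$ applied to $l$. Your plan --- replay the subderivation of $\DerNeg{\Gamma,\non l}{\Delta'}{}{\mathcal P,\non l}$ in the polarisation $\mathcal P,l$ and patch the offending leaves --- does not go through as described. First, your proposed patch ``$(\Select)$ on $l$ followed by $(\Init[1])$'' is a structural step that is only available at an unfocused sequent with empty right-hand side; it cannot be applied at an $(\Init[1])$ leaf, which sits at the end of a synchronous phase with some literal $p$ in focus. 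Second, at such a leaf the original side condition is $\atmCtxtP{\mathcal P,\non l}{\Gamma''},\non p\models_{\mathcal T}$, which contains $\non l$ as a member of the positive context, whereas the replayed leaf needs $\atmCtxtP{\mathcal P,l}{\Gamma''},\non p\models_{\mathcal T}$, which does not; removing $\non l$ by cut-admissibility of the inconsistency predicate would require an inconsistency of $\atmCtxtP{\mathcal P,l}{\Gamma''},\non p$ together with $l$, and neither the original derivation nor the $(\Pol)$ hypothesis (which supplies an inconsistency \emph{with} $\non l$, not with $l$) provides one. A local leaf-replacement therefore fails; one would have to discard the entire synchronous phase below each $(\Select)$ or $(\Init[2])$ point, which is a different argument that you have not stated, let alone verified.

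The paper avoids the replay altogether, and this is precisely why the side condition of $(\Pol)$ mentions $\non\Del$ rather than just $\Gamma$: in the $(\Store)$\mbox{-}on\mbox{-}$l$ case it throws away the original subderivation and builds a fresh one --- decompose and store all of the remaining $\Delta$, store $l$, then $(\Select)$ $l$ and close with $(\Init[1])$, whose side condition $\atmCtxtP{\mathcal P',l}{\Gamma,\non l,\Gamma'},\non l\models_{\mathcal T}$ follows by weakening from the assumed $\atmCtxtP{\mathcal P,l}{\Gamma,\non \Del},\non l\models_{\mathcal T}$, since by that point every top-level literal of $\non\Del$ has been stored into the context. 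Your proposal never uses the $\non\Del$ component of the hypothesis, which is a strong indication that the hardest case has not actually been closed.
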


\begin{proof} 
By simultaneous induction on the derivation of the conclusion (by case analysis on the last rule used in that derivation):
\begin{itemize}
\item ($\andN$),($\orN$),($\FA$),($\bot^-$),($\top^-$)\\
For these rules, whatever is done with the polarisation set $\mathcal P$ can be done with the polarisation set $\mathcal P,l$:

\[
\infer{\DerNeg{\Gamma}{A\andN B,\Delta}{} {\mathcal P}} 
    {\DerNeg{\Gamma}{A,\Delta}{} {\mathcal P} 
    \quad \DerNeg{\Gamma}{B,\Delta}{}{\mathcal P} }
\qquad
\infer{\DerNeg{\Gamma}{A\andN B,\Delta}{} {\mathcal P,l}} 
    {\DerNeg{\Gamma}{A,\Delta}{} {\mathcal P,l} 
    \quad \DerNeg{\Gamma}{B,\Delta}{}{\mathcal P,l} }
\]
\[
\infer{\DerNeg{\Gamma}{A\andN B,\Delta}{} {\mathcal P}} 
    {\DerNeg{\Gamma}{A,B,\Delta}{} {\mathcal P} }
\qquad
\infer{\DerNeg{\Gamma}{A\orN B,\Delta}{} {\mathcal P,l}} 
    {\DerNeg{\Gamma}{A,B,\Delta}{} {\mathcal P,l} }
\]
\[
\infer{\DerNeg{\Gamma}{\FA x A,\Delta}{} {\mathcal P}} 
    {\DerNeg{\Gamma}{A,\Delta}{} {\mathcal P} }
\qquad
\infer{\DerNeg{\Gamma}{\FA x A,\Delta}{} {\mathcal P,l}} 
    {\DerNeg{\Gamma}{A,\Delta}{} {\mathcal P,l} }
\]
\[
\infer{\DerNeg{\Gamma}{\bot^-,\Delta}{} {\mathcal P}} 
    {\DerNeg{\Gamma}{\Delta}{} {\mathcal P} }
\qquad
\infer{\DerNeg{\Gamma}{\bot^-,\Delta}{} {\mathcal P,l}} 
    {\DerNeg{\Gamma}{\Delta}{} {\mathcal P,l} }
\]
\[
\infer{\DerNeg{\Gamma}{\top^-,\Delta}{} {\mathcal P}} 
    {}
\qquad
\infer{\DerNeg{\Gamma}{\top^-,\Delta}{} {\mathcal P,l}} 
    {}
\]

\item  ($\Store$): 
We assume 
\[
\infer[\mbox{$A$ is a literal or is $\mathcal P$-positive}]{\DerNeg{\Gamma}{A,\Del}{}{\mathcal P}}
{\DerNeg{\Gamma,\non {A} }{\Del}{}{\polar {\non {A}}}}
\]

Notice that  $A$ is either a literal or a $\mathcal P,l$-positive formula, so can prove
\[\infer{\DerNeg{\Gamma}{A,\Delta}{} {\mathcal P,l}} 
        {\DerNeg{\Gamma,\non A}{\Delta}{} {\polar[\mathcal P,l]{\non A} } }
 	\]
provided we can prove the premiss.

\begin{itemize}
\item If $A \neq l$, then $\polar[\mathcal P,l]{\non A}=\polar[\mathcal P]{\non A},l$ and applying the induction hypothesis finishes the proof (unless $A=\non l$ in which case the derivable sequent $\DerNeg{\Gamma,\non A}{\Delta}{} { \polar[\mathcal P]{\non A}}$ is the same as the premiss to be proved);
\item If $A=l$, we build
  \[
  \Infer[{[({\Store})]}]{
    \DerNeg{\Gamma}{l,\Del}{}{\mathcal P,l}
  }{
    \infer[{[({\Store})]}]{
      \DerNeg{\Gamma,\Gamma'}{l}{}{\mathcal P',l}
    }{
      \infer{\DerNeg{\Gamma,\non {l},\Gamma' } {}{}{\mathcal P',l}}{
        \infer[{[({\Init[1]})]}]{
          \DerPos{\Gamma,\non {l},\Gamma' } { {l}}{}{\mathcal P',l}
        }{
          \atmCtxtP{\mathcal P',l}{\Gamma,\non {l},\Gamma' },\non l \models_{\mathcal T} 
        }
      }
    }
  }
  \]
for some $\mathcal P' \supseteq \mathcal P$ and some $\Gamma'\supseteq\atmCtxtP{\mathcal L}{\non \Del}$.
The closing condition $\atmCtxtP{\mathcal P',l}{\Gamma,\non {l},\Gamma' },\non l \models_{\mathcal T} $ holds, since
 $\atmCtxtP{\mathcal P, l}{\Gamma,\non l,\non \Del}, {\non l} \subseteq \atmCtxtP{\mathcal P',l}{\Gamma, \non {l}, \atmCtxtP{\mathcal L}{\non \Del}},{\non l}$ is assumed inconsistent.
\end{itemize}

\item $(\Select)$: We assume
  \[
  \infer[\begin{array}l
      A \mbox{ is not $\mathcal P$-negative}\\
      \non A\in \Gamma
  \end{array}]{
    \DerNeg{\Gamma}{}{}{\mathcal P}
  }{
    \DerPos{\Gam}{A}{}{\mathcal P}
  }
  \]
  \begin{itemize}
  \item If $A\neq \non l$, then $A$ is not $\mathcal P, l$-negative and we can use the induction hypothesis (invertibility of $\Pol_i$) to construct:
    \[
    \infer{\DerNeg{\Gamma}{}{}{\mathcal P,l}}{
      \DerPos{\Gam}{A}{}{\mathcal P,l}
    }
    \]
  \item If $A=\non l$, then $l\in\Gamma$ and the hypothesis can only be derived by
    \[ 
    \infer{\DerNeg{\Gamma}{}{}{\mathcal P}}{
      \infer{\DerPos{\Gam}{\non l}{}{\mathcal P}}{
        \infer{\DerNeg{\Gam}{\non l}{}{\mathcal P}}{
          {\DerNeg{\Gam,l}{}{}{\mathcal P, l}}{
          } 
        }
      }
    }
    \]
    as $\polar l=\mathcal P,l$; then we can construct:
    \[
    \infer[{[(\contr)]}]{\DerNeg{\Gamma}{}{}{\mathcal P,l}}{
      \DerNeg{\Gam,l}{}{}{\mathcal P,l}
    }
    \]
  \end{itemize}

\item ($\Init[2]$): 
We assume
$$
\infer{\DerNeg{\Gamma}{}{}{\mathcal P}}
{\atmCtxtP{\mathcal P}{\Gamma}\models_\mathcal T}
$$
We build
$$
\infer{\DerNeg{\Gamma}{}{}{\mathcal P,l}}
{\atmCtxtP{\mathcal P,l}{\Gamma}\models_\mathcal T}
$$

\item ($\andP$),($\orP$),($\EX$),($\top^+$)\\
Again, for these rules, whatever is done with the polarisation set $\mathcal P$ can be done with the polarisation set $\mathcal P,l$:

\[
\infer{\DerPos{\Gamma}{A\andP B}{} {\mathcal P}} 
    {\DerPos{\Gamma}{A}{} {\mathcal P} 
    \quad \DerPos{\Gamma}{B}{}{\mathcal P} }
\qquad
\infer{\DerPos{\Gamma}{A\andP B}{} {\mathcal P,l}} 
    {\DerPos{\Gamma}{A}{} {\mathcal P,l} 
    \quad \DerPos{\Gamma}{B}{}{\mathcal P,l} }
\]
\[\infer{\DerPos{\Gamma}{A_1\orP A_2}{} {\mathcal P}} 
    {\DerPos{\Gamma}{A_i}{} {\mathcal P} }
\qquad
\infer{\DerPos{\Gamma}{A_1\orP A_2}{} {\mathcal P,l}} 
    {\DerPos{\Gamma}{A_i}{} {\mathcal P,l} }
\]
\[
\infer{\DerPos{\Gamma}{\EX x A}{} {\mathcal P}} 
    {\DerPos{\Gamma}{\subst A x t}{} {\mathcal P} }
\qquad
\infer{\DerPos{\Gamma}{\EX x A}{} {\mathcal P,l}} 
    {\DerPos{\Gamma}{\subst A x t}{} {\mathcal P,l} }
\]
\[
\infer{\DerPos{\Gamma}{\top^+}{} {\mathcal P}} 
    {}
\qquad
\infer{\DerPos{\Gamma}{\top^+}{} {\mathcal P,l}} 
    {}
\]

\item (\Release): We assume
$$ \infer{\DerPos{\Gamma}{A}{}{\mathcal P}}
{\DerNeg{\Gamma}{A}{}{\mathcal P}}
$$
where $A$ is not $\mathcal P$-positive.

\begin{itemize}
\item If $A\neq l$, then we build: 
$$ \infer{\DerPos{\Gamma}{A}{}{\mathcal P,l}}
{\DerNeg{\Gamma}{A}{}{\mathcal P,l}}
$$
since $A$ is not $\mathcal P,l$-positive, and we close the branch by applying the induction hypothesis (invertibility of $\Pol$), whose side-condition $\atmCtxtP{\mathcal P,l}{\Gamma,\non A},\non l\models_{\mathcal T}$ is implied by \mbox{$\atmCtxtP{\mathcal P,l}{\Gamma},\non l\models_{\mathcal T}$}.

\item if $A=l$ then we build

\[\infer{\DerPos{\Gamma}{l}{}{\mathcal P,l}}
{\atmCtxtP{\mathcal P,l}{\Gamma},\non l \models_{\mathcal T}   }\]
%(use $\Init[1]$)\\
where $\atmCtxtP{\mathcal P,l}{\Gamma},\non l \models_{\mathcal T}$ is the side-condition of ($\Pol_i$) that we have assumed.
\end{itemize}

 \item ($\Init[1]$) We assume
 \[
 \infer{\DerPos{\Gamma}{l'}{}{\mathcal P}}{} 
 \]
 with $\atmCtxtP{\mathcal P}{\Gamma},\non{l'}\models_{\mathcal T}$ and $l'$ is $\mathcal P$-positive.\\ 
 We build:
 \[
 \infer{\DerPos{\Gamma}{l'}{}{\mathcal P,l}}{} 
 \] 
 since $l'$ is $\mathcal P,l$-positive and $\atmCtxtP{\mathcal P,l}{\Gamma},\non{l'}\models_{\mathcal T}$.
 \end{itemize}
\end{proof}

\begin{corollary}
The following rules are admissible in \LKThp:
\[
\infer[{[({\Store[=]})]}]{\DerNeg\Gamma{A,\Delta}{}{\mathcal P}}
{\DerNeg{\Gamma,\non A}{\Delta}{}{\mathcal P}}
\qquad
\infer[{[({\weak[r]})]}]{\DerNeg\Gamma{\Delta,\Delta'}{}{\mathcal P}}
{\DerNeg{\Gamma}{\Delta}{}{\mathcal P}}
\]
where $A$ is a literal or a $\mathcal P$-positive formula. 
\end{corollary}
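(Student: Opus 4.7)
The plan is to establish $\Store[=]$ first, and then use it (together with left-weakening and the asynchronous rules) as a building block for $\weak[r]$.

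For $\Store[=]$, I would case-split on whether $\non A\in\UP$. When $\non A\notin\UP$---which covers $A$ being a $\mathcal P$-positive formula, a $\mathcal P$-positive literal, or a $\mathcal P$-negative literal---we have $\polar{\non A}=\mathcal P$, so the ordinary $(\Store)$ rule (whose side-condition is met because $A$ is a literal or $\mathcal P$-positive formula) \emph{is} the desired $\Store[=]$ step. The only interesting case is when $\non A\in\UP$, which forces $A$ itself to be an unpolarised literal. From the premise $\DerNeg{\Gamma,\non A}{\Delta}{}{\mathcal P}$, I apply the invertibility of $(\Pol)$ with $l:=\non A$ to obtain $\DerNeg{\Gamma,\non A}{\Delta}{}{\mathcal P,\non A}$; its side-condition $\atmCtxtP{\mathcal P,\non A}{\Gamma,\non A,\non\Delta},A\models_{\mathcal T}$ is immediate because $\non A$ is now $\mathcal P,\non A$-positive and thus belongs to $\atmCtxtP{\mathcal P,\non A}{\Gamma,\non A,\non\Delta}$, so basic inconsistency (clause~(1) of Definition~\ref{def:inconsistency}) together with upward closure (clause~(2)) closes it. The ordinary $(\Store)$ then applies bottom-up, since $\polar{\non A}=\mathcal P,\non A$ now matches the refined polarisation set.

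For $\weak[r]$, I first reduce by induction on $\Delta'$ to the single-formula case: derive $\DerNeg{\Gamma}{\Delta,A}{}{\mathcal P}$ from $\DerNeg{\Gamma}{\Delta}{}{\mathcal P}$ for an arbitrary formula $A$. I then proceed by induction on the structure of $A$. If $A$ is a literal or a $\mathcal P$-positive formula, I apply the (admissible) left-weakening $\weak$ to reach $\DerNeg{\Gamma,\non A}{\Delta}{}{\mathcal P}$ and conclude with the freshly-admissible $\Store[=]$. Otherwise $A$ is a $\mathcal P$-negative non-literal and I apply the matching asynchronous rule bottom-up: in the $\andN$ case the two premises come from two IH invocations; in the $\orN$ case the IH is invoked twice in sequence, adding one subformula at a time; in the $\FA x B$ case I $\alpha$-rename so that $x\notin\FV{\Gamma,\Delta,\mathcal P}$ and invoke the IH on $B$; for $\bot^-$ the new premise is the original hypothesis, and $\top^-$ needs no premise.

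The only delicate point is the unpolarised-literal case of $\Store[=]$: a naive application of $(\Store)$ would need the premise to live in the enlarged polarisation set $\mathcal P,\non A$, whereas it is handed in $\mathcal P$. The just-proved invertibility of $(\Pol)$ is precisely the lemma designed to bridge that gap, and once its side-condition is seen to follow from basic inconsistency, the rest of the argument is essentially mechanical.
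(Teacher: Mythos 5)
Your proof is correct and follows essentially the same route as the paper: for $\Store[=]$ the only non-trivial case is the unpolarised literal, handled exactly as in the paper by bridging the polarisation gap with the invertibility of $(\Pol)$ (whose side-condition follows from basic inconsistency and weakening), and then applying $(\Store)$; for $\weak[r]$ both proofs reduce by multiset induction to weakening by a single formula obtained from $\Store[=]$ after a left weakening. The one place you go beyond the paper's (very terse) argument is the single-formula case of $\weak[r]$ when the added formula is a $\mathcal P$-negative non-literal, where $\Store[=]$ does not apply: your additional structural induction using the asynchronous rules $(\andN)$, $(\orN)$, $(\FA)$, $(\bot^-)$, $(\top^-)$ is exactly what is needed there, and is a case the paper's proof leaves implicit.
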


\begin{proof}
For the first rule:
if $A$ is polarised, we use $(\Store)$ and it does not change $\mathcal P$; otherwise $A$ is an unpolarised literal $l$ and we build
\[
\infer[{[({\Store})]}]{\DerNeg\Gamma{l,\Delta}{}{\mathcal P}}
{
  \infer{\DerNeg{\Gamma,\non l}{\Delta}{}{\mathcal P,\non l}}
        {
         {\DerNeg{\Gamma,\non l}{\Delta}{}{\mathcal P}} 
        }
}
\]
The topmost inference is the invertibility of $(\Pol)$, given that $\atmCtxtP{\mathcal P, \non l}{\Gamma,\non l},l\models_{\mathcal T}$.

For the second case, we simply do a multiset induction on $\Delta'$, using rule $({\Store[=]})$ for the base case, followed by a left weakening.
\end{proof}

Now we can show that removing polarities is admissible: 

\begin{lemma}[Admissibility]
The following rules are admissible in \LKThp:
\[
\infer[{[(\Pol)]}]
{\DerNeg{\Gamma}{\Del}{}{\mathcal P}}
{\DerNeg{\Gamma}{\Del}{}{\mathcal {P},l}}
\qquad
\infer[{[(\Pol_a)]l\notin\Gamma\mbox{ or }\atmCtxtP{\mathcal {P}}{\Gamma},\non l\models_{\mathcal T}}]
{\DerPos{\Gamma}{A}{}{\mathcal P}}
{\DerPos{\Gamma}{A}{}{\mathcal {P},l}}
\]
where $l\in \UP$.
\end{lemma}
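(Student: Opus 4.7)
The plan is to run a simultaneous induction on the derivation of the premiss, covering both $(\Pol)$ and $(\Pol_a)$ in one pass so that recursive calls may cross between focused and unfocused phases. Most cases are routine: whenever the last rule of the $\mathcal P,l$-derivation is an asynchronous rule ($\andN$, $\orN$, $\FA$, $\bot^-$, $\top^-$) or a non-closing synchronous rule ($\andP$, $\orP$, $\EX$, $\top^+$), we reapply the same rule at polarity $\mathcal P$ and invoke the induction hypothesis on each premiss. For $(\Pol_a)$ the side condition on $l$ is passed verbatim to the recursive calls since neither the focus nor the context changes. For $(\Release)$ we use that $\mathcal P$-positivity implies $\mathcal P,l$-positivity (adding $l$ can only create new positive literals), so ``not $\mathcal P,l$-positive'' forces ``not $\mathcal P$-positive'' and $(\Release)$ reapplies, then the unconditional IH on $(\Pol)$ takes care of the unfocused premiss. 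For $(\Store)$ the only non-trivial literal sub-case is $A=l$, handled by first invoking the IH on $(\Pol)$ and then the previously-proved invertibility of $(\Pol)$ specialised to $\non l$, whose side condition is immediate because $\non l$ is introduced into the positive context; when $A$ is compound its polarity is determined by its top connective independently of $\mathcal P$, so $(\Store)$ reapplies directly.

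The truly structural cases are the closing and selection rules. For unfocused $(\Init[2])$ and $(\Select)$ when $l\notin\Gamma$ we have $\atmCtxtP{\mathcal P,l}\Gamma=\atmCtxtP{\mathcal P}\Gamma$, so all $\mathcal T$-consistency conditions transfer, and the precondition of $(\Pol_a)$ is trivially met. When $l\in\Gamma$ we abandon the original derivation and synthesise a fresh one by the chain $(\Select)$-on-$\non l$ (legitimate because $\non l$ is not $\mathcal P$-negative, as $l\in\UP$), then $(\Release)$ (since $\non l$ is not $\mathcal P$-positive), then $(\Store)$ on $\non l$; the resulting premiss is $\DerNeg{\Gamma,l}{}{}{\mathcal P,l}$, which is the given hypothesis modulo contraction on $l$. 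For focused $(\Init[1])$ closing on some $l'\neq l$, we know $l'$ is $\mathcal P$-positive; if $l\notin\Gamma$ the original closing condition already reads $\atmCtxtP{\mathcal P}\Gamma,\non{l'}\models_\mathcal T$; if instead the $(\Pol_a)$ disjunct $\atmCtxtP{\mathcal P}\Gamma,\non l\models_\mathcal T$ holds, we combine it with $\atmCtxtP{\mathcal P}\Gamma,l,\non{l'}\models_\mathcal T$ via the cut property (condition 3 of Definition~\ref{def:inconsistency}) to conclude $\atmCtxtP{\mathcal P}\Gamma,\non{l'}\models_\mathcal T$.

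The main obstacle, and the genuine raison d'\^etre of the $(\Pol_a)$ side condition, is the case of focused $(\Init[1])$ with $l'=l$: the original derivation closes immediately on $l$, but $l$ is not $\mathcal P$-positive, so $(\Init[1])$ cannot be reapplied. The substitute derivation is
\[
\infer[{[(\Release)]}]{\DerPos{\Gamma}{l}{}{\mathcal P}}{
  \infer[{[(\Store)]}]{\DerNeg{\Gamma}{l}{}{\mathcal P}}{
    \infer[{[(\Init[2])]}]{\DerNeg{\Gamma,\non l}{}{}{\mathcal P,\non l}}{\atmCtxtP{\mathcal P}{\Gamma},\non l\models_{\mathcal T}}
  }
}
\]
whose closing condition is either the second disjunct of the $(\Pol_a)$ hypothesis, or, in the alternative disjunct $l\notin\Gamma$, extracted from the original $\atmCtxtP{\mathcal P,l}\Gamma,\non l\models_\mathcal T$ because the two contexts then coincide. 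Beyond this pivotal case, the proof is a bookkeeping exercise in the same style as the invertibility lemma, checking polarity side-conditions and handling the bifurcation between ``$l$ is already witnessed by $\Gamma$'' and ``$l$ plays no role in the local context''.
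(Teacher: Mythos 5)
Your proposal is correct and takes essentially the same route as the paper's own proof: a simultaneous induction on the premiss derivation with the same pivotal constructions — the $(\Select)$/$(\Release)$/$(\Store)$ chain when $l\in\Gamma$, the $(\Release)$/$(\Store)$/$(\Init[2])$ replacement for $(\Init[1])$ closing on $l$ itself, the cut property of semantical inconsistency for $(\Init[1])$ on $l'\neq l$, and the inline reconstruction of $(\Store[=])$ via invertibility of $(\Pol)$ in the $A=l$ storage case. The only cosmetic divergences are that the paper uses the admissible $(\Iden[2])$ rather than $(\Init[2])$ in one sub-case, and that your ``contraction on $l$'' is more naturally a weakening from the hypothesis.
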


\begin{proof} 
By a simultaneous induction on the derivation of the premiss, again by case analysis on the last rule used in the assumed derivation.
\begin{itemize}
\item ($\andN$),($\orN$),($\FA$),($\bot^-$),($\top^-$)\\
For these rules, whatever is done with the polarisation set $\mathcal P,l$ can be done with the polarisation set $\mathcal P$:

\[
\infer{\DerNeg{\Gamma}{A\andN B,\Delta}{} {\mathcal P,l}} 
    {\DerNeg{\Gamma}{A,\Delta}{} {\mathcal P,l} 
    \quad \DerNeg{\Gamma}{B,\Delta}{}{\mathcal P,l} }
\qquad
\infer{\DerNeg{\Gamma}{A\andN B,\Delta}{} {\mathcal P}} 
    {\DerNeg{\Gamma}{A,\Delta}{} {\mathcal P} 
    \quad \DerNeg{\Gamma}{B,\Delta}{}{\mathcal P} }
\]
\[
\infer{\DerNeg{\Gamma}{A\orN B,\Delta}{} {\mathcal P,l}} 
    {\DerNeg{\Gamma}{A,B,\Delta}{} {\mathcal P,l} }
\qquad
\infer{\DerNeg{\Gamma}{A\andN B,\Delta}{} {\mathcal P}} 
    {\DerNeg{\Gamma}{A,B,\Delta}{} {\mathcal P} }
\]
\[
\infer{\DerNeg{\Gamma}{\FA x A,\Delta}{} {\mathcal P,l}} 
    {\DerNeg{\Gamma}{A,\Delta}{} {\mathcal P,l} }
\qquad
\infer{\DerNeg{\Gamma}{\FA x A,\Delta}{} {\mathcal P}} 
    {\DerNeg{\Gamma}{A,\Delta}{} {\mathcal P} }
\]
\[
\infer{\DerNeg{\Gamma}{\bot^-,\Delta}{} {\mathcal P,l}} 
    {\DerNeg{\Gamma}{\Delta}{} {\mathcal P,l} }
\qquad
\infer{\DerNeg{\Gamma}{\bot^-,\Delta}{} {\mathcal P}} 
    {\DerNeg{\Gamma}{\Delta}{} {\mathcal P} }
\]
\[
\infer{\DerNeg{\Gamma}{\top^-,\Delta}{} {\mathcal P,l}} 
    {}
\qquad
\infer{\DerNeg{\Gamma}{\top^-,\Delta}{} {\mathcal P}} 
    {}
\]

\item ($\Store$):   We assume
  \[\infer[\mbox { $A$ is a literal or $\mathcal P ,l$-positive} ]
          {\DerNeg{\Gamma}{A,\Delta}{} {\mathcal P,l}} 
          {\DerNeg{\Gamma,\non A}{\Delta}{} { \polar[\mathcal P,l]{\non A}} }
 	  \]

Notice that  $A$ is either a literal or a $\mathcal P$-positive formula.
\begin{itemize}
\item If $A=\non l$, we build
  \[
  \infer[{[{(\Store)}]}]{
    \DerNeg{\Gamma}{A,\Delta}{} {\mathcal P}
  }{
    \DerNeg{\Gamma,\non A}{\Delta}{} {\mathcal P,\non A } 
  }
  \]
whose premiss is the derivable sequent $\DerNeg{\Gamma,\non A}{\Delta}{} { \polar[\mathcal P,l]{\non A}}$.
\item If $A = l$, we build
\[\infer[{[{(\Store[=])}]}]{\DerNeg{\Gamma}{A,\Delta}{} {\mathcal P}} 
        {\DerNeg{\Gamma,\non A}{\Delta}{} {\mathcal P } }
 	\]
using the admissibility of $\Store[=]$, and we can prove the premiss from the induction hypothesis, as we have
$\polar[\mathcal P,l]{\non A}=\mathcal P,l$.
\item In all other cases, we build
  \[
  \infer[{[{(\Store)}]}]{
    \DerNeg{\Gamma}{A,\Delta}{} {\mathcal P}
  }{
    \DerNeg{\Gamma,\non A}{\Delta}{} {\polar{\non A} } 
  }
  \]
whose premiss is provable from the induction hypothesis, as we have $\polar[\mathcal P,l]{\non A}=\polar[\mathcal P]{\non A},l$.
\end{itemize}

\item $(\Select)$: We assume
$$
\infer[\begin{array}l
\non A \in \Gamma \\
\mbox{ and $A$ not $\mathcal P,l$-negative}
\end{array}]{\DerNeg{\Gamma}{}{}{\mathcal P,l}}
{\DerPos{\Gamma}{A}{}{\mathcal P,l}}
$$

\begin{itemize}
\item If $l\in \Gamma$ then we can build:
  \[
  \infer{\DerNeg{\Gamma} {} {} {\mathcal P} }
        {\infer{\DerPos{\Gamma} {\non l} {} {\mathcal P} }
	  {\infer{\DerNeg{\Gamma} {\non l} {} {\mathcal P} }
	    {\infer[{[(\weak)]}] {\DerNeg {\Gamma,l} {}{} {\polar  l} }
	      {\DerNeg{\Gamma}{}{}{ \polar  l}}
            }
          } 
        }
        \]
and we close with the assumption since $\polar l= \mathcal P,l$.
\item If $l \not\in \Gamma$ then $\atmCtxtP{\mathcal P, l}{\Gamma} = \atmCtxtP{\mathcal P}{\Gamma}$

Using the induction hypothesis (admissibility of $\Pol_a$) we construct :
 $$
 \infer{\DerNeg{\Gamma}{}{}{\mathcal P}}
 {\DerPos{\Gamma}{A}{}{\mathcal P}}
 $$
since $A$ is not $\mathcal P$-negative.
\end{itemize}

\item ($\Init[2]$): We assume 
$$
\infer{\DerNeg{\Gamma}{}{}{\mathcal P,l}}
{\atmCtxtP{\mathcal P,l}{\Gamma}\models_\mathcal T}
$$

\begin{itemize}
\item If $l\in \Gamma$ then again we can build:
  \[
  \infer{\DerNeg{\Gamma} {} {} {\mathcal P} }
        {\infer{\DerPos{\Gamma} {\non l} {} {\mathcal P} }
	  {\infer{\DerNeg{\Gamma} {\non l} {} {\mathcal P} }
	    {\infer[{[(\weak)]}] {\DerNeg {\Gamma,l} {}{} {\polar  l} }
	      {\DerNeg{\Gamma}{}{}{ \polar  l}}
            }
          } 
        }
        \]
and we close with the assumption since $\polar l= \mathcal P,l$.
\item If $l\not\in \Gamma$, $\atmCtxtP{\mathcal P,l}{\Gamma}=\atmCtxtP{\mathcal P}{\Gamma}$,
then we can build:
$$
\infer{\DerNeg{\Gamma}{}{}{\mathcal P}}
{\atmCtxtP{\mathcal P}{\Gamma}\models_\mathcal T}
$$
\end{itemize}

\item ($\andP$),($\orP$),($\EX$),($\top^+$)\\
Again, for these rules, whatever is done with the polarisation set $\mathcal P,l$ can be done with the polarisation set $\mathcal P$:

\[
\infer{\DerPos{\Gamma}{A\andP B}{} {\mathcal P,l}} 
    {\DerPos{\Gamma}{A}{} {\mathcal P,l} 
    \quad \DerPos{\Gamma}{B}{}{\mathcal P,l} }
\qquad
\infer{\DerPos{\Gamma}{A\andP B}{} {\mathcal P}} 
    {\DerPos{\Gamma}{A}{} {\mathcal P} 
    \quad \DerPos{\Gamma}{B}{}{\mathcal P} }
\]
\[\infer{\DerPos{\Gamma}{A_1\orP A_2}{} {\mathcal P,l}} 
    {\DerPos{\Gamma}{A_i}{} {\mathcal P,l} }
\qquad
\infer{\DerPos{\Gamma}{A_1\orP A_2}{} {\mathcal P}} 
    {\DerPos{\Gamma}{A_i}{} {\mathcal P} }
\]
\[
\infer{\DerPos{\Gamma}{\EX x A}{} {\mathcal P,l}} 
    {\DerPos{\Gamma}{\subst A x t}{} {\mathcal P,l} }
\qquad
\infer{\DerPos{\Gamma}{\EX x A}{} {\mathcal P}} 
    {\DerPos{\Gamma}{\subst A x t}{} {\mathcal P} }
\]
\[
\infer{\DerPos{\Gamma}{\top^+}{} {\mathcal P,l}} 
    {}
\qquad
\infer{\DerPos{\Gamma}{\top^+}{} {\mathcal P}} 
    {}
\]

\item (\Release): We assume
$$
\infer{\DerPos{\Gamma}{A}{}{\mathcal P,l}}
{\DerNeg{\Gamma}{A}{}{\mathcal P,l}}
$$
where $A$ is not $\mathcal P,l$-positive.

By induction hypothesis (admissibility of $\Pol$) we can build:
$$
\infer{\DerPos{\Gamma}{A}{}{\mathcal P}}
{\DerNeg{\Gamma}{A}{}{\mathcal P}}
$$

\item ($\Init[1]$): We assume
$$
\infer{\DerPos{\Gamma}{l'}{}{\mathcal P,l}}
{}
$$
where $l'$ is $\mathcal P,l$-positive and $\atmCtxtP{\mathcal P,l}{\Gamma},\non {l'} \models_{\mathcal T}$.
\begin{itemize}
\item If $l' \neq l$, then $l'$ is $\mathcal P$-positive and we can build
\[
\infer{\DerPos{\Gamma}{l'}{}{\mathcal P}}{\atmCtxtP{\mathcal P}{\Gamma},\non{l'}\models_{\mathcal T}}
\]
The condition $\atmCtxtP{\mathcal P}{\Gamma},\non{l'}\models_{\mathcal T}$ holds for the following reasons:\\
If $l\notin\Gamma$, then $\atmCtxtP{\mathcal P}{\Gamma}=\atmCtxtP{\mathcal P,l}{\Gamma}$ and the condition is that of the hypothesis.\\
If $l\in\Gamma$, then the side-condition of $(\Pol_a)$ implies $\atmCtxtP{\mathcal P}{\Gamma},\non l \models_{\mathcal T}$; moreover, the condition of the hypothesis can be rewritten as
$\atmCtxtP{\mathcal P}{\Gamma},l,\non{l'} \models_{\mathcal T}$; the fact that semantical inconsistency admits cuts then proves the desired condition.

\item If $l'=l$ then we build
  \[
  \infer{\DerPos{\Gamma}{l}{}{\mathcal P}}{
    \infer{\DerNeg{\Gamma}{l}{}{\mathcal P}}{
      \DerNeg{\Gamma,\non l}{}{}{ \mathcal P, {\non l}}
    } 
  }
  \]
which we close as follows:
If $l\in \Gamma$ then we can apply $\Iden[2]$, otherwise we apply $\Init[2]$: the condition  $\atmCtxtP{\mathcal P, \non l}{\Gamma,\non l}\models_\mathcal {T}$ holds
   because $\atmCtxtP{\mathcal P, \non l}{\Gamma,\non l}=\atmCtxtP{\mathcal P}{\Gamma},\non l = \atmCtxtP{\mathcal P,l}{\Gamma},\non l$ and the condition of the hypothesis is $\atmCtxtP{\mathcal P,l}{\Gamma},\non l\models_{\mathcal T}$.
 \end{itemize}
 \end{itemize}
\end{proof}

\begin{corollary} The $({\Store[=]})$ rule is invertible, and the $(\Select[-])$ rule is admissible:
  \[
  \infer[{[({\Store[=]})]
      \mbox{\begin{tabular}l
          $A$ is literal\\
          or is $\mathcal P$-positive
      \end{tabular}}
  }]{
    \DerNeg\Gamma{A,\Delta}{}{\mathcal P}
  }{
    \DerNeg{\Gamma,\non A}{\Delta}{}{\mathcal P}
  }
  \qquad
  \qquad
    \infer[{[({\Select[-]})]}]
    {\DerNeg {\Gam,\non l} {}{} {\mathcal{P},\non l}} 
    {\DerPos {\Gam,\non l} {l} {} {\mathcal{P},\non l}}
  \]
\end{corollary}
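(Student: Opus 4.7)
My plan is to treat the two parts separately, using the invertibility of $(\Store)$ established in Lemma~\ref{lem:invert} together with the admissibility of $(\Pol)$ just proved.

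For the invertibility of $(\Store[=])$, assume we have a derivation of $\DerNeg\Gamma{A,\Delta}{}{\mathcal P}$ where $A$ is a literal or a $\mathcal P$-positive formula. Applying the invertibility of $(\Store)$ from Lemma~\ref{lem:invert} yields a derivation of $\DerNeg{\Gamma,\non A}{\Delta}{}{\polar{\non A}}$. It then suffices to case on whether $\non A\in\UP$. If $\non A$ is polarised (the case when $A$ is a $\mathcal P$-positive formula, or a polarised literal), then $\polar{\non A}=\mathcal P$ and we are done. Otherwise $A$ must be an unpolarised literal whose negation is also unpolarised, so $\polar{\non A}=\mathcal P,\non A$; we then invoke the admissibility of $(\Pol)$ with $l\eqdef\non A$ to remove $\non A$ from the polarisation set and conclude with $\DerNeg{\Gamma,\non A}{\Delta}{}{\mathcal P}$.

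For the admissibility of $(\Select[-])$, assume a derivation of $\DerPos {\Gam,\non l} {l} {} {\mathcal{P},\non l}$. Since $\non l\in\mathcal P,\non l$, the literal $l$ is $\mathcal P,\non l$-negative, hence in particular not $\mathcal P,\non l$-positive, so the $(\Release)$ rule applies and yields $\DerNeg {\Gam,\non l} {l} {} {\mathcal{P},\non l}$. Now applying $(\Store)$ (whose side-condition is met since $l$ is a literal) adds $\non l$ to the left context; the polarisation set is unaffected since $\non l$ is already polarised, so $\polar[\mathcal P,\non l]{\non l}=\mathcal P,\non l$. This gives $\DerNeg {\Gam,\non l,\non l} {} {} {\mathcal{P},\non l}$, and one application of the admissible left contraction rule from the weakening-contraction lemma produces the desired $\DerNeg {\Gam,\non l} {} {} {\mathcal{P},\non l}$.

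Neither step looks delicate: the only point where one must pay attention is the commutation $\polar[\polar A] B=\polar[\polar B] A$ implicit in the case analysis for $(\Store[=])$, more precisely the verification that after inverting $(\Store)$ the only scenario requiring a call to $(\Pol)$ is the one where both $A$ and $\non A$ are unpolarised literals. Beyond that, both arguments are short and directly exploit the machinery developed in the previous two sections.
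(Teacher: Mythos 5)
Your argument for the invertibility of $({\Store[=]})$ coincides with the paper's: invert $(\Store)$ to obtain $\DerNeg{\Gamma,\non A}{\Delta}{}{\polar{\non A}}$, and in the only case where $\polar{\non A}\neq\mathcal P$ (namely when $A$ and $\non A$ are unpolarised literals) erase $\non A$ from the polarisation set by the admissibility of $(\Pol)$. For $({\Select[-]})$ you take a genuinely different route. The paper removes $\non l$ from the polarisation set with $(\Pol_a)$, applies the ordinary $(\Select)$, and restores $\non l$ with the invertibility of $(\Pol_i)$; you instead keep the polarisation set $\mathcal P,\non l$ fixed throughout: the given derivation of $\DerPos {\Gam,\non l} {l} {} {\mathcal{P},\non l}$ can only end with $(\Release)$ (the focus is a literal that is not $\mathcal P,\non l$-positive, so $(\Init[1])$ and the other synchronous rules are unavailable), which hands you a derivation of $\DerNeg {\Gam,\non l} {l} {} {\mathcal{P},\non l}$; inverting $(\Store)$ --- with $\polar[\mathcal P,\non l]{\non l}=\mathcal P,\non l$ since $\non l$ is already polarised --- and contracting the duplicated $\non l$ then finishes. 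This is correct, and it buys something: it never changes the polarisation set, so it avoids having to discharge the side-condition of $(\Pol_a)$ (here: $\non l\notin\Gamma$ or $\atmCtxtP{\mathcal P}{\Gamma},l\models_{\mathcal T}$), which the paper's one-line proof leaves implicit even though $\non l$ does occur in the context. The one thing to tighten is wording: you speak of ``applying'' $(\Release)$ and $(\Store)$, but both rules run in the direction opposite to the one in which you use them --- what you actually perform is an inversion (by case analysis on the last rule for $(\Release)$, and by Lemma~\ref{lem:invert} for $(\Store)$); the steps are sound once stated that way.
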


\begin{proof}
  \begin{itemize}
  \item[$({\Store[=]})$]
    Using the invertibility of $(\Store)$, we get a proof of $\DerNeg{\Gamma,\non A}{\Delta}{}{\polar{\non A}}$. If $A$ is polarised, then $\polar {\non A}=\mathcal P$ and we are done.
    Otherwise we have a proof of $\DerNeg{\Gamma,\non A}{\Delta}{}{\mathcal P,{\non A}}$ and we apply the admissibility of $(\Pol)$ to conclude.
  \item[$({\Select[-]})$]
    We first apply the admissibility of $(\Pol_a)$ to prove 
    $\DerPos {\Gam,\non l} {l} {} {\mathcal{P}}$, then the standard $(\Select)$ rule, then the invertibility of $(\Pol_i)$ to get $\DerNeg {\Gam,\non l} {}{} {\mathcal{P},\non l}$.
  \end{itemize}
\end{proof}

\section{Cut-elimination}
\label{sec:cutelim}
Cut-elimination is an important feature of all sequent calculi. In this section we present some admissible cut-rules in \LKThp\ and show how to eliminate them.

\subsection{Cuts with the theory}

\begin{theorem}[$\cut_1$ and $\cut_2$]\strut

  The following rules are admissible in \LKThp, assuming $l\notin\UP$:
 \[\begin{array}{c}
    \infer[\cut_1]{\DerNeg {\Gam}{\Delta}{} {\mathcal P}}
    {{\atmCtxtP{\mathcal P}\Gam,\non l\models_{\mathcal T}}
    \quad \DerNeg {\Gam,l}{\Delta}{} {\mathcal P}}
    \qquad
    \infer[\cut_2]{\DerPos{\Gam}{B}{} {\mathcal P}}
    {{\atmCtxtP{\mathcal P}\Gam,\non l\models_{\mathcal T}}
	\quad 
    \DerPos {\Gam,l}{B}{} {\mathcal P}}
  \end{array}
  \]
\end{theorem}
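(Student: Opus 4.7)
The plan is to prove $\cut_1$ and $\cut_2$ simultaneously, by induction on the height of the derivation of the second (minor) premise, with case analysis on its last rule. The guiding intuition is that the hypothesis $\atmCtxtP{\mathcal P}{\Gamma},\non l\models_{\mathcal T}$ records, inside the theory, the fact that the positive context already entails $l$; hence every point in the minor premise where the proof actually consults $l$---necessarily one of $(\Init[1])$, $(\Init[2])$, or a $(\Select)$ focusing on $\non l$---can be rerouted by invoking the cut-admissibility clause of the inconsistency predicate (item~3 of Definition~\ref{def:inconsistency}).

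Most cases are routine. The logical rules ($(\andN)$, $(\orN)$, $(\FA)$, $(\bot^-)$, $(\top^-)$, $(\andP)$, $(\orP)$, $(\EX)$, $(\top^+)$), together with $(\Store)$ when the stored formula is not $\non l$ and with $(\Release)$, propagate directly to the induction hypothesis on the premisses: the side-condition $\atmCtxtP{\cdot}{\cdot},\non l\models_{\mathcal T}$ survives any enlargement of the context or of the polarisation set by upward closure of inconsistency predicates, and $l\notin\UP$ remains polarised under any extension of $\mathcal P$. The two rules are mutually recursive: the $(\Select)$ case of $\cut_1$ appeals to $\cut_2$ on the focused premiss, and the $(\Release)$ case of $\cut_2$ appeals to $\cut_1$ on the unfocused premiss.

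The conceptually interesting cases are the three interactions with the theory. In $(\Init[2])$ for $\cut_1$, the closing condition $\atmCtxtP{\mathcal P}{\Gamma,l}\models_{\mathcal T}$ either equals $\atmCtxtP{\mathcal P}{\Gamma}\models_{\mathcal T}$ outright (when $\non l\in\mathcal P$, so $l\notin\mathcal P$) or combines with the hypothesis via cut-admissibility of inconsistency to yield $\atmCtxtP{\mathcal P}{\Gamma}\models_{\mathcal T}$ (when $l\in\mathcal P$); either way $(\Init[2])$ closes the goal. The $(\Init[1])$ case of $\cut_2$ is treated symmetrically. The delicate subcase of $(\Select)$ in $\cut_1$ is when the selected literal is the fresh $l$, so that $P=\non l$: the $(\Select)$ side-condition then forces $l$ to be $\mathcal P$-negative, hence $\non l$ to be $\mathcal P$-positive, which in turn forces the focused premiss $\DerPos{\Gamma,l}{\non l}{}{\mathcal P}$ to end with $(\Init[1])$ (since $(\Release)$ is unavailable on $\mathcal P$-positive literals), and the same cut-on-$l$-inside-the-theory argument closes the goal via $(\Init[2])$.

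The main obstacle I expect is not conceptual but lies in the induction measure, in the $(\Store)$ case when the stored formula is precisely $\non l$: the premiss becomes $\DerNeg{\Gamma,l,l}{\Delta'}{}{\mathcal P}$, carrying two copies of $l$, so a single application of the induction hypothesis does not suffice. I handle this by first invoking the height-preserving admissibility of contraction to merge the two copies of $l$, then applying the induction hypothesis on the (still strictly smaller) contracted derivation, and finally restoring $\non l$ on the right-hand side via the admissible rule $\weak[r]$.
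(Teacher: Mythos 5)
Your proof is correct and follows essentially the same route as the paper: simultaneous induction on the right premiss, with the cut-admissibility clause of the semantical inconsistency predicate discharging the $(\Init[1])$, $(\Init[2])$ and $(\Select)$-on-$\non l$ cases, and mutual recursion between $\cut_1$ and $\cut_2$ through $(\Select)$ and $(\Release)$. The only (harmless) deviation is your special handling of $(\Store)$ when the stored formula is $\non l$: the paper needs no contraction there, since one simply applies the induction hypothesis with the duplicated $l$ absorbed into the enlarged ambient context $\Gamma,\non B$ (the side-condition holding by weakening of inconsistency) before re-applying $(\Store)$; your contraction-plus-$\weak[r]$ detour is nonetheless valid.
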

\begin{proof}

  By simultaneous induction on the derivation of the right premiss.

  We reduce $\cut_ 1$ by case analysis on the last rule used to prove the right premiss.
	\begin{itemize}
  % Start For Cut 1  B\andN C
  \item ($\andN$)
    \[
    \begin{array}c
  \infer[\cut_1]{\DerNeg {\Gam}{B\andN C, \Delta}{} {\mathcal P}}
    {
      {\atmCtxtP{\mathcal P}\Gam,\non l\models_{\mathcal T}}
	  \quad 
      \infer{\DerNeg {\Gam,l}{B\andN C,\Delta}{} {\mathcal P}}
      {
        \DerNeg {\Gam,l}{B,\Delta}{} {\mathcal P} \quad
        \DerNeg {\Gam,l}{C, \Delta}{} {\mathcal P}
      }
    }
    \\\\ \mbox{reduces to} \qquad
    \infer{\DerNeg {\Gam}{B\andN C,\Delta}{} {\mathcal P}}
    {
      \infer[\cut_1]{\DerNeg {\Gam}{B,\Delta}{} {\mathcal P}}
      {{\atmCtxtP{\mathcal P}\Gam,\non l\models_{\mathcal T}}
	 \quad \DerNeg {\Gam,l}{B,\Delta}{} {\mathcal P}}
      \quad
      \infer[\cut_1]{\DerNeg {\Gam}{C,\Delta}{} {\mathcal P}}
      {{\atmCtxtP{\mathcal P}\Gam,\non l\models_{\mathcal T}}
	 \quad \DerNeg {\Gam,l}{C,\Delta}{} {\mathcal P}}
    }
  \end{array}
  \]
\item ($\orN$)
  % For Cut 1  B\orN C
  \[
  \infer[\cut_1]{\DerNeg {\Gam}{B_1\orN B_2,\Delta}{} {\mathcal P}}
    {
      {\atmCtxtP{\mathcal P}\Gam,\non l\models_{\mathcal T}}
	     \quad 
      \infer{
        \DerNeg {\Gam,l}{B_1\orN B_2,\Delta}{} {\mathcal P}
      }
      { \DerNeg {\Gam,l}{B_1,B_2,\Delta} {} {\mathcal P}   }
    }
    \quad\mbox{reduces to} \qquad
    \infer { \DerNeg {\Gam}{{B_1\orN B_2,\Delta}}{} {\mathcal P} }
    {  \infer[\cut_1]{\DerNeg {\Gam}{B_1, B_2,\Delta}{} {\mathcal P}}
     {
        {\atmCtxtP{\mathcal P}\Gam,\non l\models_{\mathcal T}}
		 \quad \DerNeg {\Gam,l}{B_1, B_2,\Delta}{} {\mathcal P}
      }} 
    \]
 \item ($\FA$)
 % For Cut 1  \FA x B
   \[
    \infer[\cut_1]{\DerNeg {\Gam}{\FA x B,\Delta}{} {\mathcal P}}
    {
      {\atmCtxtP{\mathcal P}\Gam,\non l\models_{\mathcal T}}
	     \quad 
      \infer{
        \DerNeg {\Gam,l}{\FA x B,\Delta}{} {\mathcal P}
      }
      { \DerNeg {\Gam,l}{B,\Delta}{} {\mathcal P}    }
    }
    \qquad\mbox{reduces to} \qquad
    \infer{ \DerNeg {\Gam}{{\FA x B,\Delta}}{} {\mathcal P} }
    {  \infer[\cut_1] {\DerNeg {\Gam}{B,\Delta}{} {\mathcal P}} 
    {
    {\atmCtxtP{\mathcal P}\Gam,\non l\models_{\mathcal T}}
	  \quad \DerNeg {\Gam,l}{B,\Delta}{} {\mathcal P}
      }} 
    \]

 \item ($\Store$) where $B$ is a literal or $\mathcal P$-positive formula. 
 % For Cut 1 for negative literal or positive B
  \[
    \infer[\cut_1]{\DerNeg {\Gam}{B,\Delta}{} {\mathcal P}}{
      {\atmCtxtP{\mathcal P}\Gam,\non l\models_{\mathcal T}}
      \quad 
      \infer{
        \DerNeg {\Gam,l}{B,\Delta}{} {\mathcal P}
      }{
        \DerNeg {\Gam,l,\non B}{\Delta}{} {\polar{\non B}}    
      }
    }
    \qquad\mbox{ reduces to} \qquad
    \infer { \DerNeg {\Gam}{B,\Delta}{} {\mathcal P} }{
      \infer[\cut_1]{\DerNeg {\Gam, \non B}{\Delta}{} {\polar{\non B}}}{
        \atmCtxtP{\polar{\non B}}{\Gam,\non B},\non l\models_{\mathcal T}
        \quad
        \DerNeg {\Gam,l,\non B}{\Delta}{} {\polar{\non B}}
      }
    } 
  \]
  
 We have $\atmCtxtP{\polar{\non B}}{\Gam,\non B},\non l\models_{\mathcal T}$ since ${\atmCtxtP{\mathcal P}{\Gam}, \non l} \subseteq {\atmCtxtP{\polar{\non B}}{\Gam,\non B}, \non l}$ and we assume semantical inconsistency to satisfy weakening.

 \item ($\bot^-$)
 % For Cut 1 for \Del,\bot^-
   \[
    \infer[\cut_1]{\DerNeg {\Gam}{\bot^-,\Delta}{}{\mathcal P}}
    {
      {\atmCtxtP{\mathcal P}\Gam,\non l\models_{\mathcal T}}
    \quad 
      \infer{
        \DerNeg {\Gam,l}{\bot^-,\Delta}{}{\mathcal P}
      }
      { \DerNeg {\Gam,l}{\Delta}{}{\mathcal P}    }
    }
    \qquad\mbox{reduces to} \qquad
    \infer { \DerNeg {\Gam}{{\bot^-,\Delta}}{}{\mathcal P} }
    {  \infer[\cut_1]{\DerNeg {\Gam}{\Delta}{}{\mathcal P}} 
    {     {\atmCtxtP{\mathcal P}\Gam,\non l\models_{\mathcal T}}
       \quad \DerNeg {\Gam,l}{\Delta} {}{\mathcal P}
      }} 
    \]

\item ($\top^-$)
% for \top^-
  \[
    \infer[\cut_1]{\DerNeg {\Gam}{\top^-,\Delta}{}{\mathcal P}}
    {
      {\atmCtxtP{\mathcal P}\Gam,\non l\models_{\mathcal T}}
	    \quad 
      \infer{  \DerNeg {\Gam,l}{\top^-,\Delta} {}{\mathcal P}}{  }
    }
    \qquad\mbox{reduces to} \qquad
    \infer { \DerNeg {\Gam}{\top^-,\Delta}{}{\mathcal P} } {  } 
\]

\item $(\Select)$ where $\non P\in \Gamma,l$ and $P$ is not $\mathcal P$-negative. % For Cut 1   for P

  If $\non P\in\Gamma$,
 \[
    \infer[\cut_1]{\DerNeg {\Gam}{}{}{\mathcal P}}
    {
{\atmCtxtP{\mathcal P}\Gam,\non l\models_{\mathcal T}}
  \quad 
      \infer{
        \DerNeg {\Gam,l}{}{}{\mathcal P}
      }
      { \DerPos {\Gam,l}{P}{}{\mathcal P}   }
    }
    \qquad\mbox{reduces to} \qquad
    \infer { \DerNeg {\Gam}{} {}{\mathcal P} }
    {  \infer[\cut_2]{\DerPos {\Gam}{P}{}{\mathcal P}} {
        {\atmCtxtP{\mathcal P}\Gam, \non l\models_{\mathcal T}}
	    \quad \DerPos {\Gam, l}{P} {}{\mathcal P}
      }} 
  \]

  If $\non P=l$, then as $P$ is not $\mathcal P$-negative and $l\notin\UP$ we get that $\non l$ is $\mathcal P$-positive, so
 \[
    \infer[\cut_1]{\DerNeg {\Gam}{}{}{\mathcal P}}{
      {\atmCtxtP{\mathcal P}\Gam,\non l\models_{\mathcal T}}
      \quad 
      \infer{
        \DerNeg {\Gam,l}{}{}{\mathcal P}
      }{
        \infer{\DerPos {\Gam,l}{\non l}{}{\mathcal P}}{
          \atmCtxtP{\mathcal P}\Gam,l\models_{\mathcal T}
        }
      }
    }
    \qquad\mbox{reduces to} \qquad
    \infer[{{\Init[2]}}]{
      \DerNeg {\Gam}{} {}{\mathcal P} 
    }{
      \atmCtxtP{\mathcal P}\Gam\models_{\mathcal T}
    }
  \]
  since semantical inconsistency admits cuts.

 \item ($\Init[2]$)% For Cut 1   for theory call

   \[
    \infer[\cut_1]{\DerNeg {\Gam}{}{}{\mathcal P}}
    {
      {\atmCtxtP{\mathcal P}\Gam,\non l\models_{\mathcal T}}
      \quad 
      \infer{
        \DerNeg {\Gam,l}{}{}{\mathcal P}
      }
      { \atmCtxtP{\mathcal P}\Gam,l\models_{\mathcal T} 
      }
    }
    \qquad\mbox{reduces to} \qquad
      \infer{\DerNeg {\Gam} {} {}{\mathcal P}}
      {\atmCtxtP{\mathcal P}\Gam\models_{\mathcal T}}
   \]
  since semantical inconsistency admits cuts. 
\end{itemize}

  We reduce $\cut_2$ again by case analysis on the last rule used to prove the right premiss.
\begin{itemize}
 \item ($\andP$) 
 % Start For Cut 2  B\andN C
  \[
  \begin{array}c
  \infer[\cut_2]{\DerPos {\Gam}{B\andP C}{}{\mathcal P}}
    {
      {\atmCtxtP{\mathcal P}\Gam,\non l\models_{\mathcal T}}
	 \quad 
      \infer{
        \DerPos {\Gam,l}{B\andP C}{}{\mathcal P}
      }
      {
        \DerPos {\Gam,l}{B}{}{\mathcal P}
        \quad\DerPos {\Gam,l}{C}{}{\mathcal P}
      }
    }   
    \hfill\strut
    \\\\ \mbox{reduces to}\\\\
    \infer{\DerPos {\Gam}{B\andP C}{}{\mathcal P}}
    {
      \infer[\cut_2]{\DerPos {\Gam}{B}{}{\mathcal P}}
      {
      {\atmCtxtP{\mathcal P}\Gam,\non l\models_{\mathcal T}}
	   \quad \DerPos {\Gam,l}{B}{}{\mathcal P}}
    	  \quad
      \infer[\cut_2]{\DerPos {\Gam}{C}{}{\mathcal P}}
      {
      {\atmCtxtP{\mathcal P}\Gam,\non l\models_{\mathcal T}}
    \quad \DerPos {\Gam,l}{C}{}{\mathcal P}}
    }
    \end{array}
    \]

 \item ($\orP$)
 % For Cut 2  B_1\orN B_2
   \[
    \infer[\cut_2]{\DerPos {\Gam}{B_1\orP B_2}{}{\mathcal P}}
    {
     {\atmCtxtP{\mathcal P}\Gam,\non l\models_{\mathcal T}}
	  \quad 
      \infer{
        \DerPos {\Gam,l}{B_1\orP B_2}{}{\mathcal P}
      }
      { \DerPos {\Gam,l}{B_i}{}{\mathcal P}    }
    }
    \qquad\mbox{reduces to} \qquad
    \infer { \DerPos {\Gam}{B_1\orP B_2}{}{\mathcal P} }
    {  \infer[\cut_2]{\DerPos {\Gam}{B_i}{}{\mathcal P}} 
    {  {\atmCtxtP{\mathcal P}\Gam,\non l\models_{\mathcal T}}
        \quad \DerPos {\Gam,l}{B_i}{}{\mathcal P}
      }} 
\]

\item ($\EX$) 
% For Cut 2  {\EA x B}
  \[
    \infer[\cut_2] {\DerPos {\Gamma}{\EX x B}{}{\mathcal P}}
    {    {\atmCtxtP{\mathcal P}\Gam,\non l\models_{\mathcal T}}
      \quad 
      \infer{\DerPos {\Gam,l}{\EX x B} {}{\mathcal P}}
      { \DerPos {\Gam,l} {\subst B x t}{}{\mathcal P}    }
    }
    \qquad\mbox{reduces to} \qquad
    \infer { \DerPos {\Gam}{\EX x B}{}{\mathcal P} }
    {  \infer[\cut_2]{\DerPos {\Gam} {\subst B x t} {}{\mathcal P}} 
    {
        {\atmCtxtP{\mathcal P}\Gam,\non l\models_{\mathcal T}}
        \quad \DerPos {\Gam,l}{\subst B x t}{}{\mathcal P}
      }} 
    \]

\item ($\top^+$) 
  \[
    \infer[\cut_2]{\DerPos {\Gam}{\top^+}{}{\mathcal P}}
    {
      %\Theory {\atmCtxt\Gam, \non l}
      {\atmCtxtP{\mathcal P}\Gam,\non l\models_{\mathcal T}}
          \quad 
      \infer{  \DerPos {\Gam,l}{\top^+} {}{\mathcal P}} { }
    }
    \qquad\mbox{reduces to} \qquad
    \infer { \DerPos {\Gam}{\top^+}{}{\mathcal P} }
      {  } 
      \]
 \item ($\Release$)  
 % For Cut 2   for negative N
   \[
    \infer[\cut_2]{\DerPos {\Gam}{N}{}{\mathcal P}}
    {
      %\Theory {\atmCtxt\Gam, \non l}
	{\atmCtxtP{\mathcal P}\Gam,\non l\models_{\mathcal T}}
     \quad 
      \infer{\DerPos {\Gam,l}{N} {}{\mathcal P}}
      { \DerNeg {\Gam,l}{N}{}{\mathcal P}    }
    }
    \qquad\mbox{reduces to} \qquad
    \infer { \DerPos {\Gam}{N}{}{\mathcal P} }
    {  \infer[\cut_1]{\DerNeg {\Gam}{N}{}{\mathcal P}} 
    {
        %\Theory {\atmCtxt\Gam, \non l}
        {\atmCtxtP{\mathcal P}\Gam,\non l\models_{\mathcal T}} 
        \quad \DerNeg {\Gam,l}{N}{}{\mathcal P}
      }} 
\]
%  % For Cut 2   for positive literal
%  If $p'\in \Gamma,p$,
%  \[\begin{array}{c}
%    \infer[\cut_2]{\DerPos {\Gam}{p'}}
%    {
%      \Theory {\atmCtxt\Gam, \non p}
%      \quad 
%      \infer{
%        \DerPos {\Gam,p}{p'}
%      }
%      {     }
%    }
%    \qquad\mbox{reduces to} \qquad
%    \infer{ \DerPos {\Gam}{p'} }{}
%    \qquad$if$ \qquad p'\in\Gam\\
%    \qquad\mbox{reduces to} \qquad
%    \infer{ \DerPos {\Gam}{p'} }{\Theory {\atmCtxt\Gam, \non p}}
%    \qquad$if$ \qquad p'=p
%  \end{array}
%  \]
%
 \item ($\Init[1]$)
  $$ \begin{array}{c}
    \infer[\cut_2]{\DerPos {\Gam}{p}{}{\mathcal P}}
    {
   %   \Theory {\atmCtxt\Gam, \non l}
      {\atmCtxtP{\mathcal P}\Gam,\non {l}\models_{\mathcal T}}
      \quad 
      \infer{
        \DerPos {\Gam,l}{p}{}{\mathcal P}
      }
       %\Theory{\atmCtxt\Gam,l,\non {p}}
      {\atmCtxtP{\mathcal P}\Gam,l,\non {p}\models_{\mathcal T}}
      }
    \qquad\mbox{reduces to} \qquad
    %\infer{ \DerPos {\Gam}{p} }{\Theory {\atmCtxt\Gam, \non {p}}}
    \infer{\DerPos{\Gamma  }{p}{}{\mathcal P}} 
    {\atmCtxtP{\mathcal P}\Gam,\non {p}\models_{\mathcal T}}
    \end{array}
  $$
%since weakening gives $\atmCtxt\Gam, \non l,\non {p} \models_\mathcal{T}$ and consistency then gives ${\atmCtxt\Gam, \non {p}} \models_{\mathcal T}$.
since weakening gives $\atmCtxtP{\mathcal P}\Gam, \non l,\non {p} \models_\mathcal{T}$ and semantical inconsistency admits cuts.% that gives ${\atmCtxtP{\mathcal P}\Gam, \non {p}} \models_{\mathcal T}$.

\end{itemize}
\end{proof}

\subsection{Safety and instantiation}

Now we would like to prove the admissibility of other cuts, where both premisses are derived as a judgement of \LKThp. Unfortunately, the expected cut-rules are not necessarily admissible unless we consider the following notion of safety.

\begin{definition}[Safety]\strut
  \label{lem:Pinv}
  \begin{itemize}
  \item
    A pair $(\Gamma,\mathcal P)$ (of a context and a polarisation set) is said to be \Index[safety]{safe} if:\\
    for all $\Gamma'\supseteq\Gamma$, for all semantically consistent sets of literals $\mathcal R$ with $\atmCtxtP{\mathcal P}{\Gamma'}\subseteq\mathcal R\subseteq \atmCtxtP{\mathcal P}{\Gamma'}\cup \closubst\UP$,
    and for all $\mathcal P$-positive literal $l$, if $\mathcal R,\non l\models_\mathcal{T}$ then $\atmCtxtP{\mathcal P}{\Gamma'},\non l\models_\mathcal{T}$.
  \item
    A sequent $\DerPos \Gam A {}{\mathcal P}$ (\resp $\DerNeg \Gam \Delta {}{\mathcal P}$) is said to be \Index{safe}\\ if the pair $(\Gamma,\mathcal P)$ (\resp $((\Gamma,\non \Delta),\mathcal P)$) 
is safe.
  \end{itemize}
\end{definition}

\begin{remark}
Safety is a property that is monotonic in its first argument: if $(\Gamma,\mathcal P)$ is safe and $\Gamma\subseteq\Gamma'$ then $(\Gamma',\mathcal P)$ is safe (this property is built into the definition by the quantification over $\Gamma'$).
\end{remark}

When restricted to safe sequents, the expected cuts are indeed admissible.
In order to show that the safety condition is not very restrictive, we show the following lemma:
\begin{lemma}[Cases of safety]\strut
  \begin{enumerate}
  \item Empty theory:\\
    When the theory is empty (semantical inconsistency coincides with syntactical inconsistency), the safety of $(\Gamma,\mathcal P)$ means that either $\atmCtxtP{\mathcal P}{\Gamma}$ is syntactically inconsistent, or every $\mathcal P$-positive literal that is an instance of a $\mathcal P$-unpolarised literal must be in $\Gamma$\hfill (\ie $\mathcal P\cap\closubst\UP\subseteq\Gamma$).\\
    In the particular case of propositional logic ($\subst l x t=l$ for every $l\in\mathcal L$), every sequent is safe.
  \item Full polarisation:\\
    When every literal is polarised ($\UP=\emptyset$), every sequent (with polarisation set $\mathcal P$) is safe.
  \item No polarisation:\\
    When every literal is unpolarised ($\UP=\mathcal L$), every sequent (with polarisation set $\mathcal P$) is safe.
  \item Safety is an invariant of proof-search:\\
    for every rule of \LKThp, if its conclusion is safe then each of its premisses is safe.
  \end{enumerate}
\end{lemma}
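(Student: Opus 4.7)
My plan starts by reformulating safety as a condition depending only on the set $\atmCtxtP{\mathcal P}{\Gamma}$ and on $\mathcal P$. Since the definition of safety quantifies over all contexts $\Gamma' \supseteq \Gamma$, and any such $\Gamma'$ can be further extended with any $\mathcal P$-positive literals while remaining a legitimate context, the collection $\{\atmCtxtP{\mathcal P}{\Gamma'} : \Gamma' \supseteq \Gamma\}$ is exactly $\{S : \atmCtxtP{\mathcal P}{\Gamma} \subseteq S \subseteq \mathcal P\}$. Safety of $(\Gamma,\mathcal P)$ is therefore equivalent to: for every $S$ with $\atmCtxtP{\mathcal P}{\Gamma} \subseteq S \subseteq \mathcal P$, every semantically consistent $\mathcal R$ with $S \subseteq \mathcal R \subseteq S \cup \closubst\UP$, and every $\mathcal P$-positive $l$, if $\mathcal R, \non l \models_{\mathcal T}$ then $S, \non l \models_{\mathcal T}$. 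An immediate corollary is a monotonicity principle: safety of $(\Gamma_0,\mathcal P)$ implies safety of $(\Gamma_1,\mathcal P)$ whenever $\atmCtxtP{\mathcal P}{\Gamma_0} \subseteq \atmCtxtP{\mathcal P}{\Gamma_1}$, because the set of admissible $S$'s shrinks.

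For part (1), I observe that $\atmCtxtP{\mathcal P}{\Gamma}$, as a subset of the syntactically consistent set $\mathcal P$, is always consistent, so the first disjunct is vacuous and the claim reduces to safety $\Leftrightarrow \mathcal P \cap \closubst\UP \subseteq \Gamma$. For sufficiency, in the empty theory, $\mathcal R, \non l$ being inconsistent with $\mathcal R$ consistent forces $l \in \mathcal R$; if $l \notin \atmCtxtP{\mathcal P}{\Gamma'}$ then $l \in \closubst\UP$, so $l \in \mathcal P \cap \closubst\UP \subseteq \Gamma \subseteq \Gamma'$, placing $l$ in $\atmCtxtP{\mathcal P}{\Gamma'}$. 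For necessity, fix $l \in \mathcal P \cap \closubst\UP$ and instantiate the safety condition with $\Gamma' = \Gamma$ and $\mathcal R = \atmCtxtP{\mathcal P}{\Gamma} \cup \{l\}$ (consistent as a subset of $\mathcal P$): the conclusion forces $l \in \atmCtxtP{\mathcal P}{\Gamma} \subseteq \Gamma$. The propositional case follows since then $\closubst\UP = \UP$ is disjoint from $\mathcal P$. For part (2), $\UP = \emptyset$ forces $\mathcal R = \atmCtxtP{\mathcal P}{\Gamma'}$, trivialising the implication; for part (3), $\UP = \mathcal L$ forces $\mathcal P = \emptyset$, so the quantification over $\mathcal P$-positive $l$ is vacuous.

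For part (4), I perform a case analysis on the rules of \LKThp. For every rule other than $(\Store)$ applied to an unpolarised literal, the polarisation set $\mathcal P$ is unchanged; the premise context only differs from the conclusion context either by decomposing a non-literal formula (keeping $\atmCtxtP{\mathcal P}$ unchanged) or by moving a formula whose negation may be a $\mathcal P$-positive literal (enlarging $\atmCtxtP{\mathcal P}$). Both situations are settled by monotonicity. The main obstacle is $(\Store)$ applied to an unpolarised literal $l$, where the polarisation set shifts from $\mathcal P$ to $\mathcal P, \non l$. Here I argue directly: $\UP$ is closed under negation (since negation commutes with substitution and being unpolarised is symmetric in $l$ and $\non l$), so $\non l \in \closubst\UP$; moreover, for any $\Gamma' \supseteq \Gamma, \non l, \non \Delta$ one has $\atmCtxtP{\mathcal P, \non l}{\Gamma'} = \atmCtxtP{\mathcal P}{\Gamma'} \cup \{\non l\}$ and $\closubst{\UP[\mathcal P, \non l]} \subseteq \closubst\UP$. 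These inclusions show that any $\mathcal R$ admissible in the premise's $(\mathcal P, \non l)$-safety condition is also admissible in the conclusion's $\mathcal P$-safety. The quantified positive literal $p$ is then either $\mathcal P$-positive (in which case the conclusion's safety gives $\atmCtxtP{\mathcal P}{\Gamma'}, \non p \models_{\mathcal T}$, and weakening of inconsistency yields the goal) or $p = \non l$ (in which case $\non p = l$ joins $\non l \in \atmCtxtP{\mathcal P, \non l}{\Gamma'}$, yielding immediate basic inconsistency).
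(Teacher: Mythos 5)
Your proof is correct and follows essentially the same route as the paper's: parts (1)--(3) by directly unfolding the definition, and part (4) by reducing every rule to monotonicity except $(\Store)$ on an unpolarised literal, which you resolve exactly as the paper does---embedding each $(\mathcal P,\non l)$-safety instance into a $\mathcal P$-safety instance via $\non l\in\closubst{\UP}$ and $\UP[\mathcal P,\non l]\subseteq\UP$, then splitting on whether the quantified positive literal is $\non l$ or was already $\mathcal P$-positive. Your preliminary reformulation of safety in terms of $\atmCtxtP{\mathcal P}{\Gamma}$ (which supplies the slightly stronger monotonicity actually needed for rules that replace a non-literal formula by its subformulae) and your observation that the ``syntactically inconsistent'' disjunct in part (1) is vacuous because $\atmCtxtP{\mathcal P}{\Gamma}\subseteq\mathcal P$ are both correct refinements that the paper leaves implicit.
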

\begin{proof}
  \begin{enumerate}
  \item In the case of the empty theory, if $\mathcal R$ is consistent then $\mathcal R,\non l\models_\mathcal{T}$ means that $l\in\mathcal R$, so either $l\in\atmCtxtP{\mathcal P}{\Gamma'}$ or $l\in\closubst\UP$; that this should imply $\atmCtxtP{\mathcal P}{\Gamma'},\non l\models_\mathcal{T}$ means that $l\in\atmCtxtP{\mathcal P}{\Gamma'}$ anyway, unless $\atmCtxtP{\mathcal P}{\Gamma'}$ is syntactically inconsistent. In particular for $\Gamma'=\Gamma$.\\
In the case of propositional logic, there are no $\mathcal P$-positive literals that are in $\closubst\UP=\UP$, so every sequent is safe. 
  \item When every literal is polarised ($\UP=\emptyset$), then $\mathcal R=\atmCtxtP{\mathcal P}{\Gamma'}$ and the result is trivial.
  \item When every literal is unpolarised ($\UP=\mathcal L$), the property holds trivially.
  \item For every rule of \LKThp, if its conclusion is safe then each of its premisses is safe.\\
    Every rule is trivial (considering monotonicity) except $(\Store)$, for which it suffices to show:\\
    Assume $(\Gamma,\mathcal P)$ is safe and $A\in\Gamma$; then $(\Gamma,(\polar A))$ is safe.\\
    Consider $\Gamma'\supseteq\Gamma$ and $\mathcal R$ such that 
    $\atmCtxtP{\polar A}{\Gamma'}\subseteq\mathcal R\subseteq \atmCtxtP{\polar A}{\Gamma'}\cup \closubst{\UP[\polar A]}$.
    \begin{itemize}
    \item
      If $A\in\UP$, then $\polar A=\mathcal P,A$ and the inclusions can be rewritten as
      \[\atmCtxtP{\mathcal P}{\Gamma'},A\ \subseteq\mathcal R\subseteq \atmCtxtP{\mathcal P}{\Gamma'}, A\ \cup\closubst{\UP[\mathcal P, A]}\]
      Since $\UP[\mathcal P, A]\subseteq\UP$ we have $\closubst{\UP[\mathcal P, A]}\subseteq\closubst\UP$ and therefore 
      \[\atmCtxtP{\mathcal P}{\Gamma'} \subseteq\mathcal R\subseteq \atmCtxtP{\mathcal P}{\Gamma'}\cup\closubst\UP\]
      Hence, $\mathcal R$ is a set for which safety of $(\Gamma,\mathcal P)$ implies $\atmCtxtP{\mathcal P}{\Gamma'},\non l\models_\mathcal{T}$ for every $l\in\mathcal P$ such that $\mathcal R,\non l\models_\mathcal{T}$.\\
      For $l=A$, then trivially $\atmCtxtP{\mathcal P,A}{\Gamma'},\non l\models_\mathcal{T}$ as $A\in\Gamma'$.
    \item If $A\not\in\UP$, then $\polar A=\mathcal P$ and the result is trivial.
    \end{itemize}
  \end{enumerate}

\end{proof}

Now cut-elimination in presence of quantifiers relies heavily on the fact that, if a proof can be constructed with a free variables $x$, then it can be replayed when $x$ is instantiated by a particular term throughout the proof.
In a polarised world, this is made difficult by the fact that a polarisation set $\mathcal P$ (\ie a set that is syntactically consistent) might not remain a polarisation set after instantiation (\ie $\subst{\mathcal P}x t$ might not be syntactically consistent: imagine $p(x,3)$ is $\mathcal P$-positive and $p(3,x)$ is $\mathcal P$-negative, then after substituting $3$ for $x$, what is the polarity of $p(3,3)$?). Hence, polarities will have to be changed and therefore the exact same proof may not be replayed, but under the hypothesis that the substituted sequent is safe, we manage to reconstruct \emph{some} proof. The first step to prove this is the following lemma: 

\begin{lemma}[Admissibility of instantiation with the theory]\label{lem:adminstant}
  Let $\mathcal P$ be a polarisation set such that $x\not\in\FV{\mathcal P}$, let $l_1,\ldots,l_n$ be $n$ literals, $\mathcal A$ be a set of literals, $x$ be a variable and $t$ be a term with $x\notin\FV t$.

  Let $\mathcal P_i\eqdef \polar[\polar[\polar{l_1}]{\ldots}]{l_i}$ with $\mathcal P_0\eqdef \mathcal P$, and similarly let $\mathcal P'_i\eqdef \polar[\polar[\polar{\subst{l_1}xt}]{\ldots}]{\subst{l_i}xt}$ with $\mathcal P'_0\eqdef \mathcal P$.

Assume
  \begin{itemize}
  \item for all $i$ such that $1\leq i\leq n$, we have $l_i\in\Gamma$;
  \item $(\subst \Gam x t,\mathcal P'_n)$ is safe;
  \item $\atmCtxtP{\mathcal P_n}\Gam,\mathcal A\models_{\mathcal T}$.
  \end{itemize}

Then either
$\atmCtxtP{\mathcal P'_n}\Gam,\subst{\mathcal A}xt\models_{\mathcal T}$ or 
$\DerNeg {\subst \Gam xt} {}{} {\mathcal P'_n}$ is derivable in \LKThp.
\end{lemma}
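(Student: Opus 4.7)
The plan is to begin by transporting the hypothesised inconsistency into the substituted world via condition~4 of Definition~\ref{def:inconsistency} (stability under instantiation), yielding $\subst{(\atmCtxtP{\mathcal P_n}{\Gamma})}{x}{t}, \subst{\mathcal A}{x}{t}\models_{\mathcal T}$. Because every $l\in\atmCtxtP{\mathcal P_n}{\Gamma}$ lies in $\Gamma$, each instance $l'\eqdef\subst l x t$ on the left-hand side lies in $\subst\Gamma x t$; so the whole argument reduces to showing that this left-hand side can either be ``shrunk'' down to $\atmCtxtP{\mathcal P'_n}{\Gamma}$ (yielding the first disjunct) or that the sequent $\DerNeg{\subst\Gamma x t}{}{}{\mathcal P'_n}$ can be constructed directly (the second).

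The main step is a case analysis on the origin of each $l\in\atmCtxtP{\mathcal P_n}{\Gamma}$. If $l\in\mathcal P$ then $x\notin\FV l$ forces $l'=l\in\atmCtxtP{\mathcal P'_n}{\Gamma}$ since $\mathcal P\subseteq\mathcal P'_n$. Otherwise $l$ was added at some step as $l_i$, and we split on the $\mathcal P'_n$-polarity of $l'=\subst{l_i}{x}{t}$: the harmless sub-cases are when $l'$ is $\mathcal P'_n$-unpolarised (so $l'\in\closubst{\UP[\mathcal P'_n]}$) or $\mathcal P'_n$-positive (lying in $\atmCtxtP{\mathcal P'_n}{\subst\Gamma x t}$). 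The critical sub-case is when $l'$ is $\mathcal P'_n$-negative: using that $l_i$ was $\mathcal P_{i-1}$-unpolarised when added, one shows $\non{l'}$ cannot live in $\mathcal P$ alone, so it must coincide with some $\subst{l_j}{x}{t}$, whence $\non{l'}\in\subst\Gamma x t$ makes $\subst\Gamma x t$ syntactically inconsistent, and the sequent closes by $\Iden[2]$ (Lemma~\ref{LadamIden}), delivering the second disjunct.

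Once the syntactically-inconsistent branch is set aside, we have $\subst{(\atmCtxtP{\mathcal P_n}{\Gamma})}{x}{t}\subseteq\atmCtxtP{\mathcal P'_n}{\Gamma}\cup\closubst{\UP[\mathcal P'_n]}$. The safety of $(\subst\Gamma x t,\mathcal P'_n)$ then peels off the $\closubst{\UP[\mathcal P'_n]}$ contributions one at a time, iterated across the literals of $\subst{\mathcal A}{x}{t}$ by means of the cut-admissibility condition~3 of inconsistency predicates, to leave $\atmCtxtP{\mathcal P'_n}{\Gamma},\subst{\mathcal A}{x}{t}\models_{\mathcal T}$, the first disjunct. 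If at some intermediate step the current set of literals is already semantically inconsistent, we instead construct the derivation of $\DerNeg{\subst\Gamma x t}{}{}{\mathcal P'_n}$ by polarising the offending $\mathcal P'_n$-unpolarised literals (admissibility of $\Pol$ from the previous section) and closing with $(\Init[2])$. The main obstacle is precisely this last phase: orchestrating the safety applications and the theory's cut-admissibility so that the unpolarised contributions disappear in a controlled manner, while every breakdown of the elimination translates cleanly into a concrete \LKThp-derivation, and tracking whether each case-B literal happens to lie in $\Gamma$ (thus contributing to $\atmCtxtP{\mathcal P'_n}{\Gamma}$) or not (so that it must instead be routed through $\closubst{\UP[\mathcal P'_n]}$ and safety).
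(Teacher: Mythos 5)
There is a genuine gap in your critical sub-case. You claim that if $l'=\subst{l_i}xt$ is $\mathcal P'_n$-negative then, because $l_i$ was $\mathcal P_{i-1}$-unpolarised, $\non{l'}$ cannot lie in $\mathcal P$ and must therefore coincide with some $\subst{l_j}xt$, forcing syntactic inconsistency of $\subst\Gamma xt$. This inference is false: $l_i\in\UP[\mathcal P_{i-1}]$ constrains the polarity of $l_i$, not of its instances (this is exactly the point of the remark following the definition of polarities --- take $p(x)\in\UP$ with $p(3)$ $\mathcal P$-negative and $t=3$). So $\subst{l_i}xt$ can perfectly well be $\mathcal P$-negative with $\subst\Gamma xt$ syntactically consistent, and then $\Iden[2]$ does not apply. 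Your fallback for the ``peeling'' phase also fails on this literal: $\subst{l_i}xt$ lies in $\closubst{\UP[\mathcal P'_n]}$ only as an \emph{instance} of the unpolarised $l_i$, while being itself negatively polarised, so it cannot be added to the polarisation set by $(\Pol)$ and discharged by $(\Init[2])$.

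This is precisely the case for which the safety hypothesis is in the statement, and it is where the paper's proof does its real work. The correct case split is: either every set $\atmCtxtP{\mathcal P'_n}{\subst\Gamma xt},\subst{\non{l'_j}}xt$ (for the non-$\mathcal P'_n$-positive instances $\subst{l'_j}xt$) is semantically inconsistent, in which case cut-admissibility alone (no safety) removes all of them and yields the first disjunct; or some such set $\mathcal R$ is consistent, in which case one shows (using syntactic consistency of $\subst\Gamma xt$, or else $\Iden[2]$) that $\subst{l_i}xt$ is $\mathcal P$-negative, hence $\subst{\non{l_i}}xt$ is $\mathcal P'_n$-positive, proves $l_i\in\UP[\mathcal P'_n]$ so that $\mathcal R\subseteq\atmCtxtP{\mathcal P'_n}{\subst\Gamma xt}\cup\closubst{\UP[\mathcal P'_n]}$, and applies safety to $\mathcal R$ and $\subst{\non{l_i}}xt$ to obtain $\atmCtxtP{\mathcal P'_n}{\subst\Gamma xt},\subst{l_i}xt\models_{\mathcal T}$; the sequent $\DerNeg{\subst\Gamma xt}{}{}{\mathcal P'_n}$ is then closed by $(\Select)$ on $\subst{\non{l_i}}xt$ followed by $(\Init[1])$, using $\subst{l_i}xt\in\subst\Gamma xt$. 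Your proposal uses safety only to ``peel off'' unpolarised contributions --- a job that cut-admissibility already does when it is possible at all --- and never produces the $(\Select)$/$(\Init[1])$ derivation that the second disjunct requires in the consistent case.
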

\begin{proof}
  Let $\{l'_1,\ldots,l'_m\}$ be the set of literals $\{l\in\atmCtxtP{\mathcal P_n}\Gam\sep \subst lxt\mbox{ is not $\mathcal P'_n$-positive}\}$.
  We have 
  \[\subst{\atmCtxtP{\mathcal P_n}\Gam}xt\subseteq\atmCtxtP{\mathcal P'_n}{\subst\Gam xt},\subst{l'_1}xt,\ldots,\subst{l'_m}xt\]

  Since $\atmCtxtP{\mathcal P_n}\Gam,\mathcal A\models_{\mathcal T}$ and semantical inconsistency is stable under instantiation and weakening, we have
  $\atmCtxtP{\mathcal P'_n}{\subst\Gam xt},\subst{l'_1}xt,\ldots,\subst{l'_m}xt,\subst{\mathcal A}xt\models_{\mathcal T}$.
  \begin{itemize}
  \item
    If all of the sets $(\atmCtxtP{\mathcal P'_n}{\subst\Gam xt},\subst{\non {l'_j}}xt)_{1\leq j\leq n}$ are semantically inconsistent, then from 
    \[\atmCtxtP{\mathcal P'_n}{\subst\Gam xt},\subst{l'_1}xt,\ldots,\subst{l'_m}xt,\subst{\mathcal A}xt\models_{\mathcal T}\]
    we get $\atmCtxtP{\mathcal P'_n}{\subst\Gam xt},\subst{\mathcal A}xt\models_{\mathcal T}$, since semantically inconsistency admits cuts.
  \item
    Otherwise, there is some $l'_j\in\atmCtxtP{\mathcal P_n}\Gam$ such that $\subst {l'_j}xt$ is not $\mathcal P'_n$-positive and such that $\mathcal R\eqdef\atmCtxtP{\mathcal P'_n}{\subst\Gam xt},\subst{\non {l'_j}}xt$ is semantically consistent.

    Notice that $l'_j$ is not $\mathcal P$-positive, otherwise $\subst {l'_j}xt$ would also be $\mathcal P$-positive (since $x\notin\FV{\mathcal P}$), so $l'_j=l_i$ for some $i$ such that $1\leq i\leq n$, with $l_i\in\UP[\mathcal P_{i-1}]$.

    Now, if $\subst \Gam xt$ is syntactically inconsistent, we build
    \[    
    \infer[{[{\Id[2]}]}]{
      \DerNeg {\subst \Gam xt} {}{} {\mathcal P'_n}
    }{}
    \]

    If on the contrary $\subst \Gam xt$ is syntactically consistent, then $\{\subst{l_1}xt,\ldots,\subst{l_n}xt\}$ is also syntactically consistent (as every element is assumed to be in $\subst \Gam xt$).

    Therefore, $\subst {l_i}xt$ must be $\mathcal P$-negative, otherwise it would ultimately be $\mathcal P'_{n}$-positive.\\
    So $\subst {\non l_i}xt$ is $\mathcal P$-positive, and ultimately $\mathcal P'_{n}$-positive.

    Now $(\subst \Gam xt, \mathcal P'_n)$ is assumed to be safe, so we want to apply this property to $\Gamma'\eqdef\Gam$, to the semantically consistent set $\mathcal R$, and to the $\mathcal P'_{n}$-positive literal $\subst {\non l_i}xt$, so as to conclude
    \[\atmCtxtP{\mathcal P'_n}{\subst\Gam xt},\subst{l_i}xt\models_{\mathcal T}\]
    To apply the safety property, we note that that $\mathcal R,\subst{l_i}xt\models_{\mathcal T}$ and that
    \[\atmCtxtP{\mathcal P'_n}{\subst\Gam xt}\subseteq\mathcal R\subseteq\atmCtxtP{\mathcal P'_n}{\subst\Gam xt}\cup\closubst{\UP[\mathcal P'_n]}\]
    provided we have $l_i\in\UP[\mathcal P'_n]$.

    In order to prove that proviso, first notice that $l_i\in\UP[\mathcal P]$, since $l_i\in\UP[\mathcal P_i]$. Now we must have $x\in\FV{l_i}$, otherwise $l_i=\subst{l_i}xt$ and we know that $\subst{l_i}xt$ is $\mathcal P$-negative. Since none of the literals $(\subst{l_k}xt)_{1\leq k\leq n}$ have $x$ as a free variable, we conclude the proviso $l_i\in\UP[\mathcal P'_n]$.

    Therefore safety ensures $\atmCtxtP{\mathcal P'_n}{\subst\Gam xt},\subst{l_i}xt\models_{\mathcal T}$
    and we can finally build
    \[    
    \infer[{[\Select]}]{
      \DerNeg {\subst \Gam xt} {}{} {\mathcal P'_n}
    }{
      \infer[{[{\Init[1]}]}]
            {\DerPos {\subst \Gam xt} {\subst{\non l_i}xt}{} {\mathcal P'_n}}
            {}
    }
    \]
    as $\subst{\non l_i}xt$ is $\mathcal P'_n$-positive.
  \end{itemize}
\end{proof}

We can finally state and prove the admissibility of instantiation:

\begin{lemma}[Admissibility of instantiation]
\label{Ladm}
  Let $\mathcal P$ be a polarisation set such that $x\not\in\FV{\mathcal P}$, let $l_1,\ldots,l_n$ be $n$ literals, $x$ be a variable and $t$ be a term with $x\notin\FV t$.

  Let $\mathcal P_i\eqdef \polar[\polar[\polar{l_1}]{\ldots}]{l_i}$ with $\mathcal P_0\eqdef \mathcal P$, and similarly let $\mathcal P'_i\eqdef \polar[\polar[\polar{\subst{l_1}xt}]{\ldots}]{\subst{l_i}xt}$ with $\mathcal P'_0\eqdef \mathcal P$.

  The following rules are admissible in \LKThp:\footnote{The admissibility of $(\Inst[f])$ means that if $\DerPos \Gam B {}{\mathcal P_n}$ is derivable in \LKThp\ then either $\DerPos {\subst \Gam x t}{\subst B x t} {}{\mathcal P'_n}$ or $\DerNeg {\subst \Gam x t}{} {}{\mathcal P'_n}$ is derivable in \LKThp.}
  \[
  \infer[{[({\Inst})]}]{
    \DerNeg {\subst \Gam x t}{\subst \Del x t}{}{\mathcal P'_n}
  }{
    \DerNeg \Gam \Del {}{\mathcal P_n}
  }
  \qquad
  \infer[{[({\Inst[f]})]}]{
    \DerPos {\subst \Gam x t}{\subst B x t} {}{\mathcal P'_n}
    \mbox{ or }
    \DerNeg {\subst \Gam x t}{} {}{\mathcal P'_n}    
  }{
    \DerPos \Gam B {}{\mathcal P_n}
  }
  \]
  where we assume
  \begin{itemize}
  \item for all $i$ such that $1\leq i\leq n$, we have $l_i\in\Gamma$;
  \item $\DerNeg {\subst \Gam x t}{\subst \Del x t}{}{\mathcal P'_n}$ is safe in $(\Inst)$;
  \item $({\subst \Gam x t},{\mathcal P'_n})$ is safe in $(\Inst[f])$.
  \end{itemize}
\end{lemma}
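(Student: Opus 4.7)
The plan is to proceed by simultaneous induction on the height of the derivations of $\DerNeg \Gamma \Delta {}{\mathcal P_n}$ and $\DerPos \Gamma B {}{\mathcal P_n}$, with case analysis on the last rule applied. Preservation of safety from conclusion to premisses within the substituted derivation follows from part~4 of the preceding Cases-of-safety lemma, so the safety hypothesis propagates along the induction automatically.

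For most rules substitution commutes transparently. The asynchronous connective rules ($\andN$, $\orN$, $\bot^-$, $\top^-$) and the synchronous connective rules ($\andP$, $\orP$, $\top^+$) decompose a top connective that is unaffected both by substitution and by the passage from $\mathcal P_n$ to $\mathcal P'_n$; I apply the induction hypothesis to each premiss and reconstruct the same rule on the substituted side. For $(\FA)$ and $(\EX)$ I $\alpha$-rename the bound variable to be fresh for $t$, $\mathcal P'_n$ and the substituted context, which preserves the eigenvariable side-condition of $(\FA)$; for $(\EX)$ I substitute through the witness. For $(\Store)$ of a formula $A$, I invoke the induction hypothesis at the extended list $l_1,\ldots,l_n,\non A$; the next polarisation sets are then exactly $\mathcal P_{n+1}=\polar{\non A}$ and $\mathcal P'_{n+1}=\polar{\subst{\non A}xt}$, which matches the $(\Store)$ rule applied in the substituted derivation.

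For $(\Init[1])$ and $(\Init[2])$ I invoke Lemma~\ref{lem:adminstant}, which was designed precisely to deliver the required dichotomy. For $(\Init[2])$ I obtain either the substituted theory condition (and reapply $(\Init[2])$) or directly the derivation $\DerNeg{\subst\Gamma xt}{}{}{\mathcal P'_n}$. For $(\Init[1])$ on a $\mathcal P_n$-positive literal $l$, I apply that lemma with $\mathcal A=\{\non l\}$; in the branch giving $\atmCtxtP{\mathcal P'_n}{\subst\Gamma xt},\subst{\non l}xt\models_{\mathcal T}$ I apply $(\Init[1])$ directly when $\subst l xt$ is still $\mathcal P'_n$-positive, and otherwise close with a $(\Release)$--$(\Store)$--$(\Init[2])$ combination, the theory condition being inherited by weakening of semantical inconsistency.

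The main obstacles are the $(\Release)$ and $(\Select)$ cases, in which the focused literal may have its polarity shifted under substitution in a way that invalidates the rule's side-condition; note that compound formulae preserve their polarity under substitution, so this difficulty is confined to literals. When the polarity shift is benign (the focused formula remains in the admissible polarity class), I simply apply the induction hypothesis and reapply the rule. When it is adversarial, I exploit the second alternative $\DerNeg{\subst\Gamma xt}{}{}{\mathcal P'_n}$ of $(\Inst[f])$: combining the negative sequent obtained by induction with the invertibilities of $(\Store[=])$ and $(\Pol)$ and the admissible rules $(\Iden[2])$, $(\weak)$ and $(\contr)$ established earlier, I reroute the proof through $(\Init[2])$ or an appropriate $(\Select)$-closure against the literal that has moved into the context. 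It is precisely in these polarity-shift cases that the two-alternative conclusion of $(\Inst[f])$ is indispensable, and making the combinatorics of polarity classes work out correctly is the technically most delicate part of the proof.
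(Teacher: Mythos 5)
Your overall architecture matches the paper's: induction on the derivation, Lemma~\ref{lem:adminstant} for the $(\Init[1])$ and $(\Init[2])$ cases, $(\Store)$ handled by extending the list to $l_1,\ldots,l_n,\non A$, and special treatment of $(\Release)$ and $(\Select)$ when a literal changes polarity under substitution. Your repair of $(\Init[1])$ when $\subst pxt$ loses positivity (via $(\Release)$--$(\Store)$--$(\Init[2])$) is sound, and your handling of the adversarial $(\Select)$ case is essentially the paper's use of the admissible $(\Select[-])$ rule.

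The gap is in the adversarial $(\Release)$ case, where $N$ is a literal $l$ that is not $\mathcal P_n$-positive but $\subst lxt$ is $\mathcal P'_n$-positive. You propose to fall back on the second disjunct and close by rerouting through $(\Init[2])$ or a $(\Select)$-closure using only $(\Store[=])$, $(\Pol)$, $(\Iden[2])$, $(\weak)$ and $(\contr)$. None of these suffice when $\mathcal R\eqdef\atmCtxtP{\mathcal P'_n}{\subst\Gam xt},\subst lxt$ is semantically consistent and $\subst lxt\notin\subst\Gam xt$: the induction hypothesis, applied to the $(\Store)$ premiss sitting under the $(\Release)$, only yields $\DerNeg{\subst\Gam xt,\subst{\non l}xt}{}{}{\mathcal P'_n}$, and discharging the extra $\subst{\non l}xt$ by $\cut_1$ needs exactly $\mathcal R\models_{\mathcal T}$, which fails in this subcase; while producing the first disjunct forces an $(\Init[1])$ on $\subst lxt$, whose side-condition $\atmCtxtP{\mathcal P'_n}{\subst\Gam xt},\subst{\non l}xt\models_{\mathcal T}$ cannot be conjured by structural manipulations. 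This is precisely where the safety hypothesis must be \emph{used}, not merely propagated: one applies safety of $(\subst\Gam xt,\mathcal P'_n)$ to the consistent set $\mathcal R$, which in turn requires showing $\subst lxt\in\closubst{\UP[\mathcal P'_n]}$ via the free-variable argument ($l\in\UP[\mathcal P]$, $x\in\FV l$, and $x\notin\FV{\subst{l_k}xt}$ for all $k$, hence $l\in\UP[\mathcal P'_n]$). Since this step is the sole reason the safety hypothesis appears in the statement, its absence is a genuine hole rather than an omitted routine detail.
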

\begin{proof}
  By induction on the derivation of the premiss.

%  First, notice that for all $i$ such that $0\leq i\leq n$, and all $l\in\UP[\mathcal P_i]$, we have $\subst l xt\in\UP[\mathcal P'_i]$ (by induction on $i$).

  %% Second, notice the following lemma:
  
  %% If $\atmCtxtP{\mathcal P_n}\Gam,l\models_{\mathcal T}$ then $\atmCtxtP{\mathcal P'_n}{\subst \Gam xt},\subst lxt\models_{\mathcal T}$.

  %% Note that if $\subst{\mathcal P}x t$ happens to be syntactically consistent, then $\mathcal P'=\subst{\mathcal P}x t$.
  %% Also note that $\UP[\mathcal P']=\UP[\subst {\mathcal P} x t]$, and if $\subst l x t\in\UP[\mathcal P']$ then $l\in\UP$.

  %% In all the following cases, we assume that $\mathcal P'$ is a polarisation set such that $\mathcal P'\subseteq\subst{\mathcal P}x t\subseteq\mathcal P'\cup\non{\mathcal P'}$.

  \begin{itemize}
  \item ($\andN$),($\orN$),($\FA$),($\bot^-$),($\top^-$),($\andP$),($\orP$),($\EX$),($\top^+$)\\
    These rules are straightforward as the polarisation set is not involved.
  \item ($\Store$)
    We assume
    \[
    \infer{\DerNeg \Gam {A,\Del} {} {\mathcal P_n}}{
      \DerNeg {\Gam,\non A} {\Del} {} {\polar[\mathcal P_n]{\non A}}
    }
    \]
    where $A$ is a literal or is $\mathcal P_n$-positive.

    Using the induction hypothesis on the premiss we can build
    \[
    \infer{\DerNeg {\subst \Gam xt} {\subst Axt,\subst\Del xt} {} {\mathcal P'_n}}{
      \DerNeg {\subst\Gam xt,\subst{\non A}xt} {\subst\Del xt} {} {\polar[\mathcal P'_n]{\subst{\non A}xt}}
    }
    \]
    since $\subst Axt$ is a literal or is $\mathcal P'_n$-positive.

    \item $(\Select)$
      We assume
      \[
      \infer{\DerNeg {\Gam,\non P} {}{} {\mathcal P_n}}{
        \DerPos {\Gam,\non P} {P} {} {\mathcal P_n}
      }
      \]
      where $P$ is not $\mathcal P_n$-negative.

      If $\subst P x t$ is not $\mathcal P'_n$-negative, then we can apply the induction hypothesis and build
      \[
      \infer{\DerNeg {\subst\Gam xt,\subst{\non P}xt} {}{} {\mathcal P'_n}}{
        \DerPos {\subst\Gam xt,\subst{\non P}xt} {\subst Pxt} {} {\mathcal P'_n}
      }
      \]
      Otherwise, $\subst Pxt$ is a $\mathcal P'_n$-negative literal and we can do the same as above with the $(\Select[-])$ rule instead of $(\Select)$.

    \item $(\Init[2])$
      We assume
      \[    
      \infer{
        \DerNeg {\Gam} {}{} {\mathcal P_n}
      }{
        \atmCtxtP{\mathcal P_n}\Gam\models_{\mathcal T}
      }
      \]

      We use Lemma~\ref{lem:adminstant} with $\mathcal A\eqdef\emptyset$, since we know $\atmCtxtP{\mathcal P_n}\Gam\models_{\mathcal T}$.

      If we get $\atmCtxtP{\mathcal P'_n}{\subst\Gam xt}\models_{\mathcal T}$, we build a proof with the same rule $(\Init[2])$:
      \[    
      \infer{
        \DerNeg {\subst \Gam xt} {}{} {\mathcal P'_n}
      }{
        \atmCtxtP{\mathcal P'_n}{\subst\Gam xt}\models_{\mathcal T}
      }
      \]
      If not, we directly get a proof of $\DerNeg {\subst \Gam xt} {}{} {\mathcal P'_n}$.

    \item $(\Init[1])$
      We assume
      \[    
      \infer{
        \DerPos {\Gam} {p}{} {\mathcal P_n}
      }{
        \atmCtxtP{\mathcal P_n}\Gam,\non p\models_{\mathcal T}
      }
      \]
      where $p$ is $\mathcal P_n$-positive.

      We use Lemma~\ref{lem:adminstant} with $\mathcal A\eqdef\{p\}$, since we know $\atmCtxtP{\mathcal P_n}\Gam,\non p\models_{\mathcal T}$.

      If we get $\atmCtxtP{\mathcal P'_n}{\subst\Gam xt},\subst{\non p}xt\models_{\mathcal T}$, we build a proof with the same rule $(\Init[1])$:
      \[    
      \infer{
        \DerPos {\subst \Gam xt} {\subst pxt}{} {\mathcal P'_n}
      }{
        \atmCtxtP{\mathcal P'_n}{\subst\Gam xt},\subst{\non p}xt\models_{\mathcal T}
      }
      \]
      If not, we directly get a proof of $\DerNeg {\subst \Gam xt} {}{} {\mathcal P'_n}$.

    \item $(\Release)$
      We assume
      \[
      \infer{\DerPos {\Gam} {N}{} {\mathcal P_n}}{
        \DerNeg {\Gam} {N} {} {\mathcal P_n}
      }
      \]
      where $N$ is not $\mathcal P_n$-positive.

      If $\subst N x t$ is not $\mathcal P'_n$-positive, then we can apply the induction hypothesis and build
      \[
      \infer{\DerPos {\subst\Gam xt} {\subst Nxt}{} {\mathcal P'_n}}{
        \DerNeg {\subst\Gam xt} {\subst Nxt} {} {\mathcal P'_n}
      }
      \]
      Otherwise, $N$ is a literal $l$ that is not $\mathcal P_n$-positive, but such that $\subst lxt$ is $\mathcal P'_n$-positive.
      \begin{itemize}
      \item
        If $\atmCtxtP{\mathcal P'_n}{\subst\Gam xt},\subst{l}xt\models_{\mathcal T}$, then we build
        \[    
        \infer[{[{\cut_1}]}]{
          \DerNeg {\subst \Gam xt} {}{} {\mathcal P'_n}
        }{
          \atmCtxtP{\mathcal P'_n}{\subst\Gam xt},\subst{l}xt\models_{\mathcal T}
          \qquad
          \DerNeg {\subst\Gam xt,\subst{\non l}xt}{}{} {\mathcal P'_n} 
        }
        \]
        where the right premiss is proved as follows:

        Notice that the assumed derivation of $\DerNeg {\Gam} {l} {}
        {\mathcal P_n}$ necessarily contains a sub-derivation
        concluding $\DerNeg {\Gam,\non l} {} {} {\polar[\mathcal
            P_n]{\non l}}$, and applying the induction hypothesis on
        this yields a derivation of $\DerNeg {\subst\Gam
          xt,\subst{\non l}xt}{}{} {\mathcal P'_n}$.
      %% \item
      %%   If $\subst \Gam xt$ is syntactically inconsistent, we build
      %%   \[    
      %%   \infer[{[{\Id[2]}]}]{
      %%     \DerNeg {\subst \Gam xt} {}{} {\mathcal P'_n}
      %%   }{}
      %%   \]
      \item Assume now that $\mathcal R\eqdef \atmCtxtP{\mathcal P'_n}{\subst\Gam xt},\subst{l}xt$ is semantically consistent.
        %% and $\subst \Gam xt$ is syntactically consistent.
        %% Then again $\{\subst{l_1}xt,\ldots,\subst{l_n}xt\}$ is also syntactically consistent (as every element is assumed to be in $\subst \Gam xt$).
        We build
        \[    
        \infer[{[{\Init[1]}]}]{
          \DerPos {\subst \Gam xt} {\subst{l}xt}{} {\mathcal P'_n}
        }{}
        \]
        and we have to prove the side-condition $\atmCtxtP{\mathcal P'_n}{\subst\Gam xt},\subst{\non l}xt\models_{\mathcal T}$.

        This is trivial if $\subst{l}xt\in\subst\Gam xt$ (as $\subst{l}xt$ is $\mathcal P'_n$-positive).

        If on the contrary $\subst{l}xt\notin\subst\Gam xt$, then we get it from the assumed safety of $(\subst \Gam xt, \mathcal P'_n)$, applied to $\Gamma'\eqdef\Gam$, to the semantically consistent set $\mathcal R$, and to the $\mathcal P'_{n}$-positive literal $\subst {l}xt$.
        To apply the safety property, we note that $\mathcal R,\subst{\non l}xt\models_{\mathcal T}$ and that
        \[\atmCtxtP{\mathcal P'_n}{\subst\Gam xt}\subseteq\mathcal R\subseteq\atmCtxtP{\mathcal P'_n}{\subst\Gam xt}\cup\closubst{\UP[\mathcal P'_n]}\]
        provided we have $\subst{l}xt\in\closubst{\UP[\mathcal P'_n]}$.

        We prove that $l\in\UP[\mathcal P'_n]$ as follows:\\ First
        notice that $l\in\UP[\mathcal P]$, otherwise $l$ would be
        $\mathcal P$-negative and so would be $\subst{l}xt$ (since
        $x\notin\FV{\mathcal P}$).  Then notice that $\subst{l}xt$
        must be $\mathcal P$-positive, since it is $\mathcal
        P'_n$-positive but $\subst{l}xt\notin\subst\Gam xt$.
        Therefore $l\neq\subst{l}xt$, so $x\in\FV l$, and finally we
        get $l\in\UP[\mathcal P'_n]$, since none of the literals
        $(\subst{l_k}xt)_{1\leq k\leq n}$ have $x$ as a free variable.
      \end{itemize}
  \end{itemize}

\end{proof}

\subsection{More general cuts}

\begin{theorem}[$\cut_3$, $\cut_4$ and $\cut_5$]
  The following rules are admissible in \LKThp:\footnote{The admissibility of $\cut_5$ means that if $\DerNeg \Gam {N}{}{\mathcal P}$ and $\DerPos {\Gam,N}{B}{}{ \polar N}$ are derivable in \LKThp\ then either $\DerPos {\Gam}{B}{}{{\mathcal P}}$ or $\DerNeg {\Gam}{}{}{\mathcal P}$ is derivable in \LKThp.}
  \[\begin{array}{c}
    \infer[{[(\cut_3)]}]{
      \DerNeg {\Gam}{\Del}{}{\mathcal P}
    }{
      \DerPos \Gam A {}{\mathcal P} 
      \quad 
      \DerNeg {\Gam}{\non A,\Del}{}{\mathcal P}
    }
    \\\\
    \infer[{[(\cut_4)]}]{
      \DerNeg {\Gam}{\Del}{}{\mathcal P}
    }{
      \DerNeg \Gam {N}{}{\mathcal P}
      \quad 
      \DerNeg {\Gam,N}{\Del}{}{\polar N}
    }
    \qquad
    \infer[{[(\cut_5)]}]{
      \DerPos {\Gam}{B}{}{\mathcal P} \mbox{ or } \DerNeg {\Gam}{}{}{\mathcal P} 
    }{
      \DerNeg \Gam {N}{}{\mathcal P} 
      \quad 
      \DerPos {\Gam,N}{B}{}{\polar N} 
    }
  \end{array}
  \]
  where
  \begin{itemize}
  \item $N$ is assumed to not be $\mathcal P$-positive in $\cut_4$ and $\cut_5$;
  \item the sequent $\DerNeg {\Gam}{\Del}{}{\mathcal P}$ in $\cut_3$ and $\cut_4$, and the pair $(\Gamma,\mathcal P)$ in $\cut_5$, are all assumed to be safe.
  \end{itemize}
\end{theorem}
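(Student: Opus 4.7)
The plan is to prove $\cut_3$, $\cut_4$, and $\cut_5$ simultaneously, by a lexicographic induction whose primary measure is the size of the cut formula ($A$ for $\cut_3$, $N$ for $\cut_4$ and $\cut_5$) and whose secondary measure is the height of the right premiss. At equal cut-formula size, we adopt a strict ordering of the three rules so that the calls from $\cut_3$ to $\cut_4$ (in the negative case) and from $\cut_5$ to $\cut_4$ (after $(\Release)$) still count as progress. The argument uses essentially every earlier result: weakening/contraction, invertibility of the asynchronous rules (notably $(\Store)$, $(\orN)$, $(\andN)$, $(\FA)$, $(\bot^-)$), the on-the-fly polarisation rules $(\Pol)$, $(\Pol_a)$, $(\Pol_i)$, $(\Store[=])$, the theory cuts $\cut_1$ and $\cut_2$, and most critically the instantiation admissibility lemma (Lemma~\ref{Ladm}). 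The safety hypothesis on the conclusion must propagate to every recursive invocation, which is justified by the invariance clause of the safety lemma.

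For $\cut_3$ I case-split on the top-level shape of $A$. A $\mathcal P$-positive literal forces the left premiss to end with $(\Init[1])$: one inverts $(\Store)$ on the right and concludes by $\cut_1$. For $\top^+$, invertibility of $(\bot^-)$ on the right suffices. For $A = B\andP C$ (resp.\ $B\orP C$), I decompose $(\andP)$ (resp.\ $(\orP)$) on the left, invert $(\orN)$ (resp.\ $(\andN)$) on the right, and apply two (resp.\ one) recursive $\cut_3$ on strictly smaller formulas. For $A = \EX x B$, I use $(\EX)$ on the left to expose a witness $t$, invert $(\FA)$ on the right, and replace $x$ by $t$ via Lemma~\ref{Ladm} before recursing on $\subst B x t$. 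Finally, if $A$ is not $\mathcal P$-positive, the left premiss must end with $(\Release)$, invertibility of $(\Store)$ on the right gives $\DerNeg{\Gam,A}{\Del}{}{\polar A}$, and $\cut_4$ takes over at the same formula size.

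For $\cut_4$ I case-analyse the last rule of the left premiss. Asynchronous decompositions of $N$ (for $\andN, \orN, \bot^-$) reduce to recursive calls of $\cut_3$ on strict sub-formulas via the matching right-side invertibility. The case $N = \FA x B$ again relies on Lemma~\ref{Ladm}. The $(\Store)$ case (when $N$ is a literal) is handled by invertibility on the right together with $(\Pol)$ when the literal is unpolarised. For $\cut_5$ I proceed by inner induction on the height of the right premiss: synchronous rules on $B$ propagate cleanly (reducing the height); $(\Release)$ reduces to $\cut_4$ at the same size; $(\Init[1])$ closes using $\cut_1$ or $\cut_2$, exploiting cut-admissibility and weakening of the semantic inconsistency predicate. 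The disjunctive conclusion of $\cut_5$, namely $\DerPos \Gam B {}{\mathcal P}$ or $\DerNeg \Gam {} {} {\mathcal P}$, mirrors that of $(\Inst[f])$ and arises precisely when the reduction exposes an inconsistency of $\atmCtxtP{\mathcal P}{\Gam}$.

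The main obstacles are threefold. First, designing the termination order so that the same-size calls among the three cuts are well-founded, while respecting the fact that derivation heights need not decrease at every step. Second, every quantifier case funnels through Lemma~\ref{Ladm}, whose disjunctive output is precisely what forces $\cut_5$ to state a disjunctive conclusion, and one must verify that the inconsistency branch from the instantiation lemma correctly matches the inconsistency branch of $\cut_5$. Third, safety must be tracked at every recursive call, in particular through those that modify the polarisation set (e.g.\ after inverting $(\Store)$ on $\non A$ when $A$ is unpolarised); for this, the monotonicity and proof-search-invariance properties of safety established earlier are indispensable.
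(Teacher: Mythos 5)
Your treatment of $\cut_3$ and $\cut_5$ follows the paper's, but your plan for $\cut_4$ contains a genuine gap. You propose to case-analyse the last rule of the \emph{left} premiss $\DerNeg \Gam {N}{}{\mathcal P}$ and to conclude ``via the matching right-side invertibility''. However, in $\cut_4$ the cut formula $N$ sits in the \emph{stored context} of the right premiss $\DerNeg {\Gam,N}{\Del}{}{\polar N}$, not in its right-hand side, so the invertibility lemma (which only inverts formulae occurring in $\Del$) gives you nothing that ``matches'' a decomposition of $N$; there is no left-inversion principle for stored negative formulae in this calculus. The essential content of $\cut_4$ is what happens when the right premiss actually \emph{uses} the stored $N$, namely when its derivation reaches a $({\Select})$ on $\non N$, or an $({\Init[2]})$ whose side-condition involves $N$ when $N\in\UP$. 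Your sketch omits these structural cases entirely, and they are exactly the ones that force the detour through $\cut_5$ and then $\cut_3$ on $\non N$. The paper therefore proceeds by case analysis on the last rule of the \emph{right} premiss of $\cut_4$, permuting the cut upwards until the $({\Select})$ and $({\Init[2]})$ cases are reached; without that, the proof of $\cut_4$ cannot be completed.

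A second, related problem is your termination measure. A ``strict ordering of the three rules'' as tie-breaker at equal cut-formula size cannot be made consistent with the actual call graph: $\cut_4$ must call $\cut_5$ (in the $({\Select})$ cases), $\cut_5$ must call $\cut_4$ (in its $(\Release)$ case), and $\cut_4$ must also call $\cut_3$ on $\non N$, which has the same size as $N$ and whose right premiss is the \emph{left} premiss of the original $\cut_4$, so its height is unrelated to the measure you were tracking. The paper resolves this by inserting, between the size of the cut formula and the height of the right premiss, the polarity of the cut formula (a positive formula counts as smaller than a negative one of equal size): the same-size call from $\cut_4$ to $\cut_3$ on $\non N$ then decreases this second component, while the calls $\cut_3\to\cut_4$, $\cut_4\to\cut_5$ and $\cut_5\to\cut_4$ preserve the first two components and strictly decrease the height of the right premiss. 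You correctly flag termination as an obstacle, but the specific fix you gesture at does not work.
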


\begin{proof}
  By simultaneous induction on the following lexicographical measure:
  \begin{itemize}
  \item the size of the cut-formula ($A$ or $N$)
  \item the fact that the cut-formula ($A$ or $N$) is positive or negative\\
    (if of equal size, a positive formula is considered smaller than a negative formula)
  \item the height of the derivation of the right premiss
  \end{itemize}
  % For each pair (size of cut formula, height of right sub-derivation), $\cut_3$ is proved admissible first, then $\cut_4$ and $\cut_5$.

  Weakenings and contractions (as they are admissible in the system) are implicitly used throughout this proof.

  In order to eliminate $\cut_3$, we analyse which rule is used to prove the left premiss. We then use invertibility of the negative phase so that the last rule used in the right premiss is its dual one.

  \begin{itemize}
\item  ($\andP$)% start  \cut_3 AND 
  $$ %\begin{array}{c}
    \infer[\cut_3]{\DerNeg{\Gamma}{\Delta}{}{\mathcal P}}
    {\infer{\DerPos{\Gamma}{A \andP B}{}{\mathcal P}} 
    	{ {\DerPos{\Gamma}{A}{}{\mathcal P}} 
			\quad 
			{\DerPos{\Gamma}{B}{}{\mathcal P}}} 
	\qquad 
		\infer{\DerNeg{\Gamma}{A \orN B,\Delta}{}{\mathcal P}}
				{\DerNeg{\Gamma}{\non A,\non B,\Delta}{}{\mathcal P}}}
$$				
    reduces to
    $$
    \infer[\cut_3]{\DerNeg{\Gamma}{\Delta}{}{\mathcal P}}
    	{
		{\DerPos{\Gamma}{B}{}{\mathcal P}} 
		\quad {\infer[\cut_3]{\DerNeg{\Gamma}{\non B,\Delta}{}{\mathcal P}} 	   				{{\DerPos{\Gamma}{A}{}{\mathcal P}} 
		\quad {\DerNeg{\Gamma}{\non A,\non B,\Delta}{}{\mathcal P}}}}}
  %\end{array}
  $$

 \item ($\orP$)% start  \cut_3 OR

  $$%\begin{array}{c}
    \infer[\cut_3]{\DerNeg{\Gamma}{\Delta}{}{\mathcal P}}
    {\infer{\DerPos{\Gamma}{A_1 \orP A_2}{}{\mathcal P}} 
    {\DerPos{\Gamma}{A_i}{}{\mathcal P}}  
    \qquad 
    \infer{\DerNeg{\Gamma}{A_1 \andN A_2,\Delta}{}{\mathcal P}}		    {{\DerNeg{\Gamma}{\non A_1,\Delta}{}{\mathcal P}} 
    \quad {\DerNeg{\Gamma}{\non A_2,\Delta}{}{\mathcal P}}}} \quad
    $$
    %\mbox{reduces to} \qquad
    reduces to
   $$ \infer[\cut_3]{\DerNeg{\Gamma}{\Delta}{}{\mathcal P}}
    {\DerPos{\Gamma}{A_i}{}{\mathcal P}
     \qquad {\DerNeg{\Gamma}{\non A_i,\Delta}{}{\mathcal P}}}
  %\end{array}
   $$

\item ($\EX$) % start  \cut_3 EXT
  \[ %\begin{array}c
    \infer[\cut_3]{
      \DerNeg{\Gamma}{\Delta}{}{\mathcal P}
    }{
      \infer{\DerPos{\Gamma}{\EX x A}{}{\mathcal P}}{
        \DerPos{\Gamma}{\subst A x t}{}{\mathcal P}
      } 
      \qquad 
      \infer[x \notin \FV{\Gamma,\Delta,\mathcal P}]{
        \DerNeg{\Gamma}{(\FA x \non A),\Delta}{}{\mathcal P}
      }{\DerNeg{\Gamma}{\non A,\Delta}{}{\mathcal P}}} 
   \]
  reduces to \[
    \infer[\cut_3]{\DerNeg{\Gamma}{\Delta}{}{\mathcal P}}
    {{\DerPos{\Gamma}{\subst A x t}{}{\mathcal P}} 
    \qquad 
    \infer
    {\DerNeg{\Gamma}{(\subst {\non A} x t ),\Delta}{}{\mathcal P}}
    {\DerNeg{\Gamma}{\non A,\Delta}{}{\mathcal P}}}
    \]
    using Lemma~\ref{Ladm} (admissibility of instantiation) with $n=0$, noticing that
 $x\notin\FV{\mathcal P}$ and that $\DerNeg{\Gamma}{(\subst {\non A} x t ),\Delta}{}{\mathcal P}$ is safe (since ${\DerNeg{\Gamma}{\Delta}{}{\mathcal P}}$ is safe).\footnote{Using $\alpha$-conversion, we can also pick $x$ such that $x\notin\FV t$.}

\item ($\top^+$)% cut_3 for \top^+
  $$
  \infer[\cut_3]{\DerNeg{\Gamma}{\Delta}{}{\mathcal P}}
  {\infer{\DerPos{\Gamma}{\top^+}{}{\mathcal P}} {} \qquad 
  \infer{\DerNeg{\Gamma}{\bot^-,\Delta}{}{\mathcal P}}
  {\DerNeg{\Gamma }{\Delta}{}{\mathcal P}}} 
  \quad
  \mbox{reduces to} \qquad \DerNeg{\Gamma}{\Delta}{}{\mathcal P}
  $$

 \item ($\Init[1]$)% start  \cut_3 for positive literal P in 
  \[
  \infer[\cut_3]{\DerNeg{\Gamma}{\Delta}{}{\mathcal P}}{
    \infer{\DerPos{\Gamma}{p}{}{\mathcal P}}{
      \atmCtxtP{\mathcal P}\Gam,\non p\models_{\mathcal T}
    }
    \qquad 
    \infer{\DerNeg{\Gamma}{(\non p),\Delta}{}{\mathcal P}}{
      \DerNeg{\Gamma, p }{\Delta}{}{\mathcal P}
    }
  } 
  \quad\mbox{reduces to} \qquad 
  \infer[\cut_1]{\DerNeg{\Gamma}{\Delta}{}{\mathcal P}}{
    \atmCtxtP{\mathcal P}\Gam,\non p\models_{\mathcal T}
    \qquad
    \DerNeg{\Gamma, p}{\Delta}{}{\mathcal P}
  }
  \]
  with $p\in\mathcal P$.
 \item ($\Release$)% start  \cut_3 for negative literal N
  $$
  \infer[\cut_3]{\DerNeg{\Gamma}{\Delta}{}{\mathcal P}}
  {\infer{\DerPos{\Gamma}{N}{}{\mathcal P}} 
  {\DerNeg{\Gamma}{N}{}{\mathcal P}} 
  	\qquad \infer{\DerNeg{\Gamma}{(\non N),\Delta}{}{\mathcal P}} 
	{\DerNeg{\Gamma, N }{\Delta}{}{ \polar N }  } } 
	\quad
  \mbox{reduces to} 
  \qquad \infer[\cut_4]{\DerNeg{\Gamma}{\Delta}{}{\mathcal P}}
  {{\DerNeg{\Gamma}{N}{}{\mathcal P}} 
  \qquad {\DerNeg{\Gamma, N}{\Delta} {}{\polar N}}}
  $$
 where $N$ is not $\mathcal P$-positive. We will describe below how $\cut_4$  is reduced.
  
\end{itemize}

  % end cut_3

  In order to reduce $\cut_4$, we analyse which rule is used to prove the right premiss.
\begin{itemize}
\item ($\andN$)  % start \cut_4 for And
  $$
  \infer[\cut_4]{\DerNeg {\Gam}{B\andN C,\Delta}{}{\mathcal P}}
    {
      \DerNeg {\Gam} {N}{}{\mathcal P}
      \quad 
      \infer{
        \DerNeg {\Gam,N}{B\andN C,\Delta}{}{\polar N}
      }
      {
        \DerNeg {\Gam,N}{B,\Delta}{}{\polar N}
        \quad \DerNeg {\Gam,N}{C,\Delta}{}{\polar N}
      }
    } $$
    %\]
   %\\ \mbox {reduces to } \quad
    reduces to
   $$ \infer{
      \DerNeg {\Gam}{B\andN C,\Delta}{}{\mathcal P}
    }
    {
      \infer[\cut_4]{\DerNeg {\Gam}{B,\Delta}{}{\mathcal P}}
      {\DerNeg {\Gam} {N}{}{\mathcal P} 
      \quad \DerNeg {\Gam,N}{B,\Delta}{}{\polar N}}
      \quad
      \infer[\cut_4]{\DerNeg {\Gam}{C,\Delta}{}{\mathcal P}}
      {\DerNeg {\Gam} {N}{}{\mathcal P} \quad 
      \DerNeg {\Gam,N}{C,\Delta}{}{\polar N}}
    }
  $$
 
 \item ($\orN$)% start \cut_4 orN
  $$\begin{array}{c}
    \infer[\cut_4]{\DerNeg {\Gam}{B\orN C,\Delta}{}{\mathcal P}}
    {
      \DerNeg {\Gam} {N}{}{\mathcal P}
      \quad 
      \infer{
        \DerNeg {\Gam,N}{B\orN C,\Delta}{}{\polar N}
      }
      {
        \DerNeg {\Gam,N}{B,C,\Delta}{}{\polar N}
      }
    }
    \qquad\mbox{reduces to} \qquad
    \infer{
      \DerNeg {\Gam}{B\orN C,\Delta}{}{\mathcal P}
    }
    {
      \infer[\cut_4]{\DerNeg {\Gam}{B,C,\Delta}{}{\mathcal P}}
      {\DerNeg {\Gam} {N}{}{\mathcal P} \quad 
      \DerNeg {\Gam,N}{B,C,\Delta}{}{\polar N}}
    }
  \end{array}
  $$

 \item ($\FA$)% start \cut_4 forall
  $$ \begin{array}{c}
    \infer[\cut_4]{\DerNeg {\Gam}{\forall x B,\Delta}{}{\mathcal P}}
    {
      \DerNeg {\Gam} {N}{}{\mathcal P}
      \quad 
      \infer{
        \DerNeg {\Gam,N}{\forall x B,\Delta}{}{\polar N}
      }
      {
        \DerNeg {\Gam,N}{B,\Delta}{}{\polar N}
      }
    }
    \qquad\mbox{reduces to} \qquad
    \infer{
      \DerNeg {\Gam}{\forall x B,\Delta}{}{\mathcal P}
    }
    {
      \infer[\cut_4]{\DerNeg {\Gam}{B,\Delta}{}{\mathcal P}}
      {\DerNeg {\Gam} {N}{}{\mathcal P} \quad 
      \DerNeg {\Gam,N}{B,\Delta}{}{\polar N}}
    }
  \end{array}
  $$

\item ($\bot^-$)% 8 start \cut_4 \bot^-, \Delta
  \[
  \infer[\cut_4]{\DerNeg {\Gam}{\bot^-,\Delta}{}{\mathcal P}}
    {
      \DerNeg {\Gam} {N}{}{\mathcal P}
      \quad 
      \infer{
        \DerNeg {\Gam,N}{\bot^-,\Delta}{}{\polar N}
      }
      {
        \DerNeg {\Gam,N}{\Delta}{}{\polar N}
      }
    }
    \qquad\mbox{reduces to} \qquad
    \infer{
      \DerNeg {\Gam}{\bot^-,\Delta}{}{\mathcal P}
    }
    {
      \infer[\cut_4]{\DerNeg {\Gam}{\Delta}{}{\mathcal P}}
      {\DerNeg {\Gam} {N}{}{\mathcal P} \quad 
      \DerNeg {\Gam,N}{\Delta}{}{\polar N}}
    }
  \]

 \item ($\Store$)% start \cut_4 B is positive
   \[
   \infer[\cut_4]{\DerNeg {\Gam}{B,\Delta}{}{\mathcal P}}{
     \DerNeg {\Gam} {N}{}{\mathcal P}
     \quad 
     \infer{\DerNeg {\Gam,N}{B,\Delta}{}{\polar N}}{
       \DerNeg {\Gam,N,\non B}{\Delta}{}{\polar[\polar N]{\non B}}
     }
   }
   \qquad\mbox{reduces to} \qquad
   \infer{\DerNeg {\Gam}{B,\Delta}{}{\mathcal P}}{
     \infer[\cut_4]{\DerNeg {\Gam,\non B}{\Delta}{}{\polar {\non B}}}{
       \DerNeg {\Gam,\non B} {N}{}{\polar {\non B}}
       \quad 
       \DerNeg {\Gam,N,\non B}{\Delta}{}{\polar[\polar {\non B}]N}
     }
   }
   \]
   whose left branch is closed by using
   \begin{itemize}
   \item possibly the admissibility of $(\Pol)$ (if $B\in\UP$), so as to get $\DerNeg {\Gam,\non B} {N}{}{\mathcal P}$,
   \item then the admissibility of $(\weak)$ (on $\non B$), to get to the provable premiss $\DerNeg {\Gam} {N}{}{\mathcal P}$;
   \end{itemize}
   whose right branch is the same as the provable $\DerNeg {\Gam,N,\non B}{\Delta}{}{\polar[\polar N]{\non B}}$ unless $B= N\in\UP$, in which case the commutation $\polar[\polar {\non B}]N=\polar[\polar N]{\non B}$ does not hold.
   In this last case, we build
   \[\infer[{[{(\weak[r])}]}]{
     \DerNeg {\Gam}{B,\Delta}{}{\mathcal P}
   }{
     \DerNeg {\Gam}{B}{}{\mathcal P}
   }
   \]

 \item ($\Init[2]$) when $N\not\in\UP$, in which case $\polar N=\mathcal P$ and $\atmCtxtP{\mathcal P}{\Gamma,N}=\atmCtxtP{\mathcal P}{\Gamma}$ (since $N\not\in\mathcal P$ either):
   \[
   \infer[\cut_4]{\DerNeg {\Gam}{}{}{\mathcal P}}{
     \DerNeg \Gam {N}{}{\mathcal P}
     \quad 
     \infer{\DerNeg {\Gam,N}{}{}{\polar N}}{\atmCtxtP{\mathcal P}\Gam\models_{\mathcal T}}
   }
   \qquad\mbox{reduces to} \qquad
   \infer{\DerNeg{\Gamma}{}{}{\mathcal P}} {\atmCtxtP{\mathcal P}\Gam\models_{\mathcal T}}
   \]
 \item ($\Init[2]$) when $N\in\UP$, in which case $\atmCtxtP{\polar N}{\Gam,N}=\atmCtxtP{\mathcal P}{\Gam},N$:
   \[
   \infer[\cut_4]{\DerNeg {\Gam}{}{}{\mathcal P}}{
     \infer{\DerNeg \Gam {N}{}{\mathcal P}}{\DerNeg {\Gam, \non N}{}{}{\mathcal P, {\non N}}}
     \quad
     \infer{\DerNeg {\Gam,N}{}{}{\polar N}}{\atmCtxtP{\mathcal P}\Gam,N\models_{\mathcal T}}
   }
   \qquad\mbox{reduces to} \qquad
   \infer[\cut_1]{\DerNeg {\Gam}{}{}{\mathcal P}}{
     \atmCtxtP{\mathcal P, {\non N}}\Gam,N\models_{\mathcal T}
     \qquad
     \DerNeg {\Gam, \non N}{}{}{\mathcal P,{\non N}}
   }
   \]
  since ${\atmCtxtP{\mathcal P}\Gam,N\models_{\mathcal T}}$ implies ${\atmCtxtP{\mathcal P, {\non N}}\Gam,N\models_{\mathcal T}}$.
 
 \item $(\Select)$ on formula $\non N$
  \[
  \begin{array}{lcl}
    \infer[\cut_4]{\DerNeg {\Gam}{}{}{\mathcal P}}
    {\DerNeg \Gam {N}{}{\mathcal P}\quad \infer{\DerNeg {\Gam,N}{}{}{\polar N}}{\DerPos {\Gam,N}{\non N}{}{\polar N}}}
    &
    \qquad\mbox{reduces to} \qquad
    &\infer[\cut_3]{\DerNeg {\Gam}{}{}{\mathcal P}}
    {\infer[\cut_5]{\DerPos {\Gam}{\non N}{}{\mathcal P}}
    {\DerNeg \Gam {N}{}{\mathcal P}\quad \DerPos {\Gam,N}{\non N}{}{\polar N}} \qquad \DerNeg \Gam {N}{}{\mathcal P}}
    \\\\
    &\mbox{ or to }&
    \infer[\cut_5]{\DerNeg{\Gamma}{}{}{\mathcal P}}
    {\DerNeg{\Gamma}{N}{}{\mathcal P} \quad \DerPos{\Gamma,N}{\non N}{}{\polar N}
    }
  \end{array}
  \]
  depending on the outcome of $\cut_5$

\item $(\Select)$ on a formula $P$ that is not $\polar N$-negative

  $$\begin{array}{lcl}
    \infer[\cut_4]{\DerNeg {\Gam,\non P}{}{}{\mathcal P}}
    {\DerNeg {\Gam,\non P} {N}{}{\mathcal P}
    \quad \infer{\DerNeg {\Gam,\non P,N}{}{}{\polar N}}
    {\DerPos {\Gam,\non P,N}{P}{}{\polar N}}}
    &\qquad\mbox{reduces to} \qquad&
    \infer{\DerNeg {\Gam,\non P}{}{}{\mathcal P}}
    {\infer[\cut_5]{\DerPos {\Gam,\non P}{P}{}{\mathcal P}}
    {\DerNeg {\Gam,\non P} {N}{}{\mathcal P} 
     \quad \DerPos {\Gam,\non P,N}{P}{}{\polar N}}}
     \\\\
    &\mbox{ or to }&
    \infer[\cut_5]{\DerNeg{\Gamma,\non P}{}{}{\mathcal P}}
    {\DerNeg{\Gamma,\non P}{N}{}{\mathcal P} \quad 
    \DerPos{\Gamma,\non P, N}{\non N}{}{\polar N}
    }
  \end{array}
  $$
  depending on the outcome of $\cut_5$

 \end{itemize} 
%end \cut_4

  We have reduced all cases of $\cut_4$; we now reduce the cases for $\cut_5$ (again, by case analysis on the last rule used to prove the right premiss).

\begin{itemize}
\item ($\andP$) We are given
  \[
    \DerNeg {\Gam} {N}{}{\mathcal P}
    \quad\mbox{ and }\quad
    \infer{\DerPos {\Gam,N}{B_1\andP B_2}{}{\polar N}}{
      \DerPos {\Gam,N}{B_1}{}{\polar N} \quad
      \DerPos {\Gam,N}{B_2}{}{\polar N}
  }
  \]
and by $\cut_5$ we want to derive either $\DerPos {\Gam}{B_1\andP B_2}{}{\mathcal P}$ or $\DerNeg {\Gam}{}{}{\mathcal P}$.

If we can, we build   
  \[
    \infer{
      \DerPos {\Gam}{B_1\andP B_2}{}{\mathcal P}
    }
    {
      \infer[\cut_5]{\DerPos {\Gam}{B_1}{}{\mathcal P}}
      {\DerNeg {\Gam} {N}{}{\mathcal P} \quad 
      \DerPos {\Gam,N}{B_1}{}{\polar N}}
      \quad
      \infer[\cut_5]{\DerPos {\Gam}{B_2}{}{\mathcal P}}
      {\DerNeg {\Gam} {N}{}{\mathcal P} \quad 
      \DerPos {\Gam,N}{B_2}{}{\polar N}}
    }
  \]
Otherwise we build
\[ \infer[\cut_5]{\DerNeg{\Gamma}{}{}{\mathcal P}}
		{ \DerNeg{\Gamma}{N}{}{\mathcal P} \quad 
		\DerPos{\Gamma,N}{B_i}{}{ \polar N}
		}		
\]
where $i$ is (one of) the premiss(es) for which $\cut_5$ produces a proof of $\DerNeg{\Gamma}{}{}{\mathcal P}$.

 \item ($\orP$) We are given
  \[
      \DerNeg {\Gam} {N}{}{\mathcal P}
    \quad\mbox{ and }\quad
      \infer{
        \DerPos {\Gam,N}{B_1\orP B_2}{}{\polar N}
      }
      { \DerPos {\Gam,N}{B_i} {}{\polar N}}
\]
and by $\cut_5$ we want to derive either $\DerPos {\Gam}{B_1\orP B_2}{}{\mathcal P}$ or $\DerNeg {\Gam}{}{}{\mathcal P}$.

If we can, we build   
\[
\infer{ \DerPos {\Gam}{B_1\orP B_2}{}{\mathcal P} }
      {\infer[\cut_5] { \DerPos {\Gam}{B_i}{}{\mathcal P} }
        {\DerNeg {\Gam} {N}{}{\mathcal P} \quad 
          \DerPos {\Gam,N}{B_i}{}{\polar N}}
      }
      \]
Otherwise we build
$$ \infer{\DerNeg{\Gamma}{}{}{\mathcal P}}
		{ \DerNeg{\Gamma}{N}{}{\mathcal P} \quad 
		\DerPos{\Gamma,N}{B_i}{}{ \polar N}
		}		
$$
  
 \item ($\EX$) We are given
  \[
  \DerNeg {\Gam} {N}{}{\mathcal P}
  \quad\mbox{ and }\quad
  \infer{
    \DerPos {\Gam,N}{\EX x B}{}{\polar N}
  }
        { \DerPos {\Gam,N}{\subst B x t}{}{\polar N}}
\]
and by $\cut_5$ we want to derive either $\DerPos {\Gam}{\EX x B}{}{\mathcal P}$ or $\DerNeg {\Gam}{}{}{\mathcal P}$.

If we can, we build   
\[\infer { \DerPos {\Gam}{\EX x B}{}{\mathcal P} }
    {   \infer[\cut_5] { \DerPos {\Gam}{\subst B x t}{}{\mathcal P} }
      {\DerNeg {\Gam} {N}{}{\mathcal P}\quad 
      \DerPos {\Gam,N}{\subst B x t}{}{\polar N}}
      } 
\]
Otherwise we build
$$ \infer{\DerNeg{\Gamma}{}{}{\mathcal P}}
		{ \DerNeg{\Gamma}{N}{}{\mathcal P} \quad 
		\DerPos{\Gamma,N}{\subst B x t}{}{ \polar N}
		}		
$$

 \item ($\top^+$) We are given
  \[
      \DerNeg {\Gam} {N}{}{\mathcal P}
      \quad\mbox{ and }\quad
      \infer{\DerPos{\Gamma,N}{\top^+}{}{\polar N}} {}
\]
and by $\cut_5$ we want to derive either $\DerPos {\Gam}{\top^+}{}{\mathcal P}$ or $\DerNeg {\Gam}{}{}{\mathcal P}$.

We build
\[
    \infer { \DerPos {\Gam}{\top^+}{}{\mathcal P}} {}
\]

 \item ($\Release$)
  We are given:
\[
\DerNeg {\Gam} {N}{}{\mathcal P}
\qquad\mbox{ and }\qquad 
\infer{
  \DerPos {\Gam,N}{N'}{}{ \polar N }
}
      { \DerNeg {\Gam,N}{N'}{}{ \polar N } }
        \]
 where $N'$ is not $\polar N$-positive;\\
and by $\cut_5$ we want to derive either $\DerPos {\Gam}{N'}{}{\mathcal P}$ or $\DerNeg {\Gam}{}{}{\mathcal P}$.

We build
 \[ \infer{\DerPos {\Gam}{N'}{}{\mathcal P}}
    {
      \infer[\cut_4]{\DerNeg {\Gam}{N'}{}{\mathcal P}}
      {
        \DerNeg {\Gam}{N}{}{\mathcal P}
        \quad
        \DerNeg {\Gam,N}{N'}{}{\polar N}
      }
    }
    \]
    since $N'$ is not $\mathcal P$-positive.

\item ($\Init[1]$)
  We are given:
  \[\DerNeg {\Gam} {N}{}{\mathcal P}
  \qquad\mbox{ and }\qquad 
  \infer{\DerPos{\Gamma,N}{p}{}{ \polar N }}{
    \atmCtxtP{\polar N}{\Gam,N},\non p\models_{\mathcal T}      
  }
  \]
with $p\in\polar N$,\\
and by $\cut_5$ we want to derive either $\DerPos {\Gam}{p}{}{\mathcal P}$ or $\DerNeg {\Gam}{}{}{\mathcal P}$.
 
 If $N$ is $\mathcal P$-negative \\
	then $\polar N=\mathcal P$ and $p$ is $\mathcal P$-positive. So
 $\atmCtxtP{\polar N}{\Gamma, N},\non p= \atmCtxtP{\mathcal P}{\Gamma},\non p$
 and we build
 $$
 \infer[{[({\Init[1]})]}]{\DerPos{\Gamma}{p}{}{\mathcal P}} {}
 $$
 
 If $N\in \UP$ $(\atmCtxtP{\polar N} {\Gamma,N},\non p=\atmCtxtP{\mathcal P} 
 {\Gamma},N,\non p)$ 
  \begin{itemize}
   \item if $p=N$ then we build 
   $$ \infer{\DerPos{\Gamma}{N}{}{\mathcal P}}
   	 {\DerNeg{\Gamma}{N}{}{\mathcal P}}
   $$
         as $N$ is not $\mathcal P$-positive;
   \item if $p\neq N$ then $p$ is $\mathcal P$-positive 
     \begin{enumerate}
     \item if $\atmCtxtP{\mathcal P}{\Gamma},N\models_{\mathcal T}$\\
     then applying invertibility of ($\Store[=]$) on $\DerNeg{\Gamma}{N}{}{\mathcal P}$ gives $\DerNeg{\Gamma,\non N}{}{}{\mathcal P}$ and we build:
	$$  \infer[\cut_1]{\DerNeg{\Gamma}{}{}{\mathcal P}}
			{ \atmCtxtP{\mathcal P}{\Gamma}, N\models_{\mathcal T}
			\quad {\DerNeg{\Gamma,\non N}{}{}{\mathcal P}}
			}
	$$

     \item if $\atmCtxtP{\mathcal P}{\Gamma},N \not\models_{\mathcal T}$\\
     then $\mathcal R \eqdef \atmCtxtP{\mathcal P}{\Gamma},N$ is a set of literals
     satisfying $\atmCtxtP{\mathcal P}{\Gamma}\subseteq \mathcal R \subseteq \atmCtxtP{\mathcal P}{\Gamma} \cup \UP$ (since $N\in \UP$) and $\mathcal R,\non p \models_{\mathcal T} $. 
     
     Hence we get $\atmCtxtP{\mathcal P}{\Gamma},\non p\models_{\mathcal T}$ as well, since ($\Gamma, \mathcal P$) is assumed to be safe.

     We can finally build
\[  \infer[{[{(\Init[1])}]}]{\DerPos{\Gamma}{p}{}{ \mathcal P }}{
    \atmCtxtP{\mathcal P}{\Gam},\non p\models_{\mathcal T}      
  }
  \]

	\end{enumerate}
  \end{itemize}
 
\end{itemize}
 %end \cut_5
\end{proof}

%Notice that, the $\cut_3$ rule is similar as \emph{key cut}, $\cut_4$ is similar as \emph{prime cut} and $\cut_5$ is similar as \emph{focused cut} of the \LKF\ system~\cite{} ,   
%\newpage
\begin{theorem}[$\cut_6$, $\cut_7$, and $\cut_8$]
\label{Thcut}
  The following rules are admissible in \LKTh. 
  \[
  \begin{array}{c}
    \infer[\cut_6]{\DerNeg {\Gam}{\Del}{}{\mathcal P}}
    {\DerNeg \Gam {N,\Del}{}{\mathcal P} \quad 
    \DerNeg {\Gam,N}{\Del}{}{ \polar N }}
    \qquad
    \infer[\cut_7] {\DerNeg {\Gamma}{\Del}{}{\mathcal P}}
    {\DerNeg{\Gamma}{A,\Del}{}{\mathcal P} \quad 
    \DerNeg{\Gamma}{\non A,\Del}{}{\mathcal P}}  
    \qquad
    \infer[\cut_8] {\DerNeg {\Gamma}{\Del}{}{\mathcal P}}
    {\DerNeg{\Gamma,l}{\Del}{}{\polar l} \quad 
    \DerNeg{\Gamma,\non l}{\Del}{}{\polar {\non l}}}  
% \qquad
%   \infer[\cut_9]{\DerNeg {\Gam} {\Del}{} {}}
%   {
%     \DerNeg {\Gam, l_1,\ldots,l_n} {\Del}{} {}
%     \quad
%     \DerNeg {\Gam, (\non l_1\orN\ldots\orN \non l_n)} {\Del}{} {}
%   } 
  \end{array}
  \]
\end{theorem}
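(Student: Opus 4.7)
The plan is to prove $\cut_6$, $\cut_7$, $\cut_8$ by a simultaneous induction that extends the scheme used for $\cut_1$--$\cut_5$: a lexicographic measure on the size of the cut-formula (a positive formula being counted smaller than a negative one of equal size) together with the combined heights of the input derivations. The strategy is to orient the reductions so that $\cut_8$ handles literal cuts as a base case, $\cut_6$ reduces to $\cut_8$ when $N$ is a literal and otherwise performs a case analysis on the last rule of the right premiss, and $\cut_7$ reduces to $\cut_6$ or $\cut_8$ via invertibility of $(\Store)$.

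For $\cut_7$, the plan is: if $A$ is a $\mathcal P$-positive formula or a polarised literal, then one of $A, \non A$ is not $\mathcal P$-positive; WLOG assume this of $A$. Invertibility of $(\Store)$ on the first premiss yields $\DerNeg{\Gam,\non A}{\Del}{}{\polar{\non A}}$, and the conclusion follows by $\cut_6$ with cut-formula $\non A$ together with the second premiss. If $A$ is a $\mathcal P$-unpolarised literal $l$, applying $(\Store)$ invertibility to both premisses gives $\DerNeg{\Gam,\non l}{\Del}{}{\mathcal P,\non l}$ and $\DerNeg{\Gam,l}{\Del}{}{\mathcal P,l}$, whence the conclusion follows by $\cut_8$ on $l$.

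For $\cut_6$, I would distinguish on the shape of $N$. When $N$ is a literal, invertibility of $(\Store)$ on the first premiss yields $\DerNeg{\Gam,\non N}{\Del}{}{\polar{\non N}}$, and the conclusion follows from $\cut_8$ (adjusting polarisations via admissibility and invertibility of $(\Pol)$ and $(\Pol_i)$ to align with the form required by $\cut_8$). When $N$ is compound, I perform a case analysis on the last rule of the right premiss $\DerNeg{\Gam,N}{\Del}{}{\polar N}$: rules acting on $\Del$ or on $\Gam$ commute with the cut and yield smaller-height instances; $(\Init[2])$ applies directly since a compound $N$ contributes nothing to $\atmCtxtP{\polar N}{\cdot}$; the delicate case is $(\Select)$ focussing on $\non N$, possible only when $N$ is $\mathcal P$-negative, which I resolve by first decomposing $N$ in the first premiss via invertibility of the asynchronous rules and then combining the sub-derivations via $\cut_3$, $\cut_4$, $\cut_5$ on strictly smaller sub-formulae.

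For $\cut_8$, the essential ingredient is property~(3) of the inconsistency predicate (cut-admissibility, Definition~\ref{def:inconsistency}). The plan is a case analysis on the last rules of the two premisses. Asynchronous rules and $(\Store)$ applied to formulae of $\Del$ commute with the cut, reducing the heights. The critical base case is when both premisses close with $(\Init[2])$: the judgements $\atmCtxtP{\polar l}{\Gam,l}\models_{\mathcal T}$ and $\atmCtxtP{\polar{\non l}}{\Gam,\non l}\models_{\mathcal T}$ combine -- via property~(3) in the unpolarised case, or trivially when $l$ is polarised -- to yield $\atmCtxtP{\mathcal P}{\Gam}\models_{\mathcal T}$, whence we close by $(\Init[2])$ followed by $(\weak[r])$. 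The $(\Select)$ cases are permuted into the focused sub-derivation and dispatched via $\cut_3$ or $\cut_5$. The main obstacle throughout will be the careful bookkeeping of polarisations: because the two premisses of $\cut_8$ sit over different extensions $\polar l$ and $\polar{\non l}$ of $\mathcal P$, much of the work consists in applying admissibility and invertibility of $(\Pol)$ and $(\Pol_i)$ at the right moments to keep the two branches over a common polarisation set.
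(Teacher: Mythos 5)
Your architecture inverts the paper's and, as sketched, contains two genuine gaps. The paper obtains all three rules in a few lines from material already established: $\cut_6$ by induction on the multiset $\Del$, peeling off its formulae with the invertibility of the asynchronous rules and of $(\Store[=])$ (which leaves $\mathcal P$ unchanged) until the base case, which is exactly $\cut_4$; $\cut_7$ from $\cut_6$ by one inversion of $(\Store)$; and $\cut_8$ from $\cut_7$ by prefixing each premiss with an application of $(\Store)$. You instead make $\cut_8$ a primitive base case proved by a fresh double induction, and that is where the first gap lies: a cut whose cut-literal sits in the \emph{context} of both premisses cannot have its $(\Select)$ cases ``dispatched via $\cut_3$ or $\cut_5$'', since those rules eliminate a formula in focus against one on the right-hand side, not a context literal against a context literal. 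After a $(\Select)$ step you are left with a focused sequent $\DerPos{\Gam,l}{P}{}{\polar l}$, so you would need a focused companion to $\cut_8$, mutually inductive with it; and at its $(\Init[1])$ leaves the inconsistency judgement involves $l$ while the resource for discharging $l$ is a whole derivation rather than a theory fact, so property~(3) of Definition~\ref{def:inconsistency} alone does not close these cases. This is exactly the work that factoring through $\cut_7$, and hence through the already-proved $\cut_4$ and $\cut_5$, avoids.

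The second gap is in your $\cut_7$ step. After the WLOG that $A$ is not $\mathcal P$-positive, you invert $(\Store)$ on the \emph{first} premiss $\DerNeg{\Gam}{A,\Del}{}{\mathcal P}$ and then cut on $\non A$. But $(\Store)$ applies only to a literal or a $\mathcal P$-positive formula, so it does not apply to $A$ when $A$ is a compound negative formula; and when $A$ is a polarised literal, $\non A$ is $\mathcal P$-positive, which is precisely the polarity excluded for the cut formula of $\cut_6$ (whose base case $\cut_4$ requires the cut formula not to be $\mathcal P$-positive). The correct orientation is to invert $(\Store)$ on the \emph{second} premiss $\DerNeg{\Gam}{\non A,\Del}{}{\mathcal P}$, where $\non A$ is a literal or $\mathcal P$-positive, obtaining $\DerNeg{\Gam,A}{\Del}{}{\polar A}$, and then to apply $\cut_6$ with cut formula $A$. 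Your reduction of $\cut_6$ to $\cut_8$ for literal $N$, and the direct case analysis for compound $N$ (including the $(\Select)$-on-$\non N$ case resolved through $\cut_3$ and $\cut_5$), could in principle be completed, but they reprove what $\cut_4$ already contains; the induction on $\Del$ is both shorter and sidesteps the polarisation bookkeeping you flag at the end.
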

\begin{proof}
$\cut_6$ is proved admissible by induction on the multiset $\Del$: the base case is the admissibility of $\cut_4$, and the other cases just require the inversion of the connectives in $\Del$ (using $(\Store[=])$ instead of $(\Store)$, to avoid modifying the polarisation set).

For $\cut_7$, we can assume without loss of generality (swapping $A$ and $\non A$) that $A$ is not $\mathcal P$-positive.
Applying inversion on $\DerNeg{\Gamma}{\non A,\Del}{}{\mathcal P}$ gives a proof of $\DerNeg{\Gamma,A}{\Del}{}{\polar { A}}$, and 
$\cut_7$ is then obtained by $\cut_6$:
\[    \infer[\cut_6] {\DerNeg {\Gamma}{\Del}{}{\mathcal P}}
    {\DerNeg{\Gamma}{A,\Del}{}{\mathcal P} \quad 
      \DerNeg{\Gamma,A}{\Del}{}{ \polar A}
    }
\]

$\cut_8$ is obtained as follows:
\[    \infer[\cut_7] {\DerNeg {\Gamma}{\Del}{}{\mathcal P}}
    {\infer
      {\DerNeg{\Gamma}{l,\Del}{}{\mathcal P}}
      {\DerNeg{\Gamma,\non l}{\Del}{}{\polar {\non l}}}
      \quad 
      \infer
      {\DerNeg{\Gamma}{\non l,\Del}{}{\mathcal P}}
      {\DerNeg{\Gamma,l}{\Del}{}{\polar l}}
    }
\]
%If $l\in \mathcal {P} \cup \non{\mathcal P}$, then $\mathcal P'=\mathcal P''=\mathcal P$. \\
%If $l$ is unpolarised, then $\mathcal P' = \mathcal P, \non l$ and $\mathcal P'' = \mathcal P, l$

% $\cut_9$ is obtained as follows:
% \[    \infer[\cut_7] {\DerNeg {\Gamma}{\Del}}
% {\infer
%   {\DerNeg{\Gamma}{l_1\andP\ldots\andP l_n,\Del}}
%   {\DerNeg{\Gamma,(\non l_1\orN\ldots\orN\non l_n)}{\Del}}
%   \quad
%   \infer{\DerNeg{\Gamma}{(\non l_1\orN\ldots\orN\non l_n),\Del}}
%   {
%     \Infer
%     {\DerNeg{\Gamma}{\non l_1,\ldots,\non l_n,\Del}}
%     {\DerNeg{\Gamma,l_1,\ldots,l_n}{\Del}}
%   }
% }
%\]

\end{proof}

\section{Changing the polarity of connectives}

In this section, we show that changing the polarity of connectives does not change provability in \LKThp. To prove this property of the \LKThp\ system, we genealise it into a new system \LKThEx.

\begin{definition}[\LKThEx]
  The sequent calculus \LKThEx\ manipulates one kind of sequent:
  \[
  \DerOSEx {\Gamma} {\mathcal X}{\Del}
  \qquad
  \mbox{ where } \mathcal X \recdef \bullet \sep A
  \]

  Here, $\mathcal P$ is a polarisation set, $\Gam$ is a multiset of literals and $\mathcal P$-negative formulae, $\Delta$ is a multiset of formulae, and $\mathcal X$ is said to be in the \emph{focus} of the sequent.

  The rules of \LKThEx, given in Figure~\ref{fig:LKThEx}, are again of three kinds: synchronous rules, asynchronous rules, and structural rules.
\end{definition}

\begin{figure}[h]
  \[
  \begin{array}{|c|}
    \upline
    \mbox{\textsf{Synchronous rules}}
    \hfill\strut\\
    \infer[{[(\andP)]}]{\DerOSEx{\Gamma}{A\andP B}{\Del}}
    {\DerOSEx{\Gamma}{A}{\Del} \qquad \DerOSEx{\Gamma}{B}{\Del}}
    \qquad
    \infer[{[(\orP)]}]{\DerOSEx{\Gamma}{A_1\orP A_2}{\Del}}
    {\DerOSEx{\Gamma}{A_i}{\Del}}
    \qquad
    \infer[{[(\EX)]}]{\DerOSEx{\Gamma}{\EX x A}{\Del}}
    {\DerOSEx{\Gamma}{\subst A x t}{\Del}}
    \\\\
    \infer[{[(\top^+)]}]{
      \DerOSEx{\Gamma}{\top^+}{\Del}
    }{\strut}
    \qquad
    \infer[{[({\Init[1]})]l\mbox{ is $\mathcal P$-positive}}]{
      \DerOSEx{\Gamma}{l}{\Del}
    }{
      \atmCtxt\Gam,\non l,\atmCtxt[\mathcal L]{\non \Del}\models_{\mathcal T}
    }
    \qquad
    \infer[{[(\Release)] N\mbox{ not $\mathcal P$-positive}}]{\DerOSEx {\Gam} {N}{} }{\DerOSEx {\Gam} {\bullet} {N} }
    \\\midline
    \mbox{\textsf{Asynchronous rules}}\hfill\strut\\
    \infer[{[(\andN)]}]{\DerOSEx{\Gamma}{\mathcal X}{A\andN B,\Delta}}
    {\DerOSEx{\Gamma}{\mathcal X}{A,\Delta} \qquad \DerOSEx
      {\Gamma}{\mathcal X}{B,\Delta}}
    \quad
    \infer[{[(\orN)]}]{\DerOSEx {\Gamma}{\mathcal X} {A_1\orN A_2,\Delta} }
    {\DerOSEx {\Gamma}{\mathcal X} {A_1,A_2,\Delta} }
    \quad
    \infer[{[(\FA)] x\notin\FV{\Gam,\mathcal X,\Delta,\mathcal P} }]{
      \DerOSEx{\Gamma}{\mathcal X}{(\FA x A),\Delta}
    }{
      \DerOSEx {\Gamma}{\mathcal X} {A,\Delta}
    }\\\\
    \infer[{[(\bot^-)]}]{
      \DerOSEx {\Gam} {\mathcal X} {\bot^-,\Delta}
    }{
      \DerOSEx {\Gam} {\mathcal X}{\Delta}
    }
    \qquad
    \infer[{[(\top^-)]}]{
      \DerOSEx {\Gam} {\mathcal X} {\top^-}
    }{\strut}
    \qquad
    \infer[{[(\Store)] A\mbox{ literal or $\mathcal P$-positive}}]{
      \DerOSEx {\Gam} {\mathcal X} {A,\Delta}
    }{
      \DerOSEx {\Gam,\non A} {\mathcal X}{\Delta}[\polar{\non A}]
    }\\
    \midline
    \mbox{\textsf{Structural rules}}\hfill\strut\\
    \infer[{[(\Select)]\begin{array}l P \mbox{ not $\mathcal P$-negative}
      %% \\P \mbox{ is not $\not\eqT$}
    \end{array}}]{\DerOSEx {\Gam,\non P}{\bullet} {\Del}} {\DerOSEx {\Gam,\non P} {P} {\Del} }
    \qquad
    \infer[{[({\Init[2]})]}]{\DerOSEx {\Gam}{\bullet} {\Del}}
    {\atmCtxt\Gam,\atmCtxt[\mathcal L]{\non\Del}\models_\mathcal{T}}
    \downline
  \end{array}
  \]
  \caption{System \LKThEx}
  \label{fig:LKThEx}
\end{figure}

\begin{remark}
  The \LKThEx\ system is an extension system of \LKThp: the \LKThp\ system is the fragment of \LKThEx\ where every sequent $\DerOSEx {\Gam,\non P}{\bullet} {\Del}$ is requested to have either $\mathcal X=\bullet$ or $\Delta$ is empty. In terms of bottom-up proof-search, this only restricts the structural rules to the case where $\Delta$ is empty.

  As in \LKThp, (left-)weakening and (left-)contraction are height-preserving admissible in \LKThEx. 
\end{remark}

We can now prove a new version of identity:
\begin{lemma}[Identities]
  \label{con1}
For all $\mathcal P$, $A$, $\Delta$, the sequent $\DerOSEx{} {\non A} {A,\Delta}$ is provable in \LKThEx.
\end{lemma}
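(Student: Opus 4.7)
The plan is to induct on $\size A$ (noting $\size{\non A}=\size A$), with case analysis on the outermost constructor of $A$, which determines whether $\non A$ in focus can be decomposed synchronously right away or must first be released (and Release requires $\Delta$ empty). If $A$ is a negative connective or unit, then $\non A$ is $\mathcal P$-positive, and I decompose $A$ on the right with its asynchronous rule, then $\non A$ in focus with the dual synchronous rule, closing each branch by IH on a strict subformula. For instance, when $A=B\andN C$ so $\non A=\non B\orP \non C$, chaining $(\andN)$ with $(\orP)$ on each branch (picking the dual disjunct) reduces to $\DerOSEx{\Gamma}{\non B}{B,\Delta}$ and $\DerOSEx{\Gamma}{\non C}{C,\Delta}$, both given by IH applied to $\non B$ and $\non C$; the cases $\orN$ and $\forall$ are analogous (with the $\exists$-rule instantiating by the fresh variable introduced by the $\forall$-rule); $\top^-$ is closed by $(\top^-)$ on the right and $\bot^-$ by $(\top^+)$ on the focus.

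If $A$ is a literal $l$ not $\mathcal P$-positive, I split into two subcases. When $l$ is $\mathcal P$-negative, $\non l$ is $\mathcal P$-positive and $(\Init[1])$ closes the goal immediately, its side-condition satisfied because $\non l\in\atmCtxt[\mathcal L]{\non{l,\Delta}}$ and $\{l,\non l\}$ is semantically inconsistent. When $l\in\UP$, one $(\Store)$ of $l$ on the right extends the polarisation set to $\mathcal P,\non l$, so that $\non l$ becomes positive there, and $(\Init[1])$ closes for the same reason, now using $\non l\in\atmCtxt[\mathcal P,\non l]{\Gamma,\non l}$.

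The remaining case, $A$ positive (a positive connective, unit, or $\mathcal P$-positive literal), is the hardest: $\non A$ is $\mathcal P$-negative and cannot be decomposed in focus before a $(\Release)$, which itself requires $\Delta$ empty. My four-step strategy is (i) $(\Store)$ $A$ on the right, placing $\non A$ into the context; (ii) process the residual $\Delta$ asynchronously down to empty on every open branch, by a nested induction on $\size\Delta$, with $(\top^-)$ closing any branch where it surfaces; (iii) $(\Release)$ the focus, obtaining $\DerOSEx{\Gamma'}{\bullet}{\non A}$ with $\non A\in\Gamma'$; (iv) possibly apply one asynchronous step to $\non A$ on the right, then $(\Select)$ with $P:=A$ (permissible because $\non A\in\Gamma'$ and $A$ is not $\mathcal P$-negative), then the synchronous rule decomposing $A$, and conclude by IH on its strict subformulas. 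Step (iv) specialises to $(\top^-)$ after Release for $A=\bot^+$; to $(\bot^-)$ then $(\Select)$ then $(\top^+)$ for $A=\top^+$; and to $(\Select)$ followed by $(\Init[1])$ (whose side-condition is met by $\{l,\non l\}$) for a $\mathcal P$-positive literal $l$.

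The main obstacle is step (ii): every $(\Store)$ applied during the asynchronous elimination of $\Delta$ may extend the polarisation set, and the nested induction must track that $\non A$ persists in the context throughout this processing, so that the subsequent $(\Select)$ of step (iv) remains available and its side-condition continues to hold under the enlarged polarisation.
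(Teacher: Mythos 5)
Your proof is correct and follows the same overall strategy as the paper's: asynchronous-then-synchronous chaining for negative $A$, $(\Store)$ plus $(\Init[1])$ for non-positive literals, and the $(\Store)$ / empty-$\Delta$ / $(\Release)$ / $(\Select)$ pipeline for positive $A$. The one genuine difference is where the positive case bottoms out, and hence the induction measure. The paper stops right after $(\Select)$ and invokes the induction hypothesis on the whole sequent $\DerOSEx{}{P}{\non P}$, that is, the identity for $\non P$, a formula of the \emph{same} size; this forces it to refine the measure into a well-founded order in which, at equal connective count, negative formulae precede positive ones. You instead push one step further --- one asynchronous rule on $\non A$ followed by the synchronous rule for $A$ --- and land on identity instances for the immediate subformulae, so plain induction on $\size A$ suffices; this is slightly more elementary. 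On the point you flag as the main obstacle, emptying $\Delta$ before $(\Release)$: your nested induction is the right treatment, and in fact more careful than the paper's, which compresses this phase into a single step storing all of $\non\Delta$ at once (literally possible only when $\Delta$ contains no negative compound formulae). Your two invariants --- that $\non A$ persists in the context, and that $A$ cannot become negative under the enlarged polarisation set (polarisation sets remain syntactically consistent, so $\non A$ can never be added once $A$ is positive) --- are exactly what is needed for $(\Release)$ and $(\Select)$ to remain applicable at the end of that phase.
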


\begin{proof}
  By induction on $A$ using an extended but well-founded order on formulae:\\
  a formula is smaller than another one when
  \begin{itemize}
  \item either it contains fewer connectives
  \item or the number of connectives is equal, neither formulae are literals, and the former formula is negative and the latter is positive.
  \end{itemize}
  We now treat all possible shapes for the formula $A$:
  \begin{itemize}
  \item $A= A_1\andN A_2$
    \[
    \infer{
      \DerOSEx{}{\non{A_1} \orP \non{A_2}}{A_1\andN A_2,\Delta} 
    }{
      \infer{
        \DerOSEx {} {\non{A_1} \orP \non{A_2} } {A_1,\Delta} 
      }{\DerOSEx {} {\non{A_1}} {A_1,\Delta} }
      \quad
      \infer{
        \DerOSEx{} {\non{A_1} \orP \non{A_2} } {A_2,\Delta} 
      }{\DerOSEx{}{\non{A_2}}{A_2,\Delta}}
    }
    \]

    We can complete the proof on the left-hand side by applying the induction hypothesis on $A_1$ and 
    on the right-hand side by applying the induction hypothesis on $A_2$.

  \item $A= A_1\orN A_2$
    \[
    \infer{
      \DerOSEx{}{\non{A_1} \andP \non{A_2}}{A_1\orN A_2,\Delta}
    }{
      \infer{
        \DerOSEx{}{\non{A_1} \andP \non{A_2} }{A_1,A_2,\Delta}
      }{
        \DerOSEx {} {\non{A_1}} {A_1,A_2,\Delta}
        \quad
        \DerOSEx{} {\non{A_2}}  {A_1,A_2,\Delta}
      }
    }
    \]

    We can complete the proof on the left-hand side by applying the induction hypothesis on $A_1$ and 
    on the right-hand side by applying the induction hypothesis on $A_2$.

  \item $A= \FA x A$
    \[
    \infer[x \notin \FV{\EX x \non A,\Delta}]{
      \DerOSEx{}{\EX x \non A}{\FA x A,\Delta} 
    }{
      \infer{\DerOSEx{}{\EX x \non A}{A,\Delta}}{
        \iinfer[\mbox{choosing t=x}]{\DerOSEx {} { \{t/x\}\non A } {A,\Delta} }{
          \DerOSEx {} {\non A} {A,\Delta} 
        }
      }
    }
    \]

    We can complete the proof by applying the induction hypothesis on $A$.

  \item $A=\bot^-$ 
    \[
    \infer[\top^+]{\DerOSEx{}{\top^+}{\bot-,\Delta}}{}
    \]

  \item $A=\non p$, with $p$ not being $\mathcal P$-negative:
    \[
    \infer{\DerOSEx{}{p}{\non p,\Delta}}{
      \infer{\DerOSEx{p}{p}{\Delta}[\polar{p}]}{}
    }
    \]
    as $p$ is then $\polar p$-positive.

  \item $A=P$ where $P$ is $\mathcal P$-positive:
    \[
    \infer{\DerOSEx{}{\non P}{P,\Delta}}{
      \infer {\DerOSEx {\non P} {\non P} {\Delta}}{
        \infer {\DerOSEx {\non P,\non \Delta}{\non P}{}[\mathcal P']}{
          \infer {\DerOSEx {\non P}{\non P}{}[\mathcal P']}{
            \infer{\DerOSEx{\non P}{\bullet}{\non P}[\mathcal P']}{
              \infer{\DerOSEx{\non P}{P}{\non P}[\mathcal P']}{
                \DerOSEx{}{P}{\non P } 
              }
            }
          }
        }
      }
    }
    \]
    If $P$ is a literal, we complete the proof with the case just above.
    If it is not a literal, then $P$ is smaller than $\non P$ and we complete the proof by applying the induction hypothesis on $P$.
  \end{itemize}
\end{proof}

We now want to show that all asynchronous rules are invertible in \LKThEx. We first start with the following lemma:
\begin{lemma}[Generalised $(\Init)$ and negative $\Select$]\label{lem:weakselect}\strut

  The following rules are height-preserving admissible in \LKThEx:
  \[
    \infer[{[({\Init})]}]{\DerOSEx {\Gam}{\mathcal X} {\Del}}
    {\atmCtxt\Gam,\atmCtxt[\mathcal L]{\non\Del}\models_\mathcal{T}}
    \qquad
  \infer[{[{(\Select[-])}]}]{\DerOSEx{\Gamma}{\bullet}{\Delta}[\polar {\non l}]}{\DerOSEx{\Gamma}{l}{\Delta}[\polar {\non l}]}
 \]
  where $\non l\in\Gamma$ and it is not $\mathcal P$-negative in $(\Select[-])$.
\end{lemma}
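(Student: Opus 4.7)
The plan is to prove the two admissibility claims separately, both by induction: for the generalised $(\Init)$ rule by induction on the pair $(\mathcal X, \Delta)$ ordered by a well-founded measure, and for $(\Select[-])$ by induction on the height of the premise derivation.

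For the generalised $(\Init)$, I use induction on the multiset of sizes of formulae appearing in $\mathcal X$ and $\Delta$, treating $\bullet$ as minimal. In the base case $\mathcal X = \bullet$, $\Delta = \es$, the hypothesis reduces to $\atmCtxt\Gam \models_{\mathcal T}$ and we apply $(\Init[2])$ directly. The inductive cases proceed by case analysis: when $\Delta$ contains a negative top-level connective or $\bot^-, \top^-$, the corresponding asynchronous rule decomposes it and the hypothesis transfers---indeed strengthens, since new literal components can only grow $\atmCtxt[\mathcal L]{\non\Delta}$. When $\Delta$ contains a non-$\mathcal P$-positive literal or a positive non-literal formula, $(\Store)$ applies and a direct computation shows the hypothesis holds in the polarisation $\polar{\non A}$. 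When $\mathcal X$ is a $\mathcal P$-positive composite, the matching synchronous rule reduces its size; when $\mathcal X$ is a $\mathcal P$-positive literal, $(\Init[1])$ closes with side-condition obtained from the hypothesis by weakening; and when $\mathcal X = N$ is non-$\mathcal P$-positive with $\Delta = \es$, $(\Release)$ moves $N$ into $\Delta$, decreasing the measure via the $\bullet$-minimum convention.

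The principal obstacle is the case where $\Delta$ contains a $\mathcal P$-positive literal $l$, since storing $l$ by $(\Store)$ drops $\non l$ from $\atmCtxt[\mathcal L]{\non\Delta}$ without registering it in $\atmCtxtP{\mathcal P}{\Gamma, \non l}$ (because $\non l$ is $\mathcal P$-negative). When $\mathcal X = \bullet$, I plan to combine $(\Store)$ on $l$ with $(\Select)$ taking $P = l$ and then $(\Init[1])$: this is possible because in \LKThEx's $(\Init[1])$ rule the side-condition $\atmCtxt{\Gamma, \non l}, \non l, \atmCtxt[\mathcal L]{\non(\Delta \setminus l)} \models_{\mathcal T}$ retains the contribution of the other literals of $\Delta$ through $\atmCtxt[\mathcal L]{\non(\Delta \setminus l)}$, recovering the original hypothesis. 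When $\mathcal X = N$ is non-$\mathcal P$-positive with $\mathcal P$-positive literals in $\Delta$, the situation is delicate because $(\Release)$ is blocked by $\Delta \neq \es$ yet storing is lossy; my plan is to first process all non-$\mathcal P$-positive material of $\Delta$ (safe), then handle the remaining $\mathcal P$-positive literals together with $N$ by a careful sequence of stores, release, and select-init that leverages the fact that $(\Init[1])$ and $(\Init[2])$ in \LKThEx\ both reference $\atmCtxt[\mathcal L]{\non\Delta}$, so leaving a witness literal in $\Delta$ during critical steps preserves the needed inconsistency.

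For $(\Select[-])$, I induct on the height of the premise derivation of $\DerOSEx{\Gamma}{l}{\Delta}[\polar{\non l}]$, with case analysis on the last rule. The interesting case is $l \in \UP$, since otherwise $\polar{\non l} = \mathcal P$ and $(\Select[-])$ reduces to the standard $(\Select)$. In this case, $l$ is $\polar{\non l}$-negative throughout the derivation (any further $(\Store)$'s only enlarge the polarisation set), so $(\Init[1])$ and synchronous decompositions never apply at focus $l$; the derivation therefore consists of asynchronous rules on $\Delta$ culminating, when $\Delta = \es$, in a $(\Release)$ step. The transformation mirrors the derivation but replaces the focus $l$ by $\bullet$: asynchronous steps are copied verbatim with the IH supplying the $\bullet$-focused premise, and the terminal $(\Release)$ step is replaced as follows. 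Its premise is a derivation of $\DerOSEx{\Gamma^*}{\bullet}{l}[\polar{\non l}^*]$ for some $\Gamma^* \supseteq \Gamma$ and $\polar{\non l}^* \supseteq \polar{\non l}$; invoking height-preserving invertibility of $(\Store)$ (the \LKThEx\ analogue of Lemma~\ref{lem:invert}) gives $\DerOSEx{\Gamma^*, \non l}{\bullet}{}[\polar{\non l}^*]$---with the polarisation unchanged because $\non l \in \polar{\non l}^*$---and then admissible left contraction using $\non l \in \Gamma \subseteq \Gamma^*$ yields the desired $\DerOSEx{\Gamma^*}{\bullet}{}[\polar{\non l}^*]$.
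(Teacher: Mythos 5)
Your overall strategy for both rules---induction with case analysis on which rule builds the sequent, and, for $(\Select[-])$, mirroring the premiss derivation with the focus replaced by $\bullet$---is the same as the paper's, and most cases go through as you describe. But the case you yourself flag as the ``principal obstacle'' is not actually resolved. When $\mathcal X=N$ is a non-$\mathcal P$-positive formula and $\Delta$ contains $\mathcal P$-positive literals, the plan of ``leaving a witness literal in $\Delta$ during critical steps'' cannot be executed: the only rules of \LKThEx\ whose conclusion has a non-$\bullet$, non-positive focus \emph{and} a non-empty $\Delta$ are the asynchronous ones, so every literal of $\Delta$ must be consumed by $(\Store)$ before $(\Release)$ can fire (its conclusion requires $\Delta=\emptyset$), and neither $(\Init[1])$ nor $(\Init[2])$ is reachable while $N$ is still in focus. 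Once a $\mathcal P$-positive literal $l$ of $\Delta$ has been stored, $\non l$ sits in $\Gamma$ as a $\mathcal P$-negative literal, invisible to $\atmCtxtP{\mathcal P}{\cdot}$, and a later $(\Select)$ followed by $(\Init[1])$ recovers at most one such literal; if two positive literals of $\Delta$ are both needed for the inconsistency, the sequence of moves you sketch has no way to close. This is exactly where the paper commits to a concrete step: its $(\Store)$ case rests on the identity $\atmCtxt[\polar{\non A}]{\Gam,\non A},\atmCtxt[\mathcal L]{\non{\Del'}}=\atmCtxt{\Gam},\atmCtxt[\mathcal L]{\non A,\non{\Del'}}$ so that the induction hypothesis applies to the premiss, and after $(\Release)$ it closes immediately with $(\Init[2])$ rather than recursing. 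You need either to establish (and scrutinise) that identity in your $(\Store)$ case, or to find another argument; as written, your proof of generalised $(\Init)$ has a hole at its hardest point. (Relatedly, your measure does not obviously decrease across $(\Release)$, which turns $\mathcal X=N$ into $\bullet$ while adding $N$ to $\Delta$; the paper avoids the issue by not recursing after $(\Release)$.)

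Two further problems in the $(\Select[-])$ part. First, the claim that when $l\notin\UP$ the rule ``reduces to the standard $(\Select)$'' is backwards: the side condition makes $\non l$ non-$\mathcal P$-negative, so when $l$ is polarised it is $\mathcal P$-negative, which is precisely when $(\Select)$'s side condition fails---that is the situation $(\Select[-])$ exists to handle, and the paper treats all polarities uniformly, noting only that $l$ is $\polar{\non l}$-negative throughout. Second, your terminal step appeals to height-preserving invertibility of $(\Store)$ in \LKThEx, but in the paper that invertibility is proved \emph{after}, and \emph{using}, Lemma~\ref{lem:weakselect}; you would have to establish it independently, whereas the paper simply inspects how $\DerOSEx{\Gamma}{\bullet}{l}$ can have been derived and closes with contraction on $\non l\in\Gamma$ (as you then do). You also pass silently over the $(\Store)$ subcase $A=\non l$, where the polarisation commutation needed to re-express the premiss in the form required by the induction hypothesis fails and the paper closes outright with $(\Init[2])$.
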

\begin{proof}
  For each rule, by induction on the proof of the premiss.

  For $({\Init})$:
  \begin{itemize}
  \item if it is obtained by $(\andN),(\orN),(\forall),(\bot^-)$, we can straightforwardly use the induction hypothesis on the premiss(es), and if it is $(\top^-)$ it is trivial;
  \item if it is obtained by 
    \[
    \infer{\DerOSEx{\Gamma}{\mathcal X}{A,\Delta'}}{\DerOSEx{\Gamma,\non A}{\mathcal X}{\Delta'}[{\polar{\non A}}]}
    \]
    then we can use the induction hypothesis on the premiss as $\atmCtxt[\polar{\non A}]{\Gam,\non A},\atmCtxt[\mathcal L]{\non{\Del'}}=
\atmCtxt{\Gam},\atmCtxt[\mathcal L]{\non A,\non{\Del'}}$;
  \item the last possible way to obtain it is with $\Delta=\emptyset$ and
    \[
    \infer{\DerOSEx{\Gamma}{N}{}}{
      \DerOSEx{\Gamma}{\bullet}{N}
    }
    \]
    for some $N$ that is not $\mathcal P$-positive, 
    and we conclude with $(\Init[2])$.
  \end{itemize}

  For $(\Select[-])$, first notice that $l$ is $\polar{\non l}$-negative, and then:
  \begin{itemize}
  \item if again it is obtained by $(\andN),(\orN),(\forall),(\bot^-)$, we can straightforwardly use the induction hypothesis on the premiss(es), and if it is $(\top^-)$ it is trivial;
  \item if it is obtained by 
    \[
    \infer{\DerOSEx{\Gamma}{l}{A,\Delta'}[\polar{\non l}]}{\DerOSEx{\Gamma,\non A}{l}{\Delta'}[{\polar[\polar{\non l}]{\non A}}]}
    \]
    then we can use the induction hypothesis on the premiss, if $A$ is not $\non l$ (so that $\polar[\polar{\non l}]{\non A}=\polar[\polar{\non A}]{\non l}$ and $\non l$ is not $\polar{\non A}$-negative);
    if $A=\non l$, then we build
    \[
    \infer[{[({\Init[2]})]}]{\DerOSEx{\Gamma}{\bullet}{A,\Delta'}[\polar {\non l}]}{
      \atmCtxt[\polar{\non l}]\Gam,{\atmCtxt[\mathcal L]{\non A,\non{\Del'}}}\models_{\mathcal T}
    }
    \]
    as $A\in\atmCtxt[\polar{\non l}]\Gam$.    
  \item the last possible way to obtain it is with $\Delta=\emptyset$ and
    \[
    \infer{\DerOSEx{\Gamma}{l}{}[\polar {\non l}]}{
      \infer{ \DerOSEx{\Gamma}{\bullet}{l}[\polar {\non l}]}{
        \DerOSEx{\Gamma,\non l}{\bullet}{}[\polar {\non l}]
      }
    }
    \]
    and we conclude with the height-preserving admissibility of contraction.
  \end{itemize}

  %% For the second rule,
  %% \begin{itemize}
  %% \item again, if it is obtained by $(\andN),(\orN),(\forall),(\bot^-)$, we can straightforwardly use the induction hypothesis on the premiss(es), and if it is $(\top^-)$ it is trivial;
  %% \item if it is obtained by 
  %%   \[
  %%   \infer{\DerOSEx{\Gamma}{l}{l,\Delta}}{\DerOSEx{\Gamma,\non l}{l}{\Delta}[\polar{\non l}]}
  %%   \]
  %%   we apply the invertibility of the first rule;
  %% \item if it is obtained by 
  %%   \[
  %%   \infer{\DerOSEx{\Gamma}{l}{l,A,\Delta'}}{\DerOSEx{\Gamma,\non A}{l}{l,\Delta'}[\polar{\non A}]}
  %%   \]
  %%   then we can use the induction hypothesis on the premiss, if $A$ is not $\non l$ (because we need $\non l$ to not be $\mathcal P$-negative);
  %%   if $A=\non l$, then we build
  %%   \[
  %%   \infer{\DerOSEx{\Gamma,\non l}{\bullet}{A,\Delta'}[\polar {\non l}]}{
  %%     \atmCtxt[\polar{\non l}]\Gam,{\atmCtxt[\mathcal L]{\non A,\non{\Del'}}}\models_{\mathcal T}      
  %%   }
  %%   \]
  %%   as $A\in\atmCtxt[\polar{\non l}]\Gam$.    
  %% \item it cannot be obtained by any other rule.
  %% \end{itemize}
\end{proof}

We can now state and prove the invertibility of asynchronous rules:

\begin{lemma}[Invertibility of asynchronous rules]\strut

  All asynchronous rules are height-preserving invertible in \LKThEx.
\end{lemma}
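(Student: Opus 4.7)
The plan parallels the proof of Lemma~\ref{lem:invert} for \LKThp, extended to \LKThEx\ where sequents carry an extra focus $\mathcal X$ and the structural rules $(\Select)$, $(\Init[1])$, $(\Init[2])$ may now fire with a non-empty right context $\Delta$. The proof goes by induction on the height of the assumed derivation, with a case analysis on its last rule, for each of the rules to invert: $(\andN)$, $(\orN)$, $(\FA)$, $(\bot^-)$, $(\Store)$. The rule $(\top^-)$ is premiss-less, so its invertibility is vacuous. Each reassembly uses at most one rule above one application of the induction hypothesis, so the height bound is preserved.

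The bulk of the case analysis is uniform. When the last rule is an asynchronous rule acting on a formula of $\Delta$ distinct from the principal formula under inversion, it commutes: apply the induction hypothesis to each premiss and re-apply the rule. When it is a synchronous rule acting on the focus $\mathcal X$, it leaves $\Delta$ untouched and the same commutation works. When it is $(\Select)$, $(\Init[1])$ or $(\Init[2])$, we re-apply the same rule in the modified sequent; side-conditions of the form $\atmCtxt\Gam, \atmCtxt[\mathcal L]{\non\Del} \models_{\mathcal T}$ remain verified since decomposing a principal formula of $\Delta$ can only enlarge $\atmCtxt[\mathcal L]{\non\Del}$, and the admissible generalised $(\Init)$ rule of Lemma~\ref{lem:weakselect} provides a uniform safety net closing any sequent whose literal information is semantically inconsistent.

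The delicate step, and the only genuine departure from the \LKThp\ template, is the inversion of $(\Store)$ on a formula $A$ when the last rule is also $(\Store)$, on a distinct formula $B$. After applying the induction hypothesis we wish to reassemble by re-applying $(\Store)$ on $B$; this requires the polarisation commutation $\polar[\polar{\non A}]{\non B} = \polar[\polar{\non B}]{\non A}$, which holds in every situation except the exceptional one $A = \non B \in \UP$. In that case both $A$ and $B$ are $\mathcal P$-unpolarised literals with $B = \non A$; the target sequent of the reassembly is $\DerOSEx{\Gamma, B}{\mathcal X}{B, \Delta''}[\mathcal P, B]$, whose left literal set $\atmCtxt[\mathcal P, B]{\Gamma, B}$ contains $B$ (since $B$ is now $(\mathcal P, B)$-positive) and whose right literal set $\atmCtxt[\mathcal L]{\non B, \non \Delta''}$ contains $\non B$; the combined set carries the complementary pair $\{B, \non B\}$ and is therefore syntactically, hence semantically, inconsistent, so the target closes directly by the admissible generalised $(\Init)$ rule of Lemma~\ref{lem:weakselect}. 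This exceptional commutation is the main obstacle; every other case reduces to routine permutations of rules and invocation of the induction hypothesis.
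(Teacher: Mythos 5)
Your proof takes essentially the same route as the paper's: induction on the height of the derivation, case analysis on its last rule, commutation of rules in the routine cases, and, for the genuinely delicate case --- inverting $(\Store)$ on $A$ against a last rule $(\Store)$ on $B$ when the commutation $\polar[\polar{\non A}]{\non B}=\polar[\polar{\non B}]{\non A}$ fails because $A=\non B\in\UP$ --- closing the target sequent directly with the generalised $(\Init)$ of Lemma~\ref{lem:weakselect}, which is exactly what the paper does.

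One sub-case is glossed over and would fail as literally stated. When inverting $(\Store)$ on an unpolarised literal $A$, the polarisation set of the target sequent changes from $\mathcal P$ to $\polar{\non A}=\mathcal P,\non A$; if the last rule of the given derivation is $(\Select)$ on a formula $P$ that is not $\mathcal P$-negative, you cannot always ``re-apply the same rule'', since for $P=A$ the side condition breaks ($A$ has become $(\mathcal P,\non A)$-negative). The paper repairs this with the height-preserving admissible rule $(\Select[-])$ --- the second half of the very Lemma~\ref{lem:weakselect} you invoke for $(\Init)$ --- which permits selecting a literal whose negation has just been polarised positively. With that one adjustment your argument coincides with the paper's.
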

\begin{proof}
  By induction on the derivation proving the conclusion of the asynchronous rule considered.
  \begin{itemize}
    % About A \andN B
  \item Inversion of $A \andN B$: by case analysis on the last rule actually used
    \begin{itemize}
      % (i) AndN 
    \item $\infer{\DerOSEx{\Gamma}{\mathcal X}{A\andN B,C\andN D,\Delta}} {\DerOSEx{\Gamma}{\mathcal X}{A\andN B,C,\Delta} \quad \DerOSEx{\Gamma}{\mathcal X}{A\andN B,D,\Delta}}$\\

      By induction hypothesis we get

      \hfill
      $\infer{\DerOSEx{\Gamma}{\mathcal X}{A,C\andN D,\Delta}} {\DerOSEx{\Gamma}{\mathcal X}{A,C,\Delta} \qquad \DerOSEx{\Gamma}{\mathcal X}{A,D,\Delta}}$
      \hfill and\hfill
      $\infer{\DerOSEx{\Gamma}{\mathcal X}{B,C\andN D,\Delta}} {\DerOSEx{\Gamma}{\mathcal X}{B,C,\Delta} \quad \DerOSEx{\Gamma}{\mathcal X}{B,D,\Delta}}$

    \item
      % (ii) ORN
      $\infer{\DerOSEx{\Gamma}{\mathcal X}{A\andN B,C\orN D,\Delta}} {\DerOSEx{\Gamma}{\mathcal X}{A\andN B,C, D,\Delta}}$\\

      By induction hypothesis we get

      \hfill
      $\infer{\DerOSEx{\Gamma}{\mathcal X}{A,C\orN D,\Delta}} {\DerOSEx{\Gamma}{\mathcal X}{A,C, D,\Delta}}$ \hfill and\hfill
      $\infer{\DerOSEx{\Gamma}{\mathcal X}{B,C\orN D,\Delta}} {\DerOSEx{\Gamma}{\mathcal X}{B,C, D,\Delta}}$

    \item
      % (iii) Forall
      $\infer[x\notin\FV{\Gam,\mathcal X,\Delta,A\andN B}]{\DerOSEx{\Gamma}{\mathcal X}{A\andN B,(\FA x C),\Delta}} {\DerOSEx{\Gamma}{\mathcal X}{A\andN B,C,\Delta}}$ \\

      By induction hypothesis we get

      $\infer[x\notin\FV{\Gam,\mathcal X,\Delta,A}]{\DerOSEx{\Gamma}{\mathcal X}{A,(\FA x C),\Delta}} {\DerOSEx{\Gamma}{\mathcal X}{A,C,\Delta}}$ and 
      $\infer[x\notin\FV{\Gam,\mathcal X,\Delta,B}]{\DerOSEx{\Gamma}{\mathcal X}{B,(\FA x C),\Delta}} {\DerOSEx{\Gamma}{\mathcal X}{B,C,\Delta}}$

    \item
      % (iv)C positive or literal
      $\infer[\mbox{\begin{tabular}l$C$ literal or\\ $\mathcal P$-positive\end{tabular}}]
      {\DerOSEx{\Gamma}{\mathcal X}{A\andN B,C,\Delta}} {\DerOSEx{\Gamma,\non C}{\mathcal X}{A\andN B,\Delta}[\polar{\non C}]}$ \\ 

      By induction hypothesis we get

      \hfill
      $\infer[\mbox{\begin{tabular}l$C$ literal or\\ $\mathcal P$-positive\end{tabular}}]{
        \DerOSEx{\Gamma}{\mathcal X}{A,C,\Delta}[\polar{\non C}]
      }{\DerOSEx{\Gamma,\non C}{\mathcal X}{A,\Delta}}$\hfill   and\hfill
      $\infer[\mbox{\begin{tabular}l$C$ literal or\\ $\mathcal P$-positive\end{tabular}}]{\DerOSEx{\Gamma}{\mathcal X}{ B,C,\Delta}} {\DerOSEx{\Gamma,\non C}{\mathcal X}{B,\Delta}[\polar{\non C}]} $ 

    \item
      % (iv) C=\bot^-,\Del
      $\infer{\DerOSEx{\Gamma}{\mathcal X}{A\andN B,\bot^-,\Delta}} {\DerOSEx{\Gamma}{\mathcal X}{A\andN B,\Delta}}$ \\ 

      By induction hypothesis we get

      \hfill
      $\infer{\DerOSEx{\Gamma}{\mathcal X}{A,\bot^-,\Delta}} {\DerOSEx{\Gamma}{\mathcal X}{A,\Delta}}$\hfill and\hfill
      $\infer{\DerOSEx{\Gamma}{\mathcal X}{ B,\bot^-,\Delta}} {\DerOSEx{\Gamma}{\mathcal X}{B,\Delta}} $ 

    \item
      % (iv) C=\top^-
      $\infer{\DerOSEx{\Gamma}{\mathcal X}{A\andN B,\top^-,\Delta}} {\strut}$\\

      We get
      
      \hfill
      $\infer{\DerOSEx{\Gamma}{\mathcal X}{A,\top^-,\Delta}} {}$
      \hfill and\hfill$\infer{\DerOSEx{\Gamma}{\mathcal X}{B,\top^-,\Delta}} {}$

    \item $\infer{\DerOSEx{\Gamma}{C\andP D,}{A\andN B,\Delta}} {\DerOSEx{\Gamma}{C}{A\andN B,\Delta} \quad \DerOSEx{\Gamma}{D}{A\andN B,\Delta}}$\\

      By induction hypothesis we get

      \hfill
      $\infer{\DerOSEx{\Gamma}{C\andP D}{A,\Delta}} {\DerOSEx{\Gamma}{C}{A,\Delta} \qquad \DerOSEx{\Gamma}{D}{A,\Delta}}$
      \hfill and\hfill
      $\infer{\DerOSEx{\Gamma}{C\andP D}{B,\Delta}} {\DerOSEx{\Gamma}{C}{B,\Delta} \quad \DerOSEx{\Gamma}{D}{B,\Delta}}$

    \item
      % (vi)C  ORP D
      $\infer{\DerOSEx{\Gamma}{C_1\orP C_2}{A\andN B,\Delta}} {\DerOSEx{\Gamma}{C_i}{A\andN B,\Delta}}$\\

      By induction hypothesis we get

      \hfill
      $\infer{\DerOSEx{\Gamma}{C_1\orP C_2}{A,\Delta}} {\DerOSEx{\Gamma}{C_i}{A,\Delta}}$ \hfill and\hfill
      $\infer{\DerOSEx{\Gamma}{C_1\orP C_2}{B,\Delta}} {\DerOSEx{\Gamma}{C_i}{B,\Delta}}$

    \item
      % (vii) C Exist
      $\infer{\DerOSEx{\Gamma}{\EX x C}{A\andN B,\Delta}} {\DerOSEx{\Gamma}{\subst C x t}{A\andN B,\Delta}} $\\

      By induction hypothesis we get

      \hfill $\infer{\DerOSEx{\Gamma}{\EX x C}{A,\Delta}} {\DerOSEx{\Gamma}{\subst C x t}{A,\Delta}}$ \hfill and\hfill
      $\infer{\DerOSEx{\Gamma}{\EX x C}{B,\Delta}} {\DerOSEx{\Gamma}{\subst C x t}{B,\Delta}}$

      \item %newly added
         $\infer{\DerOSEx{\Gamma}{\top^+}{A\andN B,\Delta}} {\strut}$\\
         
         We get

         \hfill
         $\infer{\DerOSEx{\Gamma}{\top^+}{A,\Delta}} {}$   \hfill and\hfill
         $\infer{\DerOSEx{\Gamma}{\top^+}{ B,\Delta}} {} $ 

    \item
      %   (iX) Init
      $\infer[\atmCtxt\Gam,\non p,{\atmCtxt[\mathcal L]{\non\Del}}\models_{\mathcal T}]
	{\DerOSEx{\Gamma}{p}{A\andN B,\Delta}} {\strut}$
        \qquad with $p$ being $\mathcal P$-positive\\

        We get

        \hfill
        $  \infer[\atmCtxt\Gam,\non p,{\atmCtxt[\mathcal L]{\non\Del}}\models_{\mathcal T}]
	{\DerOSEx{\Gamma}{p}{A,\Delta}} {\strut} $
	\hfill and\hfill
        $  \infer[\atmCtxt\Gam,\non p,{\atmCtxt[\mathcal L]{\non\Del}}\models_{\mathcal T}]
	{\DerOSEx{\Gamma}{p}{B,\Delta}} {\strut} $

    \item
      %   (iX) Init
      $\infer[\atmCtxt\Gam,{\atmCtxt[\mathcal L]{\non\Del}}\models_{\mathcal T}]
	{\DerOSEx{\Gamma}{\bullet}{A\andN B,\Delta}} {\strut}$\\

        We get

        \hfill
        $  \infer[\atmCtxt\Gam,{\atmCtxt[\mathcal L]{\non\Del}}\models_{\mathcal T}]
	{\DerOSEx{\Gamma}{\bullet}{A,\Delta}} {\strut} $
	\hfill and\hfill
        $  \infer[\atmCtxt\Gam,{\atmCtxt[\mathcal L]{\non\Del}}\models_{\mathcal T}]
	{\DerOSEx{\Gamma}{\bullet}{B,\Delta}} {\strut} $

    \item
      % (iv) C=\bot^-,\Del
      $\infer{\DerOSEx{\Gamma}{\bullet}{A\andN B,\Delta}} {\DerOSEx{\Gamma}{P}{A\andN B,\Delta}}$ where $\non P\in\Gamma$ is not $\mathcal P$-positive\\ 

      By induction hypothesis we get

      \hfill
      $\infer{\DerOSEx{\Gamma}{\bullet}{A,\Delta}} {\DerOSEx{\Gamma}{P}{A,\Delta}}$\hfill and\hfill
      $\infer{\DerOSEx{\Gamma}{\bullet}{ B,\Delta}} {\DerOSEx{\Gamma}{P}{B,\Delta}} $ 

    \end{itemize}

  \item Inversion of $A \orN B$   % About A \orN B
    \begin{itemize}
      % (i) AndN
    \item $\infer{\DerOSEx{\Gamma}{\mathcal X}{A\orN B,C\andN D,\Delta}} {\DerOSEx{\Gamma}{\mathcal X}{A\orN B,C,\Delta} \quad \DerOSEx{\Gamma}{\mathcal X}{A\orN B,D,\Delta}}$\\

      By induction hypothesis we get \hfill $ \infer{\DerOSEx{\Gamma}{\mathcal X}{A,B,C\andN D,\Delta}} {\DerOSEx{\Gamma}{\mathcal X}{A,B,C,\Delta}\quad\DerOSEx{\Gamma}{\mathcal X}{A,B,D,\Delta}}$ 

    \item
      % (ii) ORN
      $\infer{\DerOSEx{\Gamma}{\mathcal X}{A\orN B,C\orN D,\Delta}} {\DerOSEx{\Gamma}{\mathcal X}{A\orN B,C, D,\Delta}}$  

      By induction hypothesis we get \hfill $\infer{\DerOSEx{\Gamma}{\mathcal X}{A,B,C\orN D,\Delta}} {\DerOSEx{\Gamma}{\mathcal X}{A,B,C, D,\Delta}}$ 

    \item%(iii) Forall
      $\infer[x\notin\FV{\Gam,\mathcal X,A\orN B,\Delta}]{\DerOSEx{\Gamma}{\mathcal X}{A\orN B,(\FA x C),\Delta}} {\DerOSEx{\Gamma}{\mathcal X}{A\orN B,C,\Delta}}$   

      By induction hypothesis we get \hfill $\infer[x\notin\FV{\Gam,\mathcal X,A, B,\Delta}]{\DerOSEx{\Gamma}{\mathcal X},{A,B,(\FA x C),\Delta}} {\DerOSEx{\Gamma}{\mathcal X}{A,B,C,\Delta}}$  

    \item%(iv)  C Positive or literal
      $\infer[\mbox{\begin{tabular}l$C$ literal or\\ $\mathcal P$-positive\end{tabular}}]{\DerOSEx{\Gamma}{\mathcal X}{A\orN B,C,\Delta}} 
      {\DerOSEx{\Gamma,\non C}{\mathcal X}{A\orN B,\Delta}[\polar{\non C}]}$  

      By induction hypothesis we get \hfill  $\infer[\mbox{\begin{tabular}l$C$ literal or\\ $\mathcal P$-positive\end{tabular}}]{\DerOSEx{\Gamma}{\mathcal X}{A,B,C,\Delta}} {\DerOSEx{\Gamma,\non C}{\mathcal X}{A,B,\Delta}[\polar{\non C}]}$   
      
      \item
      % (iv) C=\bot^-,\Del
      $\infer{\DerOSEx{\Gamma}{\mathcal X}{A\orN B,\bot^-,\Delta}} {\DerOSEx{\Gamma}{\mathcal X}{A,B,\Delta}}$ \\ 

      By induction hypothesis we get\hfill $\infer{\DerOSEx{\Gamma}{\mathcal X}{A,B,\bot^-,\Delta}} {\DerOSEx{\Gamma}{\mathcal X}{A,B,\Delta}}$  

    \item
      % (iv) C=\top^-
      $\infer{\DerOSEx{\Gamma}{\mathcal X}{A\orN B,\top^-,\Delta}} {}$ \\ 

      We get\hfill  $\infer{\DerOSEx{\Gamma}{\mathcal X}{A,B,\top^-,\Delta}} {}$

      % (v) AndP
    \item $\infer{\DerOSEx{\Gamma}{C\andP D}{A\orN B,\Delta}} {\DerOSEx{\Gamma}{C}{A\orN B,\Delta} \quad \DerOSEx{\Gamma}{D}{A\orN B,\Delta}}$   

      By induction hypothesis we get  \hfill $ \infer{\DerOSEx{\Gamma}{C\andP D}{A,B,C\andN D,\Delta}} {\DerOSEx{\Gamma}{C}{A,B,\Delta}\quad\DerOSEx{\Gamma}{D}{A,B,\Delta}}$ 

    \item
      % (Vi) ORP
      $\infer{\DerOSEx{\Gamma}{C_1\orP C_2}{A\orN B, \Delta}} {\DerOSEx{\Gamma}{C_i}{A\orN B,\Delta}}$  

      By induction hypothesis we get\hfill $\infer{\DerOSEx{\Gamma}{C_1\orP C_2}{A,B,\Delta}} {\DerOSEx{\Gamma}{C_i}{A,B,\Delta}}$ 

    \item%(vii)  Exist
      $\infer{\DerOSEx{\Gamma}{\EX x C}{A\orN B,\Delta}} {\DerOSEx{\Gamma}{\subst C x t}{A\orN B,\Delta}}$   

      By induction hypothesis we get \hfill$\infer{\DerOSEx{\Gamma}{\EX x C}{A,B,\Delta}} {\DerOSEx{\Gamma}{\subst C x t}{A,B,\Delta}}$  

    \item
      $\infer{\DerOSEx{\Gamma}{\top^+}{A\orN B,\Delta}} {}$  \\
      
      We get \hfill
      $\infer{\DerOSEx{\Gamma}{\top^+}{A,B,\Delta}} {}$

    \item
            %   (iX) Init
       $  \infer[\atmCtxt\Gam,\non p,{\atmCtxt[\mathcal L]{\non\Del}}\models_{\mathcal T}]
	{\DerOSEx{\Gamma}{p}{A\orN B,\Delta}} {\strut} $ with $p$ being $\mathcal P$-positive

         We get\hfill
       $  \infer[\atmCtxt\Gam,\non p,{\atmCtxt[\mathcal L]{\non\Del}}\models_{\mathcal T}]
	{\DerOSEx{\Gamma}{p}{A,B,\Delta}} {\strut} $
   
    \item
            %   (iX) Init
       $  \infer[\atmCtxt\Gam,{\atmCtxt[\mathcal L]{\non\Del}}\models_{\mathcal T}]
	{\DerOSEx{\Gamma}{\bullet}{A\orN B,\Delta}} {\strut} $

         We get\hfill
       $  \infer[\atmCtxt\Gam,{\atmCtxt[\mathcal L]{\non\Del}}\models_{\mathcal T}]
	{\DerOSEx{\Gamma}{\bullet}{A,B,\Delta}} {\strut} $
   
    \item
      % (iv) C=\bot^-,\Del
      $\infer{\DerOSEx{\Gamma}{\bullet}{A\orN B,\Delta}} {\DerOSEx{\Gamma}{P}{A\orN B,\Delta}}$ where $\non P\in\Gamma$ is not $\mathcal P$-positive\\ 

      By induction hypothesis we get  \hfill
      $\infer{\DerOSEx{\Gamma}{\bullet}{A,B,\Delta}} {\DerOSEx{\Gamma}{P}{A,B,\Delta}}$ 
 
    \end{itemize}

  \item Inversion of $\FA x A$  % About (\FA x A)

    \begin{itemize}
      % (i) AndN
    \item $\infer{\DerOSEx{\Gamma}{\mathcal X}{(\FA x A),C\andN D,\Delta}} {\DerOSEx{\Gamma}{\mathcal X}{(\FA x A),C,\Delta} \quad \DerOSEx{\Gamma}{\mathcal X}{(\FA x A),D,\Delta}}$ 

      By induction hypothesis  we get $\infer[x\notin\FV{\Gam,\mathcal X,\Delta}]{\DerOSEx{\Gamma}{\mathcal X}{A,C\andN D,\Delta}} {{\DerOSEx{\Gamma}{\mathcal X}{A,C,\Delta}}$ \quad ${\DerOSEx{\Gamma}{\mathcal X}{A,D,\Delta}}}$ 
      
    \item
      % (ii) ORN
      $\infer{\DerOSEx{\Gamma}{\mathcal X}{(\FA x A),C\orN D,\Delta}} {\DerOSEx{\Gamma}{\mathcal X}{(\FA x A),C, D,\Delta}}$  

      By induction hypothesis we get\hfill $\infer{\DerOSEx{\Gamma}{\mathcal X}{A,C\orN D,\Delta}} {\DerOSEx{\Gamma}{\mathcal X}{A,C, D,\Delta}}$ 

    \item  
      % (iii) Forall
      $\infer[y\notin\FV{\Gam,\mathcal X,(\FA x A),\Delta}]{
      \DerOSEx{\Gamma}{\mathcal X}{(\FA x A),(\FA y D),\Delta}} {
      \DerOSEx{\Gamma}{\mathcal X}{(\FA x A),D,\Delta}}$  

      By induction hypothesis we get\hfill $\infer[y\notin\FV{\Gam,\mathcal X,A,\Delta}]{\DerOSEx{\Gamma}{\mathcal X}{A,(\FA y D),\Delta}} {\DerOSEx{\Gamma}{\mathcal X}{A,D,\Delta}} $

    \item
      % (iv) Positive or negative literal
      $\infer[\mbox{\begin{tabular}l$C$ literal or\\ $\mathcal P$-positive\end{tabular}}]{\DerOSEx{\Gamma}{\mathcal X}{(\FA x A),C,\Delta}} {\DerOSEx{\Gamma,\non C}{\mathcal X}{(\FA x A),\Delta}[\polar{\non C}]}$  

      By induction hypothesis we get\hfill $\infer[\mbox{\begin{tabular}l$C$ literal or\\ $\mathcal P$-positive\end{tabular}}]{\DerOSEx{\Gamma}{\mathcal X}{A,C,\Delta}} {\DerOSEx{\Gamma,\non C}{\mathcal X}{A,\Delta}[\polar{\non C}]}$     
      
      \item
      % (iv) C=\bot^-,\Del
      $\infer{\DerOSEx{\Gamma}{\mathcal X}{(\FA x A),\bot^-,\Delta}} {\DerOSEx{\Gamma}{\mathcal X}{(\FA x A),\Delta}}$ \\ 

      By induction hypothesis we get\hfill $\infer{\DerOSEx{\Gamma}{\mathcal X}{A,\bot^-,\Delta}} {\DerOSEx{\Gamma}{\mathcal X}{A,\Delta}}$

    \item
      % (iv) C=\top^-
      $\infer{\DerOSEx{\Gamma}{\mathcal X}{(\FA x A),\top^-,\Delta}} {}$ \\ 
      We get \hfill  $\infer{\DerOSEx{\Gamma}{\mathcal X}{A,\top^-,\Delta}} {}$

    \item
      % (v) AndP
      $\infer{\DerOSEx{\Gamma}{C\andP D}{(\FA x A),\Delta}} {\DerOSEx{\Gamma}{C}{(\FA x A),\Delta} \quad \DerOSEx{\Gamma}{D}{(\FA x A),\Delta}}$ 

      By induction hypothesis  we get\hfill $\infer{\DerOSEx{\Gamma}{C\andP D}{A,\Delta}} {\DerOSEx{\Gamma}{C}{A,\Delta} \quad \DerOSEx{\Gamma}{D}{A,\Delta}}$ 

    \item
      % (vi) ORP
      $\infer{\DerOSEx{\Gamma}{C_1\orP C_2}{(\FA x A),\Delta}} {\DerOSEx{\Gamma}{C_i}{(\FA x A),\Delta}}$  

      By induction hypothesis we get\hfill $\infer{\DerOSEx{\Gamma}{C_1\orP C_2}{A,\Delta}} {\DerOSEx{\Gamma}{C_i}{A,\Delta}}$ 

    \item  
      % (vii) Exist
      $\infer{\DerOSEx{\Gamma}{\EX x D}{(\FA x A),\Delta}} {\DerOSEx{\Gamma}{\subst D x t}{(\FA x A),\Delta}}$  

      By induction hypothesis we get\hfill $\infer{\DerOSEx{\Gamma}{\EX x D}{A,\Delta}} {\DerOSEx{\Gamma}{\subst D x t}{A,\Delta}} $

      \item
            %   (viii)  top+
         $\infer{\DerOSEx{\Gamma}{\top^+}{(\FA x A),C,\Delta}} {}$  
         
         We get\hfill $\infer{\DerOSEx{\Gamma}{\top^+}{A,\Delta}} {}$   	    

       \item
       $  \infer[\atmCtxt\Gam,\non p,{\atmCtxt[\mathcal L]{\non\Del}}\models_{\mathcal T}]
	{\DerOSEx{\Gamma}{p}{(\FA x A),\Delta}} {\strut} $ with $p$ being $\mathcal P$-positive

         We get\hfill
       $  \infer[\atmCtxt\Gam,\non p,{\atmCtxt[\mathcal L]{\non\Del}}\models_{\mathcal T}]
	{\DerOSEx{\Gamma}{p}{A,\Delta}} {\strut} $
	    
       \item
       $  \infer[\atmCtxt\Gam,{\atmCtxt[\mathcal L]{\non\Del}}\models_{\mathcal T}]
	{\DerOSEx{\Gamma}{\bullet}{(\FA x A),\Delta}} {\strut} $

         We get\hfill
       $  \infer[\atmCtxt\Gam,{\atmCtxt[\mathcal L]{\non\Del}}\models_{\mathcal T}]
	{\DerOSEx{\Gamma}{\bullet}{A,\Delta}} {\strut} $
	    
    \item
      % (iv) C=\bot^-,\Del
      $\infer{\DerOSEx{\Gamma}{\bullet}{(\FA x A),\Delta}} {\DerOSEx{\Gamma}{P}{(\FA x A),\Delta}}$ where $\non P\in\Gamma$ is not $\mathcal P$-positive\\ 

      By induction hypothesis we get  \hfill
      $\infer{\DerOSEx{\Gamma}{\bullet}{A,\Delta}} {\DerOSEx{\Gamma}{P}{A,\Delta}}$ 

    \end{itemize}

  \item Inversion of storing a literal or $\mathcal P$-positive formulae $A$
    \begin{itemize}% About (Literal)
      % (i) AndP
    \item $\infer{\DerOSEx{\Gamma}{\mathcal X}{A,C\andN D,\Delta}} {\DerOSEx{\Gamma}{\mathcal X}{A,C,\Delta} \quad \DerOSEx{\Gamma}{\mathcal X}{A,D,\Delta}}$  

      By induction hypothesis we get\hfill $\infer{\DerOSEx{\Gamma,\non A}{\mathcal X}{C\andN D,\Delta}[\polar{\non A}]} {\DerOSEx{\Gamma,\non A}{\mathcal X}{C,\Delta}[\polar{\non A}] \quad \DerOSEx{\Gamma,\non A}{\mathcal X}{D,\Delta}[\polar{\non A}]}$ 

    \item%(ii) ORN
      $\infer{\DerOSEx{\Gamma}{\mathcal X}{A,C\orN D,\Delta}} {\DerOSEx{\Gamma}{\mathcal X}{A,C, D,\Delta}}$  

      By induction hypothesis we get\hfill $\infer{\DerOSEx{\Gamma,\non A}{\mathcal X}{C\orN D,\Delta}[\polar{\non A}]} {\DerOSEx{\Gamma,\non A}{\mathcal X}{C, D,\Delta}[\polar{\non A}]}$ 

    \item%(iii) Forall
      $\infer[x\notin\FV{\Gam,\mathcal X,A,\Delta}]{\DerOSEx{\Gamma}{\mathcal X}{A,(\FA x D),\Delta}} {\DerOSEx{\Gamma}{\mathcal X}{A,D,\Delta}}$

      By induction hypothesis we get\hfill 
      $\infer[x\notin\FV{\Gam,\non A,\mathcal X,\Delta}]{\DerOSEx{\Gamma,\non A}{\mathcal X}{(\FA x D),\Delta}[\polar{\non A}]} {\DerOSEx{\Gamma,\non A}{\mathcal X}{D,\Delta}[\polar{\non A}]}$  

    \item%(iv) Positive or negative literal
      $\infer[\mbox{\begin{tabular}l$B$ literal or\\ $\mathcal P$-positive\end{tabular}}]{\DerOSEx{\Gamma}{\mathcal X}{A,B,\Delta}} {\DerOSEx{\Gamma,\non B}{\mathcal X}{A,\Delta}[\polar{\non B}]}$

      We build\hfill $\infer[\mbox{\begin{tabular}l$B$ literal or\\ $\mathcal P$-positive\end{tabular}}]{\DerOSEx{\Gamma,\non A}{\mathcal X}{B,\Delta}[\polar{\non A}]} {\DerOSEx{\Gamma,\non A,\non B}{\mathcal X}{\Delta}[{\polar[\polar{\non A}]{\non B}}]}$\\
      proving the premiss using the induction hypothesis in case $\polar[\polar{\non B}]{\non A}=\polar[\polar{\non A}]{\non B}$, which holds unless $A=\non B$ and $A\in \UP$.

      In that case we have $\polar{\non A}=\mathcal P,\non A$, and we prove $\DerOSEx{\Gamma,\non A}{\mathcal X}{B,\Delta}[\polar{\non A}]$ with $(\Init)$ (Lemma~\ref{lem:weakselect}), as $\Theory{\atmCtxt[\polar{\non A}]{\Gam,\non A},\atmCtxt[\mathcal L]{\non B,\non \Delta}}$.

     \item
      % (iv) C=\bot^-,\Del
      $\infer{\DerOSEx{\Gamma}{\mathcal X}{A,\bot^-,\Delta}} {\DerOSEx{\Gamma}{\mathcal X}{A,\Delta}}$

      By induction hypothesis we get\hfill
      $\infer{\DerOSEx{\Gamma,\non A}{\mathcal X}{\bot^-,\Delta}[\polar{\non A}]} {\DerOSEx{\Gamma,\non A}{\mathcal X}{\Delta}[\polar{\non A}]}$  

    \item
      % (iv) C=\top^-
      $\infer{\DerOSEx{\Gamma}{\mathcal X}{A,\top^-,\Delta}} {}$

We get\hfill
   $\infer{\DerOSEx{\Gamma,\non A}{\mathcal X}{\top^-,\Delta}[\polar{\non A}]} {}$ \\ 

      % (v) AndP
    \item $\infer{\DerOSEx{\Gamma}{C\andP D}{A,\Delta}} {\DerOSEx{\Gamma}{C}{A,\Delta} \quad \DerOSEx{\Gamma}{D}{A,\Delta}}$  

      By induction hypothesis we get\hfill $\infer{\DerOSEx{\Gamma,\non A}{C\andP D}{\Delta}[\polar{\non A}]} {\DerOSEx{\Gamma,\non A}{C}{\Delta} \quad \DerOSEx{\Gamma,\non A}{D}{\Delta}[\polar{\non A}]}$ 

    \item%(vi) OrP
      $\infer{\DerOSEx{\Gamma}{C_1\orP C_2}{A,\Delta}} {\DerOSEx{\Gamma}{C_i}{A,\Delta}}$  

      By induction hypothesis we get\hfill $\infer{\DerOSEx{\Gamma,\non A}{C_1\orP C_2}{\Delta}[\polar{\non A}]} {\DerOSEx{\Gamma,\non A}{C_i}{\Delta}[\polar{\non A}]}$ 

    \item%(vii) Exist
      $\infer{\DerOSEx{\Gamma}{\EX x D}{A,\Delta}} {\DerOSEx{\Gamma}{\subst D x t}{A,\Delta}}$

      By induction hypothesis we get\hfill $\infer{\DerOSEx{\Gamma,\non A}{ \EX x D}{\Delta}[\polar{\non A}]} {\DerOSEx{\Gamma,\non A}{\subst D x t}{\Delta}[\polar{\non A}]}$  

    \item 
      $\infer{\DerOSEx{\Gamma}{\top^+}{A,\Delta}} {}$
      
      %By induction hypothesis 
      We get\hfill $\infer{\DerOSEx{\Gamma,\non A}{\top^+}{\Delta}[\polar{\non A}]} {}$ 

    \item %init
      $  \infer[\atmCtxt\Gam,\non p,{\atmCtxt[\mathcal L]{\non A,\non\Del}}\models_{\mathcal T}]
      {\DerOSEx{\Gamma}{p}{A,\Delta}} {\strut} $ with $p$ being $\mathcal P$-negative

      We get\hfill
      $  \infer[{{\atmCtxt[\polar{\non A}]{\Gam,\non A},\non p,\atmCtxt[\mathcal L]{\non \Del}\models_{\mathcal T}}}]
      {\DerOSEx{\Gamma,\non A}{p}{\Delta}[\polar{\non A}]} {\strut} $\\
      as $p$ is also $\polar{\non A}$-positive.

    \item %init
      $  \infer[\atmCtxt\Gam,{\atmCtxt[\mathcal L]{\non A,\non\Del}}\models_{\mathcal T}]
      {\DerOSEx{\Gamma}{\bullet}{A,\Delta}} {\strut} $

      We get\hfill
      $  \infer[{{\atmCtxt[\polar{\non A}]{\Gam,\non A},\atmCtxt[\mathcal L]{\non \Del}\models_{\mathcal T}}}]
      {\DerOSEx{\Gamma,\non A}{\bullet}{\Delta}[\polar{\non A}]} {\strut} $

    \item
      % (iv) C=\bot^-,\Del
      $\infer{\DerOSEx{\Gamma}{\bullet}{A,\Delta}} {\DerOSEx{\Gamma}{P}{A,\Delta}}$ where $\non P\in\Gamma$ is not $\mathcal P$-positive\\ 

      By induction hypothesis we get  \hfill
      $\infer{\DerOSEx{\Gamma,\non A}{\bullet}{\Delta}[\polar{\non A}]} {\DerOSEx{\Gamma,\non A}{P}{\Delta}[\polar{\non A}]}$\\
      using either $(\Select)$ or $(\Select[-])$ depending on whether $P$ is $\polar {\non A}$-negative.
    \end{itemize}

  \item Inversion of ($\bot^-$)
    \begin{itemize}% About (Literal)
%      % (i) AndP
    \item $\infer{\DerOSEx{\Gamma}{\mathcal X}{\bot^-,C\andN D,\Delta}} {\DerOSEx{\Gamma}{\mathcal X}{\bot^-,C,\Delta} \quad \DerOSEx{\Gamma}{\mathcal X}{\bot^-,D,\Delta}}$  

      By induction hypothesis we get\hfill $\infer{\DerOSEx{\Gamma}{\mathcal X}{C\andN D,\Delta}} {\DerOSEx{\Gamma}{\mathcal X}{C,\Delta} \quad \DerOSEx{\Gamma}{\mathcal X}{D,\Delta}}$ 

    \item%(ii) ORN
      $\infer{\DerOSEx{\Gamma}{\mathcal X}{\bot^-,C\orN D,\Delta}} {\DerOSEx{\Gamma}{\mathcal X}{\bot^-,C, D,\Delta}}$  

      By induction hypothesis we get\hfill $\infer{\DerOSEx{\Gamma}{\mathcal X}{C\orN D,\Delta}} {\DerOSEx{\Gamma}{\mathcal X}{C, D,\Delta}}$ 

    \item%(iii) Forall
      $\infer[x\notin\FV{\Gam,\mathcal X,\Delta}]{\DerOSEx{\Gamma}{\mathcal X}{\bot^-,(\FA x D),\Delta}} {\DerOSEx{\Gamma}{\mathcal X}{\bot^-,D,\Delta}}$  \\
      By induction hypothesis we get\hfill $\infer[x\notin\FV{\Gam,\mathcal X,\Delta}]{\DerOSEx{\Gamma}{\mathcal X}{(\FA x D),\Delta}} {\DerOSEx{\Gamma}{\mathcal X}{D,\Delta}}$  

    \item%(iv) Positive or negative literal
      $\infer[\mbox{\begin{tabular}l$B$ literal or\\ $\mathcal P$-positive\end{tabular}}]{\DerOSEx{\Gamma}{\mathcal X}{\bot^-,B,\Delta}} {\DerOSEx{\Gamma,\non B}{\mathcal X}{\bot^-,\Delta}[\polar{\non B}]}$

      By induction hypothesis we get\hfill $\infer[\mbox{\begin{tabular}l$B$ literal or\\ $\mathcal P$-positive\end{tabular}}]{\DerOSEx{\Gamma}{\mathcal X}{B,\Delta}} {\DerOSEx{\Gamma,\non B}{\mathcal X}{\Delta}[\polar{\non B}]}$ 
      
     \item
      % (iv) C=\bot^-,\Del
      $\infer{\DerOSEx{\Gamma}{\mathcal X}{\bot^-,\bot^-,\Delta}} {\DerOSEx{\Gamma}{\mathcal X}{\bot^-,\Delta}}$ \\ 

      By induction hypothesis we get\hfill
      $\infer{\DerOSEx{\Gamma}{\mathcal X}{\bot^-,\Delta}} {\DerOSEx{\Gamma}{\mathcal X}{\Delta}}$  

    \item
      % (iv) C=\top^-
      $\infer{\DerOSEx{\Gamma}{\mathcal X}{\bot^-,\top^-,\Delta}} {}$

      We get\hfill
      $\infer{\DerOSEx{\Gamma}{\mathcal X}{\top^-,\Delta}} {}$

      % (v) AndP
    \item $\infer{\DerOSEx{\Gamma}{C\andP D}{\bot^-,\Delta}} {\DerOSEx{\Gamma}{C}{\bot^-,\Delta} \quad \DerOSEx{\Gamma}{D}{\bot^-,\Delta}}$  

      By induction hypothesis we get\hfill $\infer{\DerOSEx{\Gamma}{C\andP D}{\Delta}} {\DerOSEx{\Gamma}{C}{\Delta} \quad \DerOSEx{\Gamma}{D}{\Delta}}$ 

    \item%(vi) OrP
      $\infer{\DerOSEx{\Gamma}{C_1\orP C_2}{\bot^-,\Delta}} {\DerOSEx{\Gamma}{C_i}{\Delta}}$  

      By induction hypothesis we get\hfill $\infer{\DerOSEx{\Gamma}{C_1\orP C_2}{\Delta}} {\DerOSEx{\Gamma}{C_i}{\Delta}}$ 

    \item%(vii) Exist
      $\infer{\DerOSEx{\Gamma}{\EX x D}{\bot^-,\Delta}} {\DerOSEx{\Gamma}{\subst D x t}{\bot^-,\Delta}}$  \\
      By induction hypothesis we get\hfill $\infer{\DerOSEx{\Gamma}{ \EX x D}{\Delta}} {\DerOSEx{\Gamma}{\subst D x t}{\Delta}}$  

   \item 
   
         $\infer{\DerOSEx{\Gamma}{\top^+}{\bot^-,\Delta}} {}$
         
         %By induction hypothesis 
         We get\hfill $\infer{\DerOSEx{\Gamma}{\top^+}{\Delta}} {}$ 

    \item %init
      $  \infer[\atmCtxt\Gam,\non p,{\atmCtxt[\mathcal L]{\non\Del}}\models_{\mathcal T}]
	{\DerOSEx{\Gamma}{p}{\bot^-,\Delta}} {\strut} $ with $p$ being $\mathcal P$-positive
        
         By induction hypothesis we get\hfill
       $  \infer[\atmCtxt\Gam,\non p,{\atmCtxt[\mathcal L]{\non\Del}}\models_{\mathcal T}]
	{\DerOSEx{\Gamma,\non A}{p}{\Delta}} {\strut} $

    \item %init
                $  \infer[\atmCtxt\Gam,{\atmCtxt[\mathcal L]{\non\Del}}\models_{\mathcal T}]
	{\DerOSEx{\Gamma}{\bullet}{\bot^-,\Delta}} {\strut} $
        
         By induction hypothesis we get\hfill
       $  \infer[\atmCtxt\Gam,{\atmCtxt[\mathcal L]{\non\Del}}\models_{\mathcal T}]
	{\DerOSEx{\Gamma,\non A}{\bullet}{\Delta}} {\strut} $
   
    \item
      % (iv) C=\bot^-,\Del
      $\infer{\DerOSEx{\Gamma}{\bullet}{\bot^-,\Delta}} {\DerOSEx{\Gamma}{P}{\bot^-,\Delta}}$ where $\non P\in\Gamma$ is not $\mathcal P$-positive\\ 

      By induction hypothesis we get  \hfill
      $\infer{\DerOSEx{\Gamma}{\bullet}{\Delta}} {\DerOSEx{\Gamma}{P}{\Delta}}$
    \end{itemize}
  \item Inversion of $\top^-$: nothing to do.
  \end{itemize}
\end{proof}

Now that we have proved the invertibility of asynchronous rules, we can use it to transform any proof of \LKThEx\ into a proof of \LKThp.

\begin{lemma}[Encoding \LKThEx\ in \LKThp]
  \label{LKex}\strut\
  \begin{enumerate} 
  \item 
    If $\DerOSEx {\Gam} {A}{} $ is provable in \LKThEx, then $\DerPos{\Gam}{A}{}{\mathcal P}$ is provable in \LKThp.
  \item 
    If $\DerOSEx {\Gam} {\bullet}{\Del}$ is provable in \LKThEx, then $\DerNeg{\Gam}{\Del}{}{\mathcal P}$ is provable in \LKThp.
  \end{enumerate}
\end{lemma}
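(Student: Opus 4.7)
The plan is to prove both statements simultaneously by induction on the \LKThEx\ derivation, with case analysis on the last rule used. The asynchronous rules of \LKThEx\ ($\andN$, $\orN$, $\FA$, $\bot^-$, $\top^-$, $\Store$) preserve the focus $\mathcal X$ and translate directly to their identically-named counterparts in \LKThp, applying the induction hypothesis to the premise(s). The synchronous rules applied with $\Delta = \emptyset$ match Statement~1's shape: $\Init[1]$'s side condition $\atmCtxt\Gam, \non l, \atmCtxt[\mathcal L]{\non \Del} \models_{\mathcal T}$ reduces to \LKThp's $\atmCtxtP{\mathcal P}\Gam, \non l\models_{\mathcal T}$ when $\Del = \emptyset$; the remaining $\andP$, $\orP$, $\EX$, $\top^+$ match directly; and $\Release$ invokes Statement~2 on its premise $\DerOSEx{\Gam}{\bullet}{N}$. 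The structural rules $\Select$ and $\Init[2]$ applied with $\Delta = \emptyset$ likewise map to the \LKThp\ rules of the same name.

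The real difficulty is with the intermediate \LKThEx\ sequents of the form $\DerOSEx{\Gam}{A}{\Del}$ with $\Delta \neq \emptyset$, which can arise under a synchronous step applied with a non-empty right-hand context; these have no direct counterpart in \LKThp. I would handle them by exploiting the invertibility of \LKThEx's asynchronous rules proved in the preceding lemma: this allows one to put the \LKThEx\ derivation into a normal form in which asynchronous rules (including $\Store$) are applied as eagerly as possible, so that every synchronous or structural step fires on a sequent with $\Delta = \emptyset$. Once the derivation is in this normal form, the proof shapes of \LKThEx\ and \LKThp\ match exactly and the rule-by-rule translation becomes purely mechanical.

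The main obstacle is the rigorous justification of the normalisation: one needs to iterate the invertibility lemma on all pending asynchronous formulas in $\Del$ before each synchronous or structural step, and check termination via a well-founded measure on the asynchronous contents of $\Del$. A more explicit alternative that avoids normalisation is to strengthen the induction hypothesis with a third clause for intermediate focus-plus-$\Del$ sequents—for example, prove that $\DerOSEx{\Gam}{A}{\Del}$ provable in \LKThEx\ implies that $\DerNeg{\Gam}{A, \Del}{}{\mathcal P}$ is provable in \LKThp, obtained by storing the focus formula first via $\Store$—and proceed uniformly by induction; the two clauses covering the \LKThp-shaped sequents are then recovered as instances.
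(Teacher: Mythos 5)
Your primary route --- simultaneous induction, direct translation of the synchronous and structural rules when $\Delta=\emptyset$, and appeal to the height-preserving invertibility of the asynchronous rules to assume without loss of generality that whenever $\Delta$ is non-empty the last rule decomposes one of its formulae --- is exactly the paper's proof, including the observation that height-preservation is what keeps the induction well-founded when $(\Select)$ and $(\Init[2])$ are pushed up past the asynchronous phase. One caveat on your proposed alternative: a third clause of the form ``$\DerOSEx{\Gam}{A}{\Del}$ provable implies $\DerNeg{\Gam}{A,\Del}{}{\mathcal P}$ provable'' cannot recover item~1 as an instance, because the focused judgement $\DerPos{\Gam}{A}{}{\mathcal P}$ is strictly stronger than the unfocused $\DerNeg{\Gam}{A}{}{\mathcal P}$ (for $A=A_1\orP A_2$ the former must commit to a disjunct, whereas the latter merely applies $(\Store)$ to $A$); since item~1 is genuinely needed to translate the premiss of $(\Select)$, you would have to keep it as a separate induction clause anyway, so the normalisation route is the one to carry out.
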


\begin{proof}
  By simultaneous induction on the assumed derivation.
  \begin{enumerate}
  \item For the first item we get, by case analysis on the last rule of the derivation:  
    \begin{itemize}
    \item $ \infer{\DerOSEx{\Gam}{A_1 \andP A_2}{}}
      { \DerOSEx{\Gam}{A_1}{} \quad
        \DerOSEx{\Gam}{ A_2}{}    
      }
      $
      with $A=A_1 \andP A_2$.

      The induction hypothesis on $\DerOSLKEx{\Gam}{A_1}{}$ gives $\DerPosLK{\Gam}{A_1}$ and the induction hypothesis on $\DerOSLKEx{\Gam}{A_2}{}$ gives $\DerPosLK{\Gam}{A_2}$. We get:
      $$ \infer{\DerOSPos{\Gam}{A_1 \andP A_2}}
      { \DerOSPos{\Gam}{A_1} \quad
        \DerOSPos{\Gam}{ A_2}    
      }
      $$

    \item $ \infer{\DerOSEx{\Gam}{A_1 \orP A_2}{}}
      { \DerOSEx{\Gam}{A_i}{}     
      }
      $
      with $A=A_1 \orP A_2$.

      The induction hypothesis on $\DerOSLKEx{\Gam}{A_i}{}$ gives $\DerPosLK{\Gam}{A_i}$. We get:
      $$ \infer{\DerOSPos{\Gam}{A_1 \orP A_2}}
      { \DerOSPos{\Gam}{A_i}}
      $$

    \item $ \infer{\DerOSEx{\Gam}{\EX x A}{}}
      { \DerOSEx{\Gam}{\{t/x\}A} {} }
      $
      with $A=\EX x A$.

      The induction hypothesis on $\DerOSLKEx{\Gam}{\{t/x\}A}{}$ gives $\DerPosLK{\Gam}{\{t/x\}A}$. We get:
      $$ \infer{\DerOSPos{\Gam}{\EX x A}}
      { \DerOSPos{\Gam}{\{t/x\}A}}
      $$	

    \item 
    $   \infer[\atmCtxt\Gam,\non p\models_{\mathcal T}]
			{\DerOSEx{\Gamma  }{p}{}} {\strut} 
                        $
      with $A=p$ where $p$ is a $\mathcal P$-positive literal.
      
      We can perform the same step in \LKThp:
	$$	\infer[\atmCtxt\Gam,\non p\models_{\mathcal T}]
	{\DerOSPos{\Gamma  }{p}} {\strut}  $$
	
    \item $ \infer{\DerOSEx{\Gam}{N}{}}
      { \DerOSEx{\Gam}{\bullet}{N}     
      }
      $
      with $A=N$ and $N$ is not $\mathcal P$-positive.

      The induction hypothesis on $\DerOSLKEx{\Gam}{\bullet}{N}$ gives $\DerNegLK{\Gam}{N}$. We get:
      $$ \infer{\DerOSPos{\Gam}{N}}
      { \DerOSNeg{\Gam}{N}}
      $$
      
    \end{itemize}

  \item For the second item, we use the height-preserving invertibility of the asynchronous rules, so that we can assume without loss of generality that if $\Del$ is not empty then the last rule of the derivation decomposes one of its formulae. 

    \begin{itemize}
    \item % \andN
      $ \infer{\DerOSEx{\Gam}{\bullet}{A_1 \andN A_2,\Delta_1}}
      { \DerOSEx{\Gam}{\bullet}{A_1,\Delta_1} \quad
        \DerOSEx{\Gam}{\bullet}{ A_2,\Delta_1}   
      }
      $
      with $\Del=A_1 \andN A_2,\Del_1$.

      The induction hypothesis on $\DerOSLKEx{\Gam}{\bullet}{A_1,\Delta_1}$ gives $\DerNegLK{\Gam}{A_1,\Delta_1}$ and the induction hypothesis on $\DerOSLKEx{\Gam}{\bullet}{A_2,\Delta_2}$ gives $\DerNegLK{\Gam}{A_2,\Delta_2}$. We get:
      $$ \infer{\DerOSNeg{\Gam}{A_1 \andN A_2,\Delta_1}}
      { \DerOSNeg{\Gam}{A_1,\Delta_1} \quad
        \DerOSNeg{\Gam}{ A_2,\Delta_1}    
      }
      $$

    \item  %\orN	
      $ \infer{\DerOSEx{\Gam}{\bullet}{A_1 \orN A_2,\Delta_1}}
      { \DerOSEx{\Gam}{\bullet}{A_1,A_2,\Delta_1} }
      $
      with $\Del=A_1 \orN A_2,\Del_1$.

      The induction hypothesis on $\DerOSLKEx{\Gam}{\bullet}{A_1,A_2,\Delta_1}$ gives $\DerNegLK{\Gam}{A_1,A_2,\Delta_1}$ and we get:
      $$ \infer{\DerOSNeg{\Gam}{A_1 \orN A_2,\Delta_1}}
      { \DerOSNeg{\Gam}{A_1,A_2,\Delta_1}    
      }
      $$

    \item % \FA  
      $ \infer[x\not\in\FV{\Gamma,\Del_1}]{\DerOSEx{\Gam}{\bullet}{\FA x A,\Delta_1}}
      { \DerOSEx{\Gam}{\bullet}{A,\Delta_1}  }
      $
      with $\Delta=\FA x A,\Delta_1$.

      The induction hypothesis on $\DerOSLKEx{\Gam}{\bullet}{A,\Delta_1}$ gives $\DerNegLK{\Gam}{A,\Delta_1}$. We get:
      $$ \infer[x\not\in\FV{\Gamma,\Del_1}]{\DerOSNeg{\Gam}{\FA x A,\Delta_1}}
      {\DerOSNeg{\Gam}{A,\Delta_1}}
      $$
      
    \item 	% A positive literal
      $
      \infer{\DerOSEx{\Gam}{\bullet}{A,\Delta_1}}{ \DerOSEx{\Gam,\non A}{\bullet}{\Delta_1}[\polar{\non A}] }
      $
      with $\Del=A,\Del_1$ and $A$ is a literal or is $\mathcal P$-positive.

      The induction hypothesis on $\DerOSLKEx{\Gam,\non A}{\bullet}{\Delta_1}[\polar{\non A}]$ gives $\DerNegLK[\polar{\non A}]{\Gam,\non {A}}{\Delta_1}$. We get:
      \[
      \infer{\DerOSNeg{\Gam}{A,\Delta_1}}
      { \DerOSNeg[\polar{\non A}]{\Gam,\non {A}}{\Delta_1}}
      \]

  \item 	% \bot^-
    $ \infer{\DerOSEx{\Gam}{\bullet}{\bot^-,\Delta_1}}
      { \DerOSEx{\Gam}{\bullet}{\Delta_1} }
      $
      with $\Del=\bot^-,\Del_1$.

      The induction hypothesis on $\DerOSLKEx{\Gam}{\bullet}{\Delta_1}$ gives $\DerNegLK{\Gam}{\Delta_1}$. We get:
      $$ \infer{\DerOSNeg{\Gam}{\bot^-,\Delta_1}}
      { \DerOSNeg{\Gam}{\Delta_1}}
      $$

 \item 	% \top^-
    $ \infer{\DerOSEx{\Gam}{\bullet}{\top^-,\Delta_1}}{  } $
    with $\Del=\top^-,\Del_1$.

%      The induction hypothesis on $\DerOSLKEx{\Gam}{\bullet}{\Delta_1}$ gives $\DerNegLK{\Gam}{\Delta_1}$. 
      We get:
      \[ \infer{\DerOSNeg{\Gam}{\top^-,\Delta_1}} { } \]

    \item 
      $ \infer{\DerOSEx{\Gam,\non P}{\bullet}{\Del}}
      { \DerOSEx{\Gam,\non P}{P}{\Del} }
      $
      where $P$ is not $\mathcal P$-negative.

      As already mentioned, we can assume without loss of generality that $\Del$ is empty.
      The induction hypothesis on $\DerOSLKEx{\Gam,\non P}{P}{}$ gives $\DerPosLK{\Gam,\non {P}}{P}$. We get:
      \[ \infer{\DerOSNeg{\Gam,\non{P}}{}}
      { \DerOSPos{\Gam,\non {P}}{P}}
      \]

    \item 
      $	\infer[\atmCtxt\Gam,{\atmCtxt[\mathcal L]{\non\Del}}\models_{\mathcal T}]
	{\DerOSEx{\Gamma  }{\bullet}{\Del}} {\strut}$

      As already mentioned, we can assume without loss of generality that $\Del$ is empty.	
      We get:
      \[   \infer[\atmCtxt\Gam\models_{\mathcal T}]{\DerOSNeg{\Gamma}{}} {\strut} 
      \]
    \end{itemize}
  \end{enumerate}
\end{proof}

\begin{lemma}
  \label{Lconn1}
  We have:
  \begin{enumerate}
  \item $\DerNegLK{}{\non {\top^+},\top^-}$, and
  \item $\DerNegLK{}{\non {\top^-},\top^+}$, and
  \item $\DerNegLK{}{\non {(A \andP B)},( A \andN B)}$, and
  \item $\DerNegLK{}{\non {(A \andN B)},( A \andP B)}$, provided that sequent is safe.
  \end{enumerate}
\end{lemma}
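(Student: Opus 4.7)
Items~1 and~2 will be obtained by direct derivations in \LKThp. For item~1, I would apply $(\bot^-)$ to peel off $\non{\top^+}=\bot^-$ and close with $(\top^-)$. For item~2, since both $\non{\top^-}=\bot^+$ and $\top^+$ are $\mathcal P$-positive, two successive $(\Store)$ applications bring the goal to $\DerNeg{\top^-,\bot^-}{}{}{\mathcal P}$, after which $(\Select)$ on $\top^+$ (whose negation $\bot^-$ now sits in $\Gamma$) followed by the $(\top^+)$ axiom closes the proof.

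For items~3 and~4, my plan is to build the derivation inside the more permissive system \LKThEx, using the identity lemma (Lemma~\ref{con1}) together with admissible weakening, and then to transfer the result to \LKThp\ via the second part of Lemma~\ref{LKex}. For item~3, I would apply $(\orN)$ to decompose $\non{(A\andP B)}=\non A\orN\non B$ and $(\andN)$ to decompose $A\andN B$, reducing the goal to the two generalised-identity premises $\DerOSEx{}{\bullet}{\non A,\non B,A}$ and $\DerOSEx{}{\bullet}{\non A,\non B,B}$. Each such premise is handled by a case analysis on the polarity of its identity formula: Store the literal-or-$\mathcal P$-positive side to bring its dual into $\Gamma$, then $(\Select)$ (or $(\Select[-])$ from Lemma~\ref{lem:weakselect} in the unpolarised-literal case) refocuses on the dual, and Lemma~\ref{con1} combined with weakening closes the branch.

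For item~4, I would first Store the two positive formulas $\non A\orP\non B$ and $A\andP B$, reducing the goal to $\DerOSEx{\non A\andN\non B,\,A\andN B}{\bullet}{}$. Then $(\Select)$ on whichever of $\non A\orP\non B$ or $A\orP B$ yields a workable focus (each has its negation in $\Gamma$), followed by $(\orP)$ picking an appropriate disjunct, and by $(\Release)$ when the resulting focus is not $\mathcal P$-positive, followed by a symmetric Select/$(\orP)$ step on the other positive disjunction, brings the sequent into a shape like $\DerOSEx{\Gamma}{\non A}{A}$ or $\DerOSEx{\Gamma}{A}{\non A}$, which Lemma~\ref{con1} plus weakening closes; in the subcase where both disjuncts and their duals end up as $\mathcal P$-positive literals, $(\Init[1])$ closes the branch directly from the syntactic inconsistency of $\{A,\non A\}$. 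The hard part will be orchestrating this case analysis uniformly across all polarity combinations of $A$ and $B$: the safety hypothesis is precisely what allows me to discharge the semantic side-conditions of $(\Init[1])$ whenever unpolarised sub-literals interact with the theory, since safety guarantees that any such theory-inconsistency already follows from the literals stored in $\Gamma$.
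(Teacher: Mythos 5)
Your items 1--3 follow the paper's own route: item 1 closes with $(\top^-)$, item 2 is exactly the Store/Store/Select/$(\top^+)$ derivation the paper gives, and item 3 is the paper's argument verbatim --- asynchronous decomposition in \LKThEx\ down to $\DerOSEx{}{\bullet}{\non A,\non B,A}$ and $\DerOSEx{}{\bullet}{\non A,\non B,B}$, then Store, (possibly negative) Select, weakening and Lemma~\ref{con1}, transferred back to \LKThp\ by Lemma~\ref{LKex}(2). These are fine.

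Item 4 has a genuine gap. After storing the two positive formulae the context is $\Gamma=\{A\andN B,\ \non A\orN\non B\}$ (note $\non{(A\andP B)}=\non A\orN\non B$, not $\non A\andN\non B$, and the two selectable formulae are $\non A\orP\non B$ and $A\andP B$, not ``$A\orP B$''). If you select $A\andP B$, the rule is $(\andP)$, which forces you to prove \emph{both} $\DerOSEx{\Gamma}{A}{}$ and $\DerOSEx{\Gamma}{B}{}$; there is no disjunct to pick, and neither premiss has the shape $\DerOSEx{}{\non C}{C,\Delta}$ that Lemma~\ref{con1} provides. The only witness that ``$A$ holds'' is locked inside the negative formula $A\andN B$ in $\Gamma$, which can only be exploited by a separate Select --- impossible while you are already in focus on $A$. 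Concretely, if $A$ is a $\mathcal P$-positive literal $p$ with $\non p\not\models_{\mathcal T}$, the premiss $\DerOSEx{\Gamma}{p}{}$ must close by $(\Init[1])$ with side-condition $\atmCtxt\Gamma,\non p\models_{\mathcal T}$, and $\atmCtxt\Gamma$ is empty, so that branch dies. The whole difficulty of item 4 is that $\non{(A\andN B)}$ and $A\andP B$ are \emph{not} De Morgan duals (the top connectives carry different polarities), so the focused identity lemma does not apply directly; the paper resolves this with two applications of the admissible rule $\cut_7$ (Theorem~\ref{Thcut}), reducing the mixed-polarity sequent to same-polarity identity instances. That is also the real reason item 4 --- and only item 4 --- carries the safety proviso: safety is the hypothesis under which the cuts are admissible, not a device for discharging $(\Init[1])$ side-conditions as you suggest. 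Without the cuts, the derivation you sketch does not go through.
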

\begin{proof}
  \begin{enumerate}
  \item For the first item we get:
    \[
    \infer{\DerOSNeg{}{\non {\top^+},\top^-}}{}
    \]
  \item For the second item we get:
    \[
    \infer{\DerOSNeg{}{\non {\top^-},\top^+}}{
      \infer{\DerOSNeg{\top^-}{\top^+}}{
        \infer{\DerOSNeg{\top^-,\non{\top^+}}{}}{
          \infer{\DerOSPos{\top^-,\non{\top^+}}{\top^+}}{            
          }
        }
      }
    }
    \]
  \item For the third item we get:
    \[
    \infer[{\mbox{Lemma}~\ref{LKex}(2)}]{\DerOSNeg{}{\non {(A \andP B)},(A \andN B)}}{
      \infer{\DerOSEx{}{\bullet}{\non {(A \andP B)},(A \andN B)}}{
        \Infer{\DerOSEx{}{\bullet}{(\non A \orN \non B),(A \andN B)}}{
          \infer{\DerOSEx{}{\bullet}{\non A,\non B,A}}{
            \infer{\DerOSEx{A}{\bullet}{\non B,A}[\polar{A}]}{
              \iinfer{\DerOSEx{A}{\non A}{\non B,A}[\polar{A}]}{
                \DerOSEx{}{\non A}{\non B,A}[\polar{A}] 
              }
            }
          }
          \quad
          \infer{\DerOSEx{}{\bullet}{\non A,\non B,B}}{
            \infer{\DerOSEx{B}{\bullet}{\non A,B}[\polar{B}]}{
              \iinfer{\DerOSEx{B}{\non B}{\non A,B}[\polar{B}]}{
                \DerOSEx{}{\non B}{\non A,B}[\polar{B}]
              }
            }
          }
        }
      }
    }
    \]
    Both left hand side and right hand side can be closed by Lemma~\ref{con1}.
  \item For the fourth item, we get:
    \[ \small
    \hspace{-20pt}
    \Infer{\DerOSNeg{}{\non{(A \andN B)},( A \andP B)}}
    { \infer[\cut_7]{\DerOSNeg{( A \andN B),\non {(A \andP B)}} {}}
      { \hspace{-20pt}
        \iinfer{\DerOSNeg{(A \andN B),( \non A \orN \non B)} {A}}
        {	\infer[\mbox{Lemma}~\ref{LKex}(2)]{\DerOSNeg{A \andN B}{A} }
          {\infer{\DerOSEx{A \andN B}{\bullet}{ A} }
            {\iinfer{\DerOSEx{A \andN B}{\non A \orP \non B}{ A} }
              {\infer{\DerOSEx{}{\non A \orP \non B}{ A} }
                {\DerOSEx{}{\non A}{A} }
              }
            }	}
        }
        \hspace{-10pt}
        \quad
        {\infer[\cut_7]{\DerOSNeg{(A \andN B),( \non A \orN \non B)} {\non A}}
          { \iinfer{\DerOSNeg{(A \andN B),( \non A \orN \non B)} {\non A,B}}
            {\iinfer{\DerOSNeg{A \andN B}{\non A,B}}
              {\infer[\mbox{Lemma}~\ref{LKex}(2)]{\DerOSNeg{A \andN B}{B}}
                {\infer{\DerOSEx{A \andN B} {\bullet}{B}}
                  {\iinfer{\DerOSEx{A \andN B} {\non A \orP \non B}{B}}
                    {\infer{\DerOSEx{} {\non A \orP \non B}{B}}
                      {\DerOSEx{} {\non B}{B}}
                    }
                  }
                }}}
            \hspace{-15pt}
            \quad 		
            { \iinfer{\DerOSNeg{(A \andN B),( \non A \orN \non B)} {\non A,\non B}}
              {\infer[\mbox{Lemma}~\ref{LKex}(2)]{\DerOSNeg{\non A \orN \non B} {\non A,\non B}}
                {\infer{\DerOSEx{\non A \orN \non B}{\bullet} {\non A,\non B}}
                  {\iinfer{\DerOSEx{\non A \orN \non B}{A \andP B} {\non A,\non B}}	
                    {\infer{\DerOSEx{} {A \andP B}{\non A,\non B}}
                      {\DerOSEx{} {A}{\non A,\non B}
                      \quad
                       \DerOSEx{} {B}{\non A,\non B}
                      }
                    }
                  }
		}
	      }
	    }
          }
        } % to close this branch
      } % 3rd for branching 
    } %2nd infer
    \]
  \end{enumerate}
  All branches are closed by Lemma~\ref{con1}.
\end{proof}

\begin{lemma}\strut
  \label{Lconn2} 

  If $\DerNegLK{\Gamma}{\Delta,C}$ and $\DerNegLK{\Gamma}{D,\non C}$ then $\DerNegLK{\Gamma}{\Delta,D}$, provided that sequent is safe.
\end{lemma}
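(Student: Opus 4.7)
The plan is to reduce this to an application of $\cut_7$ from Theorem~\ref{Thcut}, the premisses of which will be obtained by right weakening. The strategy is straightforward because both hypotheses already essentially contain a complementary pair of formulae $C,\non C$ in positions that can be aligned by weakening.

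First I would invoke the admissibility of right weakening $(\weak[r])$ (established as a corollary in Section~\ref{sec:Invert}, after on-the-fly polarisation) on each hypothesis: from $\DerNegLK{\Gamma}{\Delta,C}$ one obtains $\DerNegLK{\Gamma}{\Delta,D,C}$ by weakening with $D$ on the right, and from $\DerNegLK{\Gamma}{D,\non C}$ one obtains $\DerNegLK{\Gamma}{\Delta,D,\non C}$ by weakening with $\Delta$ on the right. Both weakened sequents remain provable in \LKThp.

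Next I would apply $\cut_7$ with cut-formula $C$ (and context $\Delta,D$) to these two sequents, yielding precisely $\DerNegLK{\Gamma}{\Delta,D}$. The side-condition for $\cut_7$ is that the conclusion $\DerNegLK{\Gamma}{\Delta,D}$ be safe, which is exactly the hypothesis of the lemma. So the whole proof fits in one inference:
\[
\infer[\cut_7]{\DerNegLK{\Gamma}{\Delta,D}}{
  \infer[(\weak[r])]{\DerNegLK{\Gamma}{C,\Delta,D}}{\DerNegLK{\Gamma}{\Delta,C}}
  \qquad
  \infer[(\weak[r])]{\DerNegLK{\Gamma}{\non C,\Delta,D}}{\DerNegLK{\Gamma}{D,\non C}}
}
\]

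There is essentially no obstacle here: the only subtle point is to verify that the safety assumption transfers correctly, but since $\cut_7$ requires safety only of its conclusion (and not of its premisses), and since safety is preserved by right weakening through the monotonicity noted after Definition~\ref{lem:Pinv}, the given assumption suffices. No induction and no manipulation of the polarisation set is needed.
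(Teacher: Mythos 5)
Your proof is correct and is essentially identical to the paper's: the paper likewise obtains $\DerNegLK{\Gamma}{D,\Delta,C}$ and $\DerNegLK{\Gamma}{\Delta,D,\non C}$ from the two hypotheses by admissible (right-weakening) steps and then concludes with a single application of $\cut_7$ on $C$. Nothing further is needed.
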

\begin{proof}
  \[
  \infer[\cut_7]{\DerOSNeg{\Gamma}{\Delta,D}}
  { \iinfer{\DerOSNeg{\Gamma}{D,\Del,C} }
    {\DerOSNeg{\Gamma}{\Del,C}}
    \quad
    \iinfer{\DerOSNeg{\Gamma}{\Del,D,\non C}}
    {\DerOSNeg{\Gamma}{D,\non C}}
  }
  \]
\end{proof}		

\begin{corollary}[Changing the polarity of connectives]
  \label{conF}
  Provided those sequents are safe,
  \begin{enumerate}
  \item If $\DerOSNeg{\Gamma}{\top^+,\Del}$ then $\DerOSNeg{\Gamma}{\top^-,\Del}$;
  \item If $\DerOSNeg{\Gamma}{\top^-,\Del}$ then $\DerOSNeg{\Gamma}{\top^+,\Del}$;
  \item If $\DerOSNeg{\Gamma}{\bot^+,\Del}$ then $\DerOSNeg{\Gamma}{\bot^-,\Del}$;
  \item If $\DerOSNeg{\Gamma}{\bot^-,\Del}$ then $\DerOSNeg{\Gamma}{\bot^+,\Del}$;
  \item If $\DerOSNeg{\Gamma}{A \andP B,\Del}$ then $\DerOSNeg{\Gamma}{A \andN B,\Del}$;
  \item If $\DerOSNeg{\Gamma}{A \andN B,\Del}$ then $\DerOSNeg{\Gamma}{A \andP B,\Del}$;
  \item If $\DerOSNeg{\Gamma}{A \orP B,\Del}$ then $\DerOSNeg{\Gamma}{A \orN B,\Del}$;
  \item If $\DerOSNeg{\Gamma}{A \orN B,\Del}$ then $\DerOSNeg{\Gamma}{A \orP B,\Del}$.
  \end{enumerate}
  Furthermore, notice that in each implication, the safety of one sequent implies the safety of the other.
\end{corollary}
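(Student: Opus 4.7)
The plan is to obtain each of the eight implications as a direct application of Lemma~\ref{Lconn2}: taking $C$ to be the formula appearing in the hypothesis and $D$ the re-polarised formula, Lemma~\ref{Lconn2} produces the target sequent from the given one, provided we furnish an auxiliary ``identity'' sequent $\DerOSNeg{\Gamma}{D,\non C}$ and the target is safe. The entire proof therefore reduces to exhibiting this auxiliary sequent for each connective swap.

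For items 1, 2, 5 and 6, the auxiliary sequents are provided directly by Lemma~\ref{Lconn1}(1)--(4), after a left-weakening by $\Gamma$. For items 3 and 4 (the $\bot^+/\bot^-$ cases), the sequent $\DerOSNeg{\Gamma}{\bot^-,\top^-}$ closes in one step by the $(\top^-)$ rule, and $\DerOSNeg{\Gamma}{\bot^+,\top^+}$ is obtained by two successive $(\Store)$ steps (on $\top^+$ and then $\bot^+$, both $\mathcal P$-positive, so depositing $\bot^-$ and $\top^-$ in the context), followed by $(\Select)$ with $\non P \eqdef \bot^-$ and $P \eqdef \top^+$ (which is not $\mathcal P$-negative), and finally the $(\top^+)$ rule.

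For items 7 and 8 (the $\orP/\orN$ cases), the auxiliary sequents $\DerOSNeg{}{\non{(A\orP B)},(A\orN B)}$ and $\DerOSNeg{}{\non{(A\orN B)},(A\orP B)}$ are absent from Lemma~\ref{Lconn1}, so I would construct them by an exact replay of the derivations of Lemma~\ref{Lconn1}(3)--(4): pass to \LKThEx\ via Lemma~\ref{LKex}, invert the negative connectives on the right (starting with $\orN$ in the place of $\andN$), and close every leaf by the identity Lemma~\ref{con1}, with an additional $\cut_7$ for the reverse direction as in the proof of Lemma~\ref{Lconn1}(4); this last point is where the safety hypothesis propagates. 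The ``furthermore'' clause is immediate since every swap $\top^+ \leftrightarrow \top^-$, $\bot^+ \leftrightarrow \bot^-$, $\andP \leftrightarrow \andN$, $\orP \leftrightarrow \orN$ changes a connective, not a literal, so $\atmCtxtP{\mathcal P}{\Gamma,\non\Del,\non C}=\atmCtxtP{\mathcal P}{\Gamma,\non\Del,\non D}$ and the safety conditions of the two sequents coincide.

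The only non-mechanical step is items 7 and 8, whose identity sequents do not appear in Lemma~\ref{Lconn1}; but these derivations are a direct transcription of those of Lemma~\ref{Lconn1}(3)--(4) with $\andP/\andN$ replaced by $\orP/\orN$, so no genuinely new ideas are required.
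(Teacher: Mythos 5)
Your proof is correct and follows the paper's own route: each of the eight items is obtained from Lemma~\ref{Lconn2} together with an auxiliary ``identity'' sequent $\DerOSNeg{\Gamma}{D,\non C}$, and your constructions of those auxiliary sequents all go through. The only difference is that for items 3, 4, 7 and 8 the paper does not build anything new: since $\non{(\bot^+)}=\top^-$, $\non{(\bot^-)}=\top^+$, $\non{(A\orP B)}=\non A\andN\non B$ and $\non{(A\orN B)}=\non A\andP\non B$, the required auxiliary sequents are literally instances of Lemma~\ref{Lconn1}(1)--(4) (applied, for the disjunction cases, to $\non A$ and $\non B$), so your from-scratch derivations for $\bot^\pm$ and your replayed derivations for $\orP/\orN$, while correct, are unnecessary.
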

\begin{proof}
  \begin{enumerate}
  \item By Lemma~\ref{Lconn2} and Lemma~\ref{Lconn1}(1). 
  \item By Lemma~\ref{Lconn2} and Lemma~\ref{Lconn1}(2). 
  \item By Lemma~\ref{Lconn2} and Lemma~\ref{Lconn1}(1). 
  \item By Lemma~\ref{Lconn2} and Lemma~\ref{Lconn1}(2). 
  \item By Lemma~\ref{Lconn2} and Lemma~\ref{Lconn1}(3). 
  \item By Lemma~\ref{Lconn2} and Lemma~\ref{Lconn1}(4). 
  \item By Lemma~\ref{Lconn2} and Lemma~\ref{Lconn1}(3). 
  \item By Lemma~\ref{Lconn2} and Lemma~\ref{Lconn1}(4). 
  \end{enumerate}
\end{proof}
We have proven that changing the polarities of the connectives that are present in a sequent, does not change the provability of that sequent in \LKThp.

\section{Completeness}
\LKTh\ is a complete system for first-order logic modulo a theory. To show this, we review the grammar of first-order formulae and map those formulae to polarised formulae.

\begin{definition}[Plain formulae]\strut
  \label{def:FOLform}
  Let $P^a_{\Sigma}$ be a sub-signature of the first-order predicate signature $P_{\Sigma}$ such that for every predicate symbol $P/n$ of $P_{\Sigma}$, $P/n$ is in $P^a_{\Sigma}$ \iff\ $\non P/n$ is not in $P^a_{\Sigma}$.

  Let $\mathcal A$ be the subset of $\mathcal L$ consisting of those literals whose predicate symbols are in $P^a_{\Sigma}$. Literals in $\mathcal A$, denoted $a$, $a'$, etc, are called \Index[atom]{atoms}.

  The formulae of first-order logic, here called \Index[plain formula]{plain formulae}, are given by the following grammar:
  \[
  \begin{array}{lll}
    \mbox{Plain formulae }&A,B,\ldots&\recdef a\mid A\vee B\mid A\wedge B \mid \FA x A \mid \EX x A \mid \neg A
  \end{array}
  \]
  where $a$ ranges over atoms.
\end{definition}

\renewcommand\bar[1]{#1}

\begin{definition}[$\bar\psi$]
  Let $\bar\psi$ be the function that maps every plain formula to a set of formulae (in the sense of Definition~\ref{def:polarformulae}) defined as follows:
  \[
  \begin{array}{ l c r }
    \bar\psi(a) & \eqdef & \{ a  \} \\	
    \bar\psi(A \andF B) & \eqdef & \{ A' \andN B',A' \andP B' \mid A' \in \psi(A), B' \in \psi(B)  \} \\
    \bar\psi(A \orF B) & \eqdef & \{ A' \orN B',A' \orP B' \mid A' \in \psi(A), B' \in \psi(B)  \} \\
    \bar\psi(\EX x A) & \eqdef & \{ \EX x A' \mid A' \in \psi(A) \} \\
    \bar\psi(\FA x A) & \eqdef & \{ \FA x A' \mid A' \in \psi(A) \} \\
    \bar\psi(\neg A) & \eqdef & \{ \non {A'} \mid A' \in \psi(A) \} \\
    \bar\psi(\Del,A) & \eqdef & \{ \Del',A' \mid \Del' \in \psi(\Del), A' \in \psi(A) \} \\
    \bar\psi(\emptyset) & \eqdef & \emptyset \\	 
  \end{array}
  \]
\end{definition}

\begin{remark}
  \label{rm1}
  \label{rm2}
  \begin{enumerate}
  \item
    \(\bar\psi(A)\neq\emptyset\)
  \item
    If $A' \in \bar\psi(A)$, then $ \subst {A'} x t \in \bar\psi  (\subst {A'} x t)$.
  \item
    If $C' \in \bar\psi  (\subst {A} x t)$, then $C'=\subst {A'} x t$ for some $A'\in\bar\psi(A)$.
  \end{enumerate}
\end{remark}

\begin{notation}
\label{entailment}
  When $F$ is a plain formula and $\Psi$ is a set of plain
  formulae, $\Psi\models F$ means that $\Psi$ entails $F$ in
  first-order classical logic.

  Given a theory $\mathcal T$ (given by a semantical inconsistency
  predicate), we define the set of all \Index[theory lemma]{theory
    lemmas} as
  \[\TLemma \eqdef \{l_1\vee \cdots
  \vee l_n \mid \non {\psi(l_1)} , \cdots , \non {\psi(l_n)} \models_{\mathcal T} \}
  \]

  We generalise the notation $\models_{\mathcal T}$ to write
  $\Psi\models_{\mathcal T} F$ when $\Psi_{\mathcal T},\Psi\models
  F$, in which case we say that $F$ is a \emph{semantical consequence}
  of $\Psi$.  
  %% For any formula $A$ (in the sense of Definition~\ref{def:polarformulae}), let
  %% $\underline{A}$ be the plain formula obtained by removing all
  %% polarities on connectives (with $\underline a\eqdef a$ and
  %% $\underline {\non a}\eqdef \neg a$).
\end{notation}

\begin{notation}
  In the rest of this section we will use the notation $A\andF^? B$ (\resp $A\orF^? B$) to ambiguously represent either $A\andP B$ or $A\andN B$ (\resp $A\orP B$ or $A\orN B$). This will make the proofs more compact, noticing that Corollary~\ref{conF}(2) and~\ref{conF}(4) respectively imply the admissibility in \LKThp\ of
  \[
  \infer{\DerNeg{\Gamma}{\Delta,A\andF^? B}{}{\mathcal P}}{\DerNeg{\Gamma}{\Delta,A\andN B}{}{\mathcal P}}
  \qquad
  \infer{\DerNeg{\Gamma}{\Delta,A\orF^? B}{}{\mathcal P}}{\DerNeg{\Gamma}{\Delta,A\orN B}{}{\mathcal P}}
  \]
provided the sequents are safe (and note that safety of the conclusion entails safety of the premiss).
\end{notation}

\begin{lemma}[Equivalence between different polarisations]\strut
  \label{Fcomp}

  For all $A', A'' \in \bar\psi{(A)}$, we have $\DerNegLK{\Gam}{A', \non{A''},\Del}{}$, provided the sequent is safe.
\end{lemma}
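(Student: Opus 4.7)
The proof goes by structural induction on the plain formula $A$, using the admissibility and invertibility results accumulated in the preceding sections.

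For the base case $A = a$, since $\bar\psi(a) = \{a\}$ we have $A' = A'' = a$, and the target $\DerNegLK{\Gam}{a, \non a, \Del}$ follows from the identity Lemma~\ref{LadamIden} combined with weakening and the admissibility of $({\Store[=]})$. For the negation case $A = \neg B$, the goal $\DerNegLK{\Gam}{\non{B'}, B'', \Del}$ is exactly the induction hypothesis on $B$ with the two chosen polarisations swapped.

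For a binary connective $A = A_1 \andF A_2$ (disjunction is dual), writing $A' = A_1' \andF^? A_2'$ and $A'' = A_1'' \andF^? A_2''$, Corollary~\ref{conF} permits us to normalise all the $\andF^?$ and $\orF^?$ occurrences at the top to $\andN$ and $\orN$ respectively. Invertibility of $(\orN)$ followed by $(\andN)$ (Lemma~\ref{lem:invert}) then reduces the goal to the two premises $\DerNegLK{\Gam}{A_i', \non{A_1''}, \non{A_2''}, \Del}$ ($i = 1, 2$), each discharged by the induction hypothesis on $A_i$. Safety of the premises is preserved because none of the exchanged top-level formulae are literals, so the atomic content $\atmCtxtP{\mathcal P}{\cdot}$ of the underlying left-context is unchanged.

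The quantifier cases are the delicate part. For $A = \FA x A_1$ (existential symmetric), pick $x$ fresh in $\Gam, \Del, \mathcal P$ by $\alpha$-conversion and apply the invertible $(\FA)$ rule to peel off the universal quantifier on the right, reducing to $\DerNegLK{\Gam}{A_1', \EX x \non{A_1''}, \Del}$. The induction hypothesis on $A_1$ yields $\DerNegLK{\Gam}{A_1', \non{A_1''}, \Del}$, and it remains to wrap the outer $\EX x$ around $\non{A_1''}$. Since $\EX$ is synchronous, we construct this wrapping explicitly: asynchronously decompose and store the formulae on the right until only $\EX x \non{A_1''}$ remains, store it as well (thereby adjoining $\FA x A_1''$ to the context), focus on $\EX x \non{A_1''}$ via $(\Select)$, instantiate by $(\EX)$ with $t = x$, $(\Release)$ the resulting focus if necessary, and reconnect with the induction hypothesis.

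The main obstacle is precisely this last step in the quantifier case: the repeated applications of $(\Store)$ may enrich the polarisation set $\mathcal P$ with previously-unpolarised literals, and one must verify that the induction hypothesis still applies modulo these changes using the admissibility of $(\Pol)$ and the safety-preservation results. An arguably cleaner alternative is to first establish the lemma in the more flexible calculus \LKThEx\ (leveraging the generalised identity of Lemma~\ref{con1} together with the invertibility of asynchronous rules even with a non-trivial focus), and then transfer the result back to \LKThp\ via the encoding Lemma~\ref{LKex}.
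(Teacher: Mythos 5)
Your base case, negation case, and binary-connective cases match the paper's proof (same normalisation of the top-level connectives via Corollary~\ref{conF}, same reduction to the induction hypothesis after decomposing with $(\andN)$ and $(\orN)$, with safety preserved by monotonicity). The gap is in the quantifier case. After peeling off the universal quantifier you correctly reduce to $\DerNeg{\Gam}{A'_1,\EX x{\non{A''_1}},\Del}{}{\mathcal P}$, but your plan for wrapping $\EX x$ around $\non{A''_1}$ --- store everything on the right, $(\Select)$ the formula $\EX x{\non{A''_1}}$, apply the $(\exists)$ rule with $t=x$, then $(\Release)$ and reconnect with the induction hypothesis --- breaks down at the $(\Release)$ step. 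After the $(\exists)$ step the focus is on $\non{A''_1}$, and $(\Release)$ is only applicable when $\non{A''_1}$ is \emph{not} $\mathcal P$-positive. Whenever $A''_1$ carries a negative top-level connective (for instance $A''_1=B\andN C$, so that $\non{A''_1}=\non B\orP\non C$), the focus must instead be decomposed \emph{synchronously}, and that synchronous decomposition has to be played against the already-stored, asynchronously decomposed material coming from $A'_1$ and $\Del$ --- something your unfocused induction hypothesis cannot supply. So the ``reconnect'' step is unavailable in precisely the generic half of the polarisation choices, and your ``$(\Release)$ \ldots if necessary'' hides the case that actually needs an argument.

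This is why the paper does not build the derivation directly: it cuts. It first establishes the identity-with-instantiation sequent $\DerNeg{\Gam}{A''_1,\EX x{\non{A''_1}},\Del}{}{\mathcal P}$ --- proved in \LKThEx, where a focused formula can be processed while the right-hand side is still undecomposed, using the generalised identity of Lemma~\ref{con1}, and transferred back to \LKThp\ by Lemma~\ref{LKex}(2) --- and then applies Lemma~\ref{Lconn2} (\ie $\cut_7$) to replace $A''_1$ by $A'_1$, the second cut premiss being exactly your induction hypothesis $\DerNeg{\Gam}{A'_1,\non{A''_1},\Del}{}{\mathcal P}$. Your closing remark that one could ``alternatively'' work in \LKThEx\ points in the right direction, but it is not an optional refinement: the detour through the generalised identity together with a cut is what actually closes this case, and the main argument as you wrote it does not.
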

\begin{proof}
  In the proof below, for any formula $A$, the notations $A'$ and $A''$ will systematically designate elements of $\psi{(A)}$.
  
  The proof is by induction on $A$: 
  \begin{enumerate}
  \item $A=a$

    Let $A',A'' \in \bar\psi(a)=\{a\}$. Therefore $A'=A''=A=a$.
    \[
    \Infer{\DerOSNeg{\Gam}{\psi(a),\non \psi(a),\Del}}{
      \infer[{[{(\Id[2])}]}]{\DerOSNeg[\mathcal P']{\Gam,\non \psi(a),\psi(a),\Gam'}{}}{}   
    } 
    \]

  \item $A= A_1 \andF A_2$  

    Let $A'_1,A''_1 \in \bar\psi {(A_1)}$ , $A'_2,A''_2 \in \bar\psi {(A_2)}$ and $A'= A'_1 \andF^? A'_2$, $A''= A''_1 \andF^? A''_2$.
    \[
    \infer{\DerOSNeg{\Gam}{A', \non{A''},\Del}}
    {  \infer{\DerOSNeg{\Gam}{A', \non{A''_1} \orN \non{A''_2},\Del }}
      { \Infer { \DerOSNeg{\Gam}{A'_1 \andN A'_2, \non{A''_1} \orN \non{A''_2},\Del }}  
        {  { \infer{\DerOSNeg{\Gam}{A'_1, \non{A''_1}, \non{A''_2},\Del }}
            {\DerOSNeg{\Gam}{A'_1, \non{A''_1},\Del}}
          }
          \quad
          {  \infer{\DerOSNeg{\Gam}{A'_2, \non{A''_1}, \non{A''_2},\Del }}
            {\DerOSNeg{\Gam}{A'_2, \non{A''_2},\Del }}
          }
        }
      }
    }
    \]
    We can complete the proof on the left-hand side by applying the induction hypothesis on $A_1$ and on the right-hand side by applying the induction hypothesis on $A_2$.

  \item $A=A_1 \orF A_2$

    By symmetry, using the previous case.

    %% Let $A'_1,A''_1 \in \bar\psi {(A_1)}$ , $A'_2,A''_2 \in \bar\psi {(A_2)}$ and $A'= A'_1 \orF^? A'_2$, $A''= A''_1 \orF^? A''_2$. 
    %% \[ \infer{\DerOSNeg{}{A',\non{A''}}}
    %% { \infer{ \DerOSNeg{}{A',\non{A''_1} \andN \non{A''_2}} }
    %%   { \Infer {\DerOSNeg{}{A'_1 \orN A'_2,\non{A''_1} \andN \non{A''_2}}}
    %%     {  { \infer{\DerOSNeg{}{A'_1,A'_2, \non{A''_1}} }
    %%         {\DerOSNeg{}{A'_1, \non{A''_1}}}
    %%       }
    %%       \quad
    %%       {  \infer{\DerOSNeg{}{A'_1,A'_2, \non{A''_2} }}
    %%         {\DerOSNeg{}{A'_2, \non{A''_2} }}
    %%       }
    %%     }
    %%   }
    %% }
    %% \]
    %% We can complete the proof on the left-hand side by applying the induction on $A_1$ and on the right-hand side by applying the induction hypothesis on $A_2$.

  \item $A= \FA x A_1$

    Let $A' = \FA x A'_1$ and $A'' = \FA x A''_1$.
    \[
    \infer{\DerOSNeg{\Gam} {\FA x A'_1, \EX x \non{A''_1},\Del}}
    { \infer[\mbox{Lemma~\ref{Lconn2}}]{\DerOSNeg{\Gam} {A'_1, \EX x \non{A''_1},\Del}}
      {  \infer[\mbox{Lemma~\ref{LKex}(2)}]{\DerOSNeg{\Gam} {A''_1, \EX x \non{A''_1},\Del}}   
        { \iinfer{\DerOSEx{\Gam,\FA x A''_1} {\bullet}{A''_1,\Del}}
          { \infer{\DerOSEx{}{\EX x A''_1} {A''_1}[\mathcal P']}
            {\DerOSEx{}{\non {A''_1}} {A''_1}[\mathcal P']}
          }
        }
        \quad
        {\DerOSNeg{\Gam} {A'_1, \non{A''_1},\Del}}
      }
    }
    \]
    We can complete the proof on the left-hand side by Lemma~\ref{con1} and the right-hand side by applying the induction hypothesis on $A_1$.
  \item $A= \EX x A_1$

    By symmetry, using the previous case.

    %% Let $A' = \EX x {A'_1}$ and $A'' = \EX x {A''_1}$.

    %% \[
    %% \infer{\DerOSNeg{} {\EX x {A'_1}, \FA x \non{A''_1}}}
    %% { \infer[\mbox{Lemma~\ref{Lconn2}}]{\DerOSNeg {} {\EX x {A'_1},  \non{A''_1}}}
    %%   {  \infer[\mbox{Lemma~\ref{LKex}(2)}]{\DerOSNeg{} {\non {A'_1}, \EX x {A'_1}}}   
    %%     { \iinfer{\DerOSEx{A'_1} {\bullet}{\EX x {A'_1}}}
    %%       { \Infer{\DerOSEx{}{A'_1} {\EX x {A'_1}}}
    %%         {\DerOSEx{}{\non {A'_1}} {A'_1}}
    %%       }
    %%     }
    %%     \quad
    %%     {\DerOSNeg{} {A'_1, \non{A''_1}}}
    %%   }
    %% }
    %% \]
    %% We can complete the proof on the left-hand side by Lemma~\ref{con1} and the right-hand side by applying the induction hypothesis on $A_1$.

  \item $A= \neg A_1$

    Let  $A',A'' \in \bar\psi{(\neg A_1)}$.\\
    Let  $A'= \non {A'_1}$ with $A'_1 \in \bar\psi{(A_1)}$
    and $A''= \non {A''_1}$ with $A''_1 \in \bar\psi{(A_1)}$.

    The induction hypothesis on $A_1$ we get: ${\DerNegLK{\Gam}{A',\non{A''},\Del}}$ and we are done.
  \end{enumerate}
\end{proof}

\begin{definition}[Theory restricting]
A polarisation set \Index[theory restricting]{does not restrict} the theory $\mathcal T$ if for all sets $\mathcal B$ of literals that are semantically inconsistent (\ie $\mathcal B\models_{\mathcal T}$), there is a subset $\mathcal B'\subseteq\mathcal B$ that is already semantically inconsistent and such that at most one literal of $\mathcal B'$ is $\mathcal P$-negative.
\end{definition}

\begin{remark}
The empty polarisation set restricts no theories.
\end{remark}

\begin{theorem}[Completeness of \LKThp]\strut
  \label{complete}
  Assume $\mathcal P$ does not restrict $\mathcal T$ and ${\Delta}\models_{\mathcal T}{A}$.

  Then for all $A' \in \bar\psi(A)$ and $\Del' \in \bar\psi(\Del)$, we have $\DerNegLK{}{A',\non{\Del'}}{}$, provided that sequent is safe.
\end{theorem}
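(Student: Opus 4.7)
The plan is to reduce the statement to provability in classical first-order logic without theory, and then to absorb the resulting proof into \LKThp\ via the admissible cut rules. Unfolding $\Delta\models_{\mathcal T}A$ as $\Psi_{\mathcal T},\Delta\models A$ and invoking compactness and completeness of classical first-order logic, one obtains a cut-free LK-derivation of $L_1,\ldots,L_n,\Delta\vdash A$ from finitely many theory lemmas $L_1,\ldots,L_n\in\Psi_{\mathcal T}$. The argument then proceeds in three stages: (a) prove each chosen $L_i'\in\bar\psi(L_i)$ directly in \LKThp; (b) simulate the LK-derivation to obtain a proof of $\DerNegLK{}{A',\non L_1',\ldots,\non L_n',\non\Delta'}$; and (c) iteratively apply $\cut_7$ against the proofs from (a) to eliminate every $\non L_i'$, leaving the desired $\DerNegLK{}{A',\non\Delta'}$. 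Safety of every intermediate sequent is automatic: safety is preserved by backward proof-search and is monotonic in the left-hand context, and the target sequent is safe by hypothesis.

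For stage (a), given $L_i=l_1\vee\cdots\vee l_k$ with $\{\non l_1,\ldots,\non l_k\}\models_{\mathcal T}$, I invoke the hypothesis that $\mathcal P$ does not restrict $\mathcal T$ to extract an inconsistent subset $\mathcal B'\subseteq\{\non l_1,\ldots,\non l_k\}$ containing at most one $\mathcal P$-negative literal. Using Corollary~\ref{conF} I may take $L_i'=l_1\orN\cdots\orN l_k$, decompose with $(\orN)$ down to $\DerNegLK{}{l_1,\ldots,l_k}$, and $(\Store)$ each $l_j$, reaching $\DerNegLK{\non l_1,\ldots,\non l_k}{}{}{\mathcal P'}$ in which every $\non l_j$ that was not $\mathcal P$-negative has become $\mathcal P'$-positive. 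If $\mathcal B'$ contains no $\mathcal P$-negative literal, then $\mathcal B'\subseteq\atmCtxtP{\mathcal P'}{\Gamma}$ and $(\Init[2])$ closes the branch. Otherwise $\mathcal B'$ contains a unique $\mathcal P$-negative $\non l_{j_0}$, and I apply $(\Select)$ focused on $l_{j_0}$ (which stays $\mathcal P'$-positive) followed by $(\Init[1])$, whose side-condition $\atmCtxtP{\mathcal P'}{\Gamma},\non l_{j_0}\models_{\mathcal T}$ follows from $\mathcal B'\models_{\mathcal T}$ and the weakening property of semantical inconsistency.

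Stage (b) is a structural induction on the cut-free LK-derivation: the identity axiom $C\vdash C$ is precisely Lemma~\ref{Fcomp}; the propositional and quantifier rules translate directly to their \LKThp\ counterparts, with Corollary~\ref{conF} available to adjust internal connective polarities to match the externally fixed polarisations of $A'$ and $\Delta'$; left-weakening, left-contraction, and right-contraction are all (height-preserving) admissible; and LK's cut rule is exactly $\cut_7$. Stage (c) is then a routine iteration of $\cut_7$, using the admissible right-weakening to expose each $L_i'$ on the right of the corresponding left premiss. The main obstacle is really stage (a): threading the ``does not restrict'' hypothesis is genuinely necessary because a naive $(\Init[2])$-closure fails in the presence of $\mathcal P$-positive literals, whose negations lie in $\non{\mathcal P}$ and therefore never enter $\atmCtxtP{\mathcal P'}{\Gamma}$ after storing; the $(\Select)/(\Init[1])$ manoeuvre to discharge the single tolerated $\mathcal P$-negative literal is indispensable.
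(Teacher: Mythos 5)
Your proposal is correct in substance but takes a genuinely different route from the paper. The paper does not pass through a cut-free sequent-calculus derivation and compactness: it characterises $\Delta\models_{\mathcal T}A$ directly as derivability of $\TLemma,\Delta\vdash A$ in a \emph{natural deduction} system and performs a single induction on that derivation, carrying the theory lemmas in the context throughout. The theory-lemma case is then just the Axiom case (and is handled exactly as your stage (a), with the same $(\Store)$/$(\Select)$/$(\Init[1])$ manoeuvre driven by the ``does not restrict'' hypothesis), the identity case is Lemma~\ref{Fcomp}, and the elimination rules are absorbed by $\cut_7$ --- so your stages (b) and (c) are merged into the one induction. Your decomposition buys a cleaner separation of concerns, and by working in one-sided \LK\ it avoids the natural-deduction elimination rules, in particular the $\forall$-elimination case for which the paper must invoke the instantiation Lemma~\ref{Ladm}; the paper's route avoids the compactness/finiteness bookkeeping and keeps the whole argument in a single induction.

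Two spots need more care. First, in stage (a) the claim that after the $(\Store)$ steps ``every $\non{l_j}$ that was not $\mathcal P$-negative has become $\mathcal P'$-positive'' fails when the stored literals are syntactically inconsistent: if two unpolarised literals of $\mathcal B'$ are each other's negations, only the first one stored is polarised positively and the other becomes $\mathcal P'$-negative, so $\mathcal B'$ may end up with more than one $\mathcal P'$-negative element. The paper dispatches this branch separately with $\Id[2]$ before reasoning about $\mathcal P'$; you need the same case split. Second, ``the quantifier rules translate directly'' is too quick for $\exists$-right: $\EX$ has no negative variant, so Corollary~\ref{conF} cannot demote it to an invertible asynchronous rule, and passing from $\DerNegLK{}{\subst{A'}xt,\Gamma'}$ to $\DerNegLK{}{\EX x{A'},\Gamma'}$ requires a $\cut_7$ against the identity expansion $\DerNegLK{}{\EX x{A'},\non{(\subst{A'}xt)}}$, which the paper obtains via Lemma~\ref{Lconn2}, the \LKThEx\ encoding (Lemma~\ref{LKex}) and Lemma~\ref{con1}. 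Both repairs use only lemmas already established in the paper, so your argument goes through once they are added.
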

\begin{proof}
  We prove a slightly more general statement:\\
  for all $A' \in \bar\psi(A)$ and all multiset $\Del'$ of formulae that contain an element of $\psi(\Del)$ as a sub-multiset, we have $\DerNegLK{}{A',\non{\Del'}}{}$, provided that sequent is safe.

  We caracterise ${\Delta}\models_{\mathcal T}{A}$ by the derivability of the sequent $\DerOSFl{\TLemma ,\Del}{A}$ in a standard natural deduction system for first-order classical logic. We write $\DerOSFOL {\TLemma ,\Delta} {A}$ for this derivability property.

  For any formula $A$, the notation $A'$ will systematically designate an element of $\psi(A)$.% and for any multiset of formulae $\Del$, the notation $\Del'$ will systematically designate a multiset of formulae that contains an element of $\psi(\Del)$ as a sub-multiset.

  The proof is by induction of $\DerOSFOL {\TLemma ,\Delta} {A}$, and case analysis on the last rule: \\
  \begin{itemize}
  \item Axiom:
    $$\infer[A \in \TLemma ,\Del]{\DerOSFl{\TLemma ,\Del}{A}}{}$$
    By case analysis:
    \begin{itemize}
    \item  If $A \in \Del$ then we prove ${\DerOSNeg{}{A',\non{\Del'}}}$
      with $A', A'' \in \bar\psi{(A)}$ and $A''\in\Delta'$, using Lemma~\ref{Fcomp}.

    \item If $A \in \TLemma $ then $A$ is  of the form $l_1 \vee \cdots \vee l_n$ with ${\Theory{\psi\non{(l_1)}, \ldots, \psi\non{(l_n)}}}$.  

      Let $\{\psi\non{(l'_1)},\ldots,\psi\non{(l'_m)}\}$ be a subset of $\{\psi\non{(l_1)}, \ldots, \psi\non{(l_n)}\}$ that is already semantically inconsistent and such that at most one literal is $\mathcal P$-negative, say possibly $\psi\non{(l'_m)}$.

      Let $C' \in \bar\psi{(A)}$. $C'$ is of the form ${\psi{(l_1)} \orF^? \cdots \orF^? \psi{(l_n)}}$.

      We build
      \[ 
      \infer{\DerOSNeg{}{\non {\Del'}, C'}}{
        \Infer{\DerOSNeg{}{\non {\Del'},\psi{(l_1)} \orN \cdots \orN \psi{(l_n)}}}{
          \Infer{\DerOSNeg{}{\non {\Del'},\psi{(l_1)}, \ldots, \psi{(l_n)}}}{
            \infer{\DerOSNeg{}{\psi{(l'_1)}, \ldots, \psi{(l'_m)}}}{
              \DerOSNeg[\mathcal P']{
                \psi\non{(l'_1)},\ldots,\psi\non{(l'_m)}
              }{}
            }  
          }
        }
      }
      \]
      where $\mathcal P'\eqdef
      \polar[{
          \polar[{
              \polar{
                \non{\psi(l'_1)}
              }
          }]{
            \ldots
          }
      }]{
        \non{\psi(l'_m)}
      }
      $.

      If $\psi\non{(l'_1)},\ldots,\psi\non{(l'_m)}$ is syntactically inconsistent, we close with $\Id[2]$.

      Otherwise 
      \[\polar[{
          \polar[{
              \polar{
                \non{\psi(l'_1)}
              }
          }]{
            \ldots
          }
      }]{
        \non{\psi(l'_{m-1})}
      }
      =
      \mathcal P,\non{\psi(l'_1)},\ldots,\non{\psi(l'_{m-1})}
      \]
      as none of the $\non{\psi(l'_i)}$, for $1\leq i\leq m-1$, is $\mathcal P$-negative.
      And for all $i$ such that $1\leq i\leq m-1$, the literal 
      $\non{\psi(l'_i)}$ is $\mathcal P'$-positive.

      Now if 
      $\non{\psi(l'_m)}$ is $\mathcal P'$-positive as well, we have 
      \[
      \atmCtxt[\mathcal P']{\psi\non{(l'_1)}, \ldots, \psi\non{(l'_m)}} = \psi\non{(l'_1)}, \ldots, \psi\non{(l'_m)}
      \]
      and we can close with $(\Init[2])$.

      If $\non{\psi{(l'_m)}}$ is not $\mathcal P'$-positive, we simply have
      \[
      \atmCtxt[\mathcal P']{\psi\non{(l'_1)}, \ldots, \psi\non{(l'_m)}} = \psi\non{(l'_1)}, \ldots, \psi\non{(l'_{m-1})}
      \]
      but we can still build
      \[
      \infer{
        \DerOSNeg[\mathcal P']{\psi\non{(l'_1)},\ldots,\psi\non{(l'_m)}}{}
      }{
        \infer[{[({\Init[1]})]}]{
          \DerOSPos[\mathcal P']{\psi\non{(l'_1)},\ldots,\psi\non{(l'_m)}}{\psi{(l'_m)}}
        }{
          \Theory{\psi\non{(l'_1)}, \ldots, \psi\non{(l'_m)}}}
      }
      \]
    \end{itemize}

  \item And Intro:
    \[\infer{\DerOSFl{\TLemma ,\Delta}{A_1 \andF A_2}} { \DerOSFl{\TLemma ,\Delta}{A_1} \quad \DerOSFl{\TLemma ,\Delta}{A_2}   } \]

    $A' \in \bar\psi(A_1 \andF A_2)$ is of the form $A'_1 \andF^? A_2'$ with $A'_1 \in \psi{(A_1)}$ and $A_2' \in \psi{(A_2)}$.

    Since $\DerOSNeg{}{A'_1 \andF^? A'_2, \non{\Del'}}$ is assumed to be safe, 
    $\DerOSNeg{}{A'_1,\non{\Del'}}$ and $\DerOSNeg{}{A'_2,\non{\Del'}}$ are also safe, and we can apply the induction hypothesis
    \begin{itemize}
    \item on $\DerOSFOL{\TLemma ,\Delta}{A_1 }$ to get $\DerNegLK{}{A'_1,\non{\Del'}}$
    \item and on $\DerOSFOL{\TLemma ,\Delta}{A_2}$ to get $\DerNegLK{}{A'_2,\non{\Del'}}$.
    \end{itemize}
    We build:
    \[\infer{\DerOSNeg{}{A'_1 \andF^? A'_2, \non{\Del'}}}
    {
      \infer{\DerOSNeg{}{A_1' \andN A_2', \non{\Del'}}} { \DerOSNeg{}{A_1', \non{\Del'} } \quad \DerOSNeg{}{A_2',\non{\Del'} }   }
    } 
    \]

  \item And Elim
      \[\infer{\DerOSFl{\TLemma ,\Delta}{A_i}} { \DerOSFl{\TLemma ,\Delta}{A_1 \andF A_{-1}}}\]
      with $i\in\{1,-1\}$.

      Since $\bar\psi(A_{-i}) \neq \emptyset$, let $A_{-i}' \in \bar\psi{(A_{-i})}$ and $C'= A_1' \andN A_{-1}'$ ($C' \in\bar\psi(A_1 \andF A_{-1})$). 

      Since $\DerOSNeg{}{A_i', \non{\Del'}}$ is assumed to be safe, 
      $\DerOSNeg{}{C',A_i',\non{\Del'}}$ is also safe, and we can apply the induction hypothesis on $\DerOSFl{\TLemma ,\Delta}{A_1 \andF A_{-1}}$ (with $\non{A_i'},\Del'$ and $C'$) to get $\DerNegLK{}{C',A_i',\non {\Del'}}$.

      We finally get:
      \[
      \infer[{{\contr[r]}}]{
        \DerOSNeg{}{A_i', \non{\Del'}}
      }{
        \iinfer[\mbox{Lemma~\ref{lem:invert}}]{
          \DerOSNeg{}{A_i',A_i',\non{\Del'}}
        }{
          \DerOSNeg{}{C',A_i',\non{\Del'}}
        }
      }
      \]

  \item Or Intro
    \[\infer{\DerOSFl{\TLemma ,\Delta}{A_1 \orF A_{-1}}} { \DerOSFl{\TLemma ,\Delta}{A_i}}\]
    $A' \in \bar\psi(A_1 \orF A_{-1})$ is of the form $A'_1 \orF^? A_{-1}'$ with $A'_1 \in \psi{(A_1)}$ and $A_{-1}' \in \psi{(A_{-1})}$.

    Since $\DerOSNeg{}{A'_1 \orF^? A'_{-1}, \non{\Del'}}$ is assumed to be safe, 
    $\DerOSNeg{}{A'_1,A'_{-1},\non{\Del'}}$ is also safe, and we can apply the induction hypothesis on $\DerOSFOL{\TLemma ,\Delta}{A_i}$  (with $\non{A'_{-i}},\Delta'$ and $A'_i$) to get $\DerNegLK{}{A'_1,A'_{-1},\non{\Del'}}$ and we build: 
    \[\infer{\DerOSNeg{}{A'_1 \orF^? A'_{-1}, \non{\Del'}}}
    {
      \infer{\DerOSNeg{}{A'_1 \orN A'_{-1}, \non{\Del'}}} 
      {
        \DerOSNeg{}{A'_1 , A'_{-1}, \non{\Del'}}
      }
    }
    \]

  \item Or Elim
    \[\infer{\DerOSFl{\TLemma ,\Delta}{C}} 
    { \DerOSFl{\TLemma ,\Delta}{A_1 \orF A_2} \quad \DerOSFl{\TLemma ,\Delta,A_1}{C} \quad \DerOSFl{\TLemma ,\Delta,A_2}{C}}
    \]    
    Let $D'= A'_1 \orN A'_2 $ with $A'_1 \in \psi{(A_1)}$ and $A'_2 \in \psi{(A_2)}$.
    
    Since $\DerOSNeg{}{C', \non{\Del'}}$ is assumed to be safe, 
    $\DerOSNeg{}{C',\non{A'_1},\non{\Del'}}$ and $\DerOSNeg{}{C',\non{A'_2},\non{\Del'}}$ and $\DerOSNeg{}{C',D',\non{\Del'}}$ are also safe, and we can apply the induction hypothesis
    \begin{itemize}
    \item on $\DerOSFOL{\TLemma ,\Delta,A_1}{C}$ to get $\DerNegLK{}{C',\non{A'_1},\non{\Del'}}$
    \item on $\DerOSFOL{\TLemma ,\Delta,A_2}{C}$ to get $\DerNegLK{}{C',\non{A'_2},\non{\Del'}}$.
    \item and on $\DerOSFOL{\TLemma ,\Delta}{A_1 \orF A_2}$ to get $\DerNegLK{}{C',D',\non{\Del'}}$.
    \end{itemize}
    We build:
    \[
    \infer[\cut_7]{\DerOSNeg{}{C',\non {\Del'}}}{
      \DerOSNeg{}{D',C',\non {\Del'}}
      \quad
      \iinfer{\DerOSNeg{}{\non{(A'_1 \orN A'_2)},C',\non {\Del'}}}{
        \infer{\DerOSNeg{}{\non{A'_1} \andP \non{A'_2},C',\non {\Del'}}}{
          \infer{\DerOSNeg{}{\non{A'_1} \andN \non{A'_2},C',\non {\Del'}}}{
            \DerOSNeg{}{\non{A'_1},C',\non {\Del'}}
            \quad 
            \DerOSNeg{}{\non{A'_2} ,C',\non {\Del'}}
          }
        }
      }
    }
    \]

  \item Universal quantifier Intro
    \[\infer[x \not\in \Gamma]{\DerOSFl{\TLemma ,\Delta}{\FA x A}} { \DerOSFl{\TLemma ,\Delta}{A}} \]
    $C'\in\bar \psi(\FA x A)$ is of the form $\FA x A'$ with $A' \in \bar\psi{(A)}$.

    Since $\DerOSNeg{}{C', \non{\Del'}}$ is assumed to be safe, 
    $\DerOSNeg{}{A',\non{\Del'}}$ is also safe, and we can apply the induction hypothesis on $\DerOSFOL{\TLemma,\Delta}{A}$ to get $\DerNegLK{}{A',\non{\Del'}}$ to get: 
    \[\infer{\DerOSNeg{}{\FA x A', \non{\Del'}}} { \DerOSNeg{}{A', \non{\Del'} }}\]

  \item Universal quantifier Elim
    \[\infer{\DerOSFl{\TLemma ,\Delta}{\subst A x t}} { \DerOSFl{\TLemma ,\Delta}{\FA x A}}\]
    $C'\in \bar\psi(\subst {A} x t)$ is of the form $\subst {A'} x t$ with $A'\in \psi{(A)}$ (by Remark~\ref{rm2}).

    Since $\DerOSNeg{}{C',\non{\Del'}}$ is assumed to be safe, 
    $\DerOSNeg{}{(\FA x {A'}),C',\non{\Del'}}$ is also safe, and we can apply the induction hypothesis on $\DerOSFOL{\TLemma,\Delta}{\FA x A}$ (with $\non{C'},\Delta'$ and $(\FA x {A'})$) to get $\DerNegLK{}{(\FA x {A'}),C',\non{\Del'}}$. 

    We build 
    \[
    \infer[{{\contr[r]}}]{\DerOSNeg{}{\subst {A'} x t, \non{\Del'}}}{
      \infer[\mbox{Lemma~\ref{Ladm}}]{\DerOSNeg{}{\subst {A'} x t, \subst {A'} x t,\non{\Del'}}}{
        \iinfer[\mbox{Lemma~\ref{lem:invert}}]{\DerOSNeg{}{A',\subst {A'} x t, \non{\Del'}}}{
          \DerOSNeg{}{(\FA x A'),\subst {A'} x t, \non{\Del'}}
        }
      }
    }
    \]

  \item Existential quantifier Intro
    \[\infer{\DerOSFl{\TLemma ,\Delta}{\EX x A}} { \DerOSFl{\TLemma ,\Delta}{\subst A x t}}\]
    $C' \in \bar\psi{(\EX x A)}$ is of the form $\EX  x A'$ with $A' \in \bar\psi{(A)}$. 

    Let $A'_t = {\subst  {A'} x t}$ ($A'_t \in \bar\psi{(\subst A x t)}$ by Remark~\ref{rm2}).

    Since $\DerOSNeg{}{C',\non{\Del'}}$ is assumed to be safe, 
    $\DerOSNeg{}{A'_t,\non{\Del'}}$ is also safe, and we can apply the induction hypothesis on $\DerOSFOL{\TLemma,\Delta}{\subst A x t}$ to get $\DerNegLK{}{A'_t,\non{\Del'}}$. 

    By Lemma~\ref{Lconn2} it suffices to prove $\DerNegLK{}{\EX x A', \non{A'_t}} $ in order to get $\DerNegLK{}{C',\non{\Del'}}$:
    \[\infer[\mbox{Lemma~\ref{LKex}(2)}]{\DerOSNeg{}{\EX x A', \non{A'_t}} }
    { \infer{\DerOSEx {\FA x \non{A'} }{\bullet}{\non {A'_t} }  }
      {  \infer{\DerOSEx{} {\EX x A'} {\non {A'_t} }  }
        {\DerOSEx {} {A'_t} {\non {A'_t}} } 
      }
    }
    \]
    We can complete the proof by applying Lemma~\ref{con1}.

  \item Existential quantifier Elim
    \[\infer[x\not\in \Gamma,B]{\DerOSFl{\TLemma ,\Delta}{B}} { \DerOSFl{\TLemma ,\Delta}{\EX x A} \quad \DerOSFl{\Gamma,\Delta,A}{B}}\]
    Let $C' = \EX x A'$ with $A' \in \bar\psi{(A)}$.

    Since $\DerOSNeg{}{B',\non{\Del'}}$ is assumed to be safe, 
    $\DerOSNeg{}{B',C',\non{\Del'}}$ and $\DerOSNeg{}{B',\non{A'},\non{\Del'}}$ are also safe, and we can apply the induction hypothesis
    \begin{itemize}
    \item on $\DerOSFOL{\TLemma ,\Delta}{\EX x A}$ to get $\DerNegLK{}{B',C',\non{\Del'}}$;
    \item on $\DerOSFOL{\Gamma,\Delta,A}{B}$ to get $\DerNegLK{}{B',\non{A'},\non{\Del'}}$. 
    \end{itemize}
    We build
    \[
    \infer[\cut_7]{\DerOSNeg{}{B', \non {\Del'}}}
    { {\DerOSNeg{}{C',B', \non {\Del'}}}
      \quad
      { \iinfer{\DerOSNeg{}{\non{C'},B', \non {\Del'}}}
        {\infer{\DerOSNeg{}{\FA x {(\non{A'})},B', \non {\Del'}}}
          {\DerOSNeg{}{ \non{A'},B', \non {\Del'}}}
        }
      }
    }
    \]

  \item Negation Intro
    \[\infer{\DerOSFl{\TLemma ,\Delta}{\neg A}} { \DerOSFl{\TLemma ,\Delta,A}{B \andF \neg B}}\]
    
    If $C' \in \bar\psi{(\neg A)}$ then $\non {C'} \in \bar\psi{(A)}$. 
    Let $D' = D'_1 \andN  D'_2$ with $D'_1\in \bar\psi{(B)}$ and $D'_2\in \bar\psi{(\neg B)}$.
    Therefore $\non{D'_2}\in \bar\psi{(B)}$, $D' \in \bar\psi{(B \andF \neg B)}$ and $\Del', \non {C'} \in \bar\psi {(\Del, A)}$.

    Since $\DerOSNeg{}{\non {\Del'}, C'}$ is assumed to be safe, 
    $\DerOSNeg{}{\non {\Del'}, C', D'}$ is also safe, and we can apply the induction hypothesis on 
    $\DerOSFOL{\TLemma ,\Delta,A}{B \andF \neg B}$ to get $\DerNegLK{}{\non {\Del'}, C', D'}$.
    We build
    \[
    \infer[\cut_7]{\DerOSNeg{}{\non {\Del'}, C'}}{
      \DerOSNeg{}{\non {\Del'}, C',D'}
      \quad
      \infer[\mbox{Corollary~\ref{conF}(4)}]{\DerOSNeg{}{\non {\Del'}, C',\non {D'}}}{
        \infer{\DerOSNeg{}{\non {\Del'}, C',\non{D'_1} \orN \non{D'_2} }}{
          \infer[\mbox{Lemma~\ref{Fcomp}}]{
            \DerOSNeg{}{\non {\Del'}, C',\non{D'_1}, \non{D'_2} }
          }{}
        }
      }
    }
    \]

  \item Negation Elimination
    \[\infer{\DerOSFl{\TLemma ,\Delta}{A}} { \DerOSFl{\TLemma ,\Delta}{\neg\neg A}}\]

    $A' \in  \bar\psi{(A)}$ is such that $A'\in\bar\psi{(\neg\neg A)}$.

    The induction hypothesis on $\DerOSFl{\TLemma,\Delta}{\neg\neg A}$ gives $\DerOSNeg{}{\non{\Del'},A'}$
    and we are done.
    
  \end{itemize}
\end{proof}

\section{The system used for simulation of \DPLLTh}
\label{sec:DPLLTh}

The motivation for the \LKThp\ system was to perform proof-search
modulo theories, and in particular simulate \DPLLTh\ techniques. 
Therefore, we conclude this report with the actual system that we use in other works~\cite{farooqueTR12,farooque13} to perform the simulation:

It is the \LKThp\ system, extended with the admissible and invertible rules $(\Pol)$ and $(\cut_7)$ (or more precisely restricted versions of them), as shown in Fig~\ref{fig:4DPLL}.

\begin{figure}[!h]
  \[
  \begin{array}{|c|}
    \upline
    \textsf{Synchronous rules}
    \hfill\strut\\[3pt]
    \infer[{[(\andP)]}]{\DerPos{\Gamma}{A\andP B}{}{\mathcal{P}}}
    {\DerPos{\Gamma}{A}
      {}{\mathcal P} \qquad \DerPos{\Gamma}{B}{}{\mathcal{P}}}
    \qquad
    \infer[{[(\orP)]}]{\DerPos{\Gamma}{A_1\orP A_2}{}{\mathcal{P}}}
    {\DerPos{\Gamma}{A_i}{}{\mathcal{P}}}
       \qquad 
    \infer[{[(\EX)]}]{\DerPos{\Gamma}{\EX x A}{}{\mathcal{P}}}
    {\DerPos{\Gamma}{\subst A x t}{}{\mathcal{P}}}   
    \\[15pt]
      \infer[{[{(\top^+)}]}]{\DerPos{\Gamma}{\top^+} {} {\mathcal{P}}} {\strut}
       \qquad 
\infer[{[({\Init[1]})]l \mbox{ is $\mathcal P$-positive}}]
	{\DerPos{\Gamma  }{l}{} {\mathcal{P}}} {\atmCtxtP{\mathcal P}\Gam,\non l\models_{\mathcal T}}    \qquad
    \infer[{[(\Release)]N \mbox{ is not $\mathcal P$-positive}}]{\DerPos {\Gam} {N} {} {\mathcal{P}}}
    {\DerNeg {\Gam} {N} {} {\mathcal{P}}}\\
    \midline
    \textsf{Asynchronous rules}
    \hfill\strut\\[3pt]
    \infer[{[(\andN)]}]{\DerNeg{\Gamma}{A\andN B,\Delta} {} {\mathcal{P}}}
    {\DerNeg{\Gamma}{A,\Delta} {} {\mathcal{P}} 
      \qquad \DerNeg{\Gamma}{B,\Delta} {} {\mathcal{P}}}
    \qquad
    \infer[{[(\orN)]}]{\DerNeg {\Gamma} {A_1\orN A_2,\Delta} {} {\mathcal{P}}}
    {\DerNeg {\Gamma} {A_1,A_2,\Delta} {} {\mathcal{P}}}
    
   \qquad
       \infer[{[(\FA)] x\notin\FV{\Gam,\Delta,\mathcal P} }]
       {\DerNeg{\Gamma}{(\FA x A),\Delta}{} {\mathcal{P}}}
    {\DerNeg {\Gamma} {A,\Delta}{} {\mathcal{P}}}
\\\\
    \infer[{[{(\bot^-)}]}]{\DerNeg{\Gamma} {\Del,\bot^-} {} {\mathcal{P}}}
    {\DerNeg{\Gamma} {\Del} {} {\mathcal{P}}}
     \qquad
    \infer[{[{(\top^-)}]}]{\DerNeg{\Gamma} {\Del,\top^-} {} {\mathcal{P}}}{\strut}
     \qquad
    \infer[{[({\Store})]%
      \begin{array}l% 
        A\mbox{ is a literal}\\
        \mbox{or is $\mathcal P$-positive}\\
      \end{array}}]{\DerNeg \Gam {A,\Del} {} {\mathcal{P}}} 
    {\DerNeg {\Gam,\non A} {\Del} {} {\polar{\non A}}}  
    \\\midline
      \textsf{Structural rules}
    \hfill\strut\\[3pt]
    \infer[{[(\Select)]\mbox{$P$ is not $\mathcal P$-negative}}]
    {\DerNeg {\Gam,\non P} {}{} {\mathcal{P}}} 
    {\DerPos {\Gam,\non P} {P} {} {\mathcal{P}}}
    \qquad
    \infer[{[({\Init[2]})]}]{\DerNeg {\Gam} {}{} {\mathcal{P}}}{\atmCtxtP{\mathcal P}\Gam\models_{\mathcal T}}
    \\\midline
      \textsf{Admissible/Invertible rules}
    \hfill\strut\\[3pt]
      \infer[{[(\Pol)]\atmCtxtP{\mathcal {P}}{\Gamma},\non l\models_{\mathcal T}}]
            {\DerNeg{\Gamma}{}{}{\mathcal P}}
            {\DerNeg{\Gamma}{}{}{\mathcal {P},l}}
            \qquad
    \infer[\cut_7] {\DerNeg {\Gamma}{}{}{\mathcal P}}
    {\DerNeg{\Gamma}{l}{}{\mathcal P} \quad 
    \DerNeg{\Gamma}{\non l}{}{\mathcal P}}
    \downline
  \end{array}
  \]
  \begin{tabular}{lll}%
    where& $\polar A \eqdef \mathcal P,A$& if $A\in\UP$\\
    &$\polar A \eqdef \mathcal P$& if not
  \end{tabular}%

  \caption{System for the simulation of \DPLLTh}
  \label{fig:4DPLL}
\end{figure}

\clearpage

\bibliographystyle{Common/good}
\bibliography{Common/abbrev-short,Common/Main,Common/crossrefs}

\end{document}